\keywords{dynamic logic, coalgebra, many-valued logic, bisimulation safety, completeness}
\newcommand{\alg}[1]{\mathbf{#1}}
\newcommand{\func}[1]{\mathsf{#1}}
\newcommand{\Coalg}[1]{\mathsf{Coalg}({#1})}
\newcommand{\sem}[1]{[\![{#1}]\!]}
\newcommand{\diam}[1]{\langle {#1} \rangle}
\newcommand{\boxm}[1]{[ {#1} ]}
\renewcommand{\phi}{\varphi}
\newcommand{{\twocontra}}{2^{\mathrm{(-)}}}
\newcommand{{\Acontra}}{A^{\mathrm{(-)}}}
\newcommand{\lucas}{\text{\bf\L}}
\newcommand{{\FLew}}{\mathsf{FL}_\mathrm{ew}}
\newcommand{\LT}{\mathcal{L}_{\func{F},\Pred}}
\newcommand{\LTDyn}{\mathcal{L}_{\func{F}, \Pred}^\mathsf{Dyn}}
\newcommand{\Set}{\mathsf{Set}}
\newcommand{\car}{\mathsf{U}}
\newcommand{\id}{\mathsf{id}}
\newcommand{\arity}{\mathsf{ar}}
\newcommand{\ev}{\mathsf{ev}}
\newcommand{\test}{\mathsf{test}}
\newcommand{\TeOp}{\mathsf{Te}}
\newcommand{\CoOp}{\mathsf{Op}}
\newcommand{\Pred}{\Lambda}
\newcommand{\Act}{\mathsf{Act}}
\newcommand{\AtAct}{\mathsf{At}}
\newcommand{\Prop}{\mathsf{Prop}}
\newcommand{\Formu}{\mathsf{Form}}
\newcommand{\nkTempl}{\mathsf{Templ}_{n,k}}
\newcommand{\sign}{\mathsf{sign}}
\newcommand{\ob}{\mathrm{ob}}
\newcommand{\pw}{\mathrm{pw}}
\newcommand{\dual}{(\cdot)^\partial}
\newcommand{\PA}{\mathcal{P}_\alg{A}}
\newcommand{\kNA}{k\mathcal{N}_\alg{A}}
\newcommand{\M}{\mathcal{M}} 
\newcommand{\NA}{\mathcal{N}_\alg{A}}
\newcommand{\MA}{\mathcal{M}_\alg{A}}
\newcommand{\NMH}{\mathcal{H}} 
\begin{document}

\title[Many-valued coalgebraic dynamic logics]{Many-valued coalgebraic dynamic logics:\\ Safety and strong completeness via reducibility}
\thanks{The second author received financial support from the European Union under the project no. {CZ.02.01.01/00/23\_025/0008724} via the Operational Programme Jan Amos Komenský.}	
\author[H.H.~Hansen]{Helle Hvid Hansen\lmcsorcid{0000-0001-7061-1219}}[a]
\author[W.~Poiger]{Wolfgang Poiger\lmcsorcid{0009-0002-8485-5905}}[b]
\address{Bernoulli Institute for Mathematics, Computer Science and Artificial Intelligence, University of Groningen, Netherlands}
\email{h.h.hansen@rug.nl}  
\address{Institute of Computer Science of the Czech Academy of Sciences, Prague, Czech Republic}	
\email{poiger@cs.cas.cz}  

\begin{abstract}
We present a coalgebraic framework for studying generalisations of dynamic modal logics such as PDL and game logic in which both the propositions and the semantic structures can take values in an  algebra $\alg{A}$ of truth-degrees.  
More precisely, we work with coalgebraic modal logic via $\alg{A}$-valued predicate liftings where $\alg{A}$ is a $\FLew$-algebra, and interpret actions (abstracting programs and games) as $\func{F}$-coalgebras where the functor $\func{F}$ represents some type of $\alg{A}$-weighted system. We also allow combinations of crisp propositions with $\alg{A}$-weighted systems and vice versa.
We introduce coalgebra operations and tests, with a focus on operations that are reducible in the sense that modalities for composed actions can be reduced to compositions of modalities for the constituent actions.
We prove that reducible operations are safe for bisimulation and behavioural equivalence, and prove a general strong completeness result, from which we obtain new strong completeness results for $\alg{2}$-valued iteration-free PDL with $\alg{A}$-valued accessibility relations when $\alg{A}$ is a finite chain, and for many-valued iteration-free game logic with many-valued strategies based on finite \L ukasiewicz logic.       
\end{abstract}
\maketitle
\section*{Introduction}\label{section:Introduction}
Propositional dynamic logic (PDL) \cite{FischerLadner79,HarelKozTiu00:dyna-book} is a well-known modal logic for reasoning about non-deterministic programs. In PDL, programs are an explicit part of the syntax, and PDL thus combines modal reasoning with a programming language. 
Parikh's game logic \cite{Parikh83,PaulyParikh03} can be seen as a generalisation of PDL from programs to 2-player games, which allows for reasoning about program correctness in a distributed setting where the environment is viewed as an opponent. A variant of game logic has been applied to reason about hybrid systems \cite{Platzer15:diffGL} and neural networks \cite{TeuberMP24:diffGL-NN-verif}. 
Both PDL and game logic combine modal reasoning with an algebra of operations on their semantic structures.
This view was abstracted into a coalgebraic framework, called \emph{coalgebraic dynamic logic}~\cite{HansenKupkeLeal2014,HansenKupke2015},
which extends the approach to coalgebraic modal logic via predicate liftings. 
Coalgebraic modal logic \cite{Pattinson2003,KupkePattinson2011} allows for a uniform treatment of modal logics
over a wide range of state-based structures including Kripke models, neighbourhood models, multigraphs, weighted automata and probabilistic transition systems by modelling these structures as $\func{F}$-coalgebras for some $\Set$-endofunctor $\func{F}$.

In this paper, we generalise the coalgebraic dynamic framework and results from \cite{HansenKupkeLeal2014} to the many-valued setting.
Many-valued logic, where Boolean truth-values are replaced by an algebra $\alg{A}$ of truth-degrees, is more suitable than classical logic when reasoning about uncertainty, vagueness, and fuzzy systems, where propositions cannot be determined as being either true or false. 
There are numerous approaches to many-valued modal logic, 
each with their own motivations \cite{Fitting1991,Fitting1992,CaicedoRodriguez2010,HansoulTeheux2013}.
These approaches can generally be parametrised in the algebra $\alg{A}$ (which determines the propositional base)
and whether the semantics is given by crisp relations or $\alg{A}$-labelled (fuzzy) relations.
Our framework is also parametric in $\alg{A}$, which we assume to be (at least) a $\FLew$-algebra, \emph{i.e.}, a certain kind of residuated lattice prevalent in many-valued/fuzzy logic (Definition~\ref{definition:FLew-algebra}). 
The choice between crisp or fuzzy relations is subsumed and generalised by the choice of functor $\func{F}$. 
The coalgebraic setting further adds a choice of modalities (predicate liftings) and
the dynamic aspect amounts to a choice of operations on $\func{F}$-coalgebras. 
These parameters allow us to cover a wide range of many-valued dynamic logics in a single framework, and facilitate uniform proofs which eliminates the need to reprove concrete results for each new combination of parameters. 
The abstract setting also helps to identify which conditions ensure that certain results hold. 
Our framework accommodates iteration constructs and they are included in Sections~\ref{section:CoalgebraicDynmicLogic} \&~\ref{section:SafetyAndInvariance}. 
However, we exclude iteration when proving (strong) completeness in Section~\ref{section:Completeness}.
As different techniques are needed to prove (weak) completeness including iteration, this is left for future work.

As our main contributions, we
(i) define a notion of reducibility for coalgebra operations and tests which entails the soundness of certain reduction axioms (Propositions~\ref{proposition:soundness-ReductionAxioms} \&~\ref{proposition:SoundnessTests});
(ii) prove that reducible coalgebra operations and tests are safe for bisimulation and behavioural equivalence (Propositions~\ref{proposition:ReducibleImpliesSafe} \&~\ref{proposition:ReducibleTestsAreSafe});
(iii) prove for finite $\alg{A}$ that one-step completeness of the underlying coalgebraic modal logic implies strong completeness of its dynamic version under certain conditions (Theorem~\ref{theorem:FinitelyOneStepComplete-CanModel}).
We note that (i) generalises \cite{HansenKupkeLeal2014} already in the Boolean setting, \emph{e.g.}, we do not require $\func{F}$ to be a monad in order to axiomatise sequential composition, which allows us to include Instantial PDL \cite{vanBenthemBezhanishviliEnqvist2019} in our framework.
By instantiating (iii) for concrete $\func{F}$, we obtain for any finite, linear $\FLew$-algebra $\alg{A}$, strong completeness for 2-valued iteration-free PDL with $\alg{A}$-labelled accessibility relations (Example~\ref{example: TWOPDL-labeled}), and 
if furthermore negation is an involution on $\alg{A}$  
(\emph{e.g.}, \L ukasiewicz logic), we obtain
strong completeness for an $\alg{A}$-valued version of iteration-free game logic (Example~\ref{example: MVGL}).
To our knowledge, both of these results are new.

\textbf{Related work.}
Our work can be seen as a coalgebraic generalisation of the iteration-free part of the axiomatisation for many-valued PDL with many-valued relations (for any finite $\FLew$-algebra $\alg{A}$) in \cite{Sedlar2020},  
and as a dynamic generalisation of the completeness theorems (via one-step completeness) for many-valued coalgebraic modal logics in \cite[Theorem~23]{LinLiau2023} (where $\alg{A}$ is a finite $\FLew$-algebra) and \cite{KurzPoigerTeheux2024} (where $\alg{A}$ is a semi-primal algebra). 
Completeness and decidability for variants of many-valued PDL with crisp relations was proved in  
\cite{Sedlar2016,Teheux2014}.
Expressivity of many-valued coalgebraic logics has been studied in \cite{BilkovaDostal2013,BilkovaDostal2016} 
and in connection with behavioural distances in, \emph{e.g.}, \cite{Beohar2024}.

This article is an extended version of \cite{HansenPoiger2025}. To improve the overall presentation, we provide more detailed examples and proofs (\emph{e.g.}, Subsection~\ref{subsection:example-applications}, Examples~\ref{example:PredicateLiftings} \&~\ref{example:ReducibleTests}, Lemma~\ref{lemma:SafetyTestOperations}, Proposition~\ref{proposition:soundness-ReductionAxioms}). Furthermore, the extension includes additional examples which generalise the operations of Instantial PDL to double monads,
\emph{i.e.}, compositions $\func{F} \circ \func{F}$ where $\func{F}$ is a monad, but $\func{F} \circ \func{F}$ is not.
In particular, the coalgebraic semantics of Instantial PDL is given by the double monad $\mathcal{P} \circ \mathcal{P}$, which is known not to be a monad \cite{KlinSalamanca2018}.
The operations for double monads are described in Examples~\ref{example:CoalgebraOperations-CompositionDoubleMonad}--\ref{example:CoalgebraOperations-CounterDomain} \&~\ref{example:TestOperations}(4).
We also add a new result stating that iteration constructs (as defined from a monad or double monad) are always safe (Proposition~\ref{proposition:IterationIsSafe}).                  
\section{Preliminaries}\label{section:Preliminaries}

In this section, we give a brief overview of many-valued modal logic (Subsection~\ref{subsection:Many-valuedModalLogic}) and coalgebraic logic via predicate liftings (Subsection~\ref{subsection:PredicateLiftings}).    
\subsection{Many-valued modal and dynamic logics}\label{subsection:Many-valuedModalLogic}
Standard modal logic (see, \emph{e.g.}, \cite{Blackburn2001} for a modern introduction) is based on classical propositional logic, which can be described as formulas being evaluated in the \emph{two-element Boolean algebra} $\alg{2} = \langle \{ 0,1 \}, \wedge, \vee, \rightarrow, 0, 1 \rangle$. The basic idea of \emph{many-valued modal logic} is to replace the two-element Boolean algebra $\alg{2}$ by another algebra $\alg{A}$ of \emph{truth-values} (or \emph{truth-degrees}). Most commonly, this algebra $\alg{A}$ is chosen to be (an expansion of) a \emph{$\FLew$-algebra}.  

\begin{defi}[\textsf{$\FLew$-algebras}]\label{definition:FLew-algebra}
An algebra $\alg{A} = \langle A, \wedge, \vee, \odot, \rightarrow, 0, 1 \rangle$ is a \emph{(complete) $\FLew$-algebra} if $\langle A, \wedge, \vee, 0, 1 \rangle$ is a (complete) bounded  lattice, $\langle A, \odot, 1 \rangle$ is a commutative monoid and $\alg{A}$ satisfies the \emph{residuation law} $x \odot y \leq z \Leftrightarrow x \leq y \rightarrow z$ (where $\leq$ is the lattice-order).
\end{defi}

Synonymously, $\FLew$-algebras are known as \emph{commutative integral residuated lattices}, that is, residuated lattices where $\odot$ is commutative and the multiplicative unit $1$ coincides with the top element of the lattice (\emph{e.g.}, see \cite[Section 2.2]{GalatosJipsen2007}). The name `\emph{$\FLew$-algebra}' stems from the fact that these algebras correspond to the Full Lambek Calculus with exchange and weakening.

The choice of $\alg{A}$ determines the \emph{many-valued propositional logic} underlying the many-valued modal (coalgebraic) logic. \emph{Negation} is defined as in classical logic via $\neg x := x \rightarrow 0$. In particular, different choices of $\odot$ on $[0,1] \subseteq \mathbb{R}$ with its usual order ($\wedge = \mathrm{min}$, $\vee = \mathrm{max}$) give rise to different \emph{fuzzy logics} \cite{Hajek1998} in (3)-(5) below.      

\begin{exa}\label{example:FLew-algebras}
The following are examples of $\FLew$-algebras.
\begin{enumerate}
\item \emph{Heyting algebras} are $\FLew$-algebras with $\odot = \wedge$, including the next two examples.  
\item The \emph{two-element Boolean algebra} $\alg{2}$ corresponds to \emph{classical propositional logic}. 
\item The \emph{standard Gödel chain} $\alg{[0,1]_G} = \langle [0,1], \wedge, \vee, \rightarrow_G, 0, 1 \rangle$ is the complete $\FLew$-algebra with $\odot = \wedge$ and $\rightarrow_\text{G}$ being the Gödel implication
$$
 x\rightarrow_\text{G} y = \begin{cases}
 1 & \text{ if } x \leq y, \\
 y & \text{otherwise.}
 \end{cases}
$$
This algebra gives rise to \emph{infinite Gödel logic}, and its finite subalgebras $\alg{G}_n$ (up to isomorphism this is uniquely defined by any $n$-element subuniverse) yield \emph{finite Gödel logics} (for an overview of Gödel logics see, \emph{e.g.}, \cite{Preining2010}).    
\item The \emph{standard $\mathsf{MV}$-chain} $\alg{[0,1]_\lucas} = \langle [0,1], \wedge, \vee, \odot, \rightarrow_\text{Ł} \rangle$ is the complete $\FLew$-algebra with the Łukasiewicz connectives 
$$
x \odot y = \mathrm{max}\{0, (x+y)-1\}, \quad x \rightarrow_\text{Ł} y = \mathrm{min}\{ 1, 1-x+y \}.
$$
This algebra gives rise to \emph{infinite Łukasiewicz logic}, and its finite subalgebras $\lucas_n$ (defined by the subuniverses $ \{ 0, \tfrac{1}{n}, \dots, \tfrac{n-1}{n}, 1 \}$) yield \emph{finite \L ukasiewicz logics} (for an overview of these logics see, \emph{e.g.}, \cite{Cignoli2000}). 
\item The \emph{standard product chain} $\alg{[0,1]_P} = \langle [0,1], \wedge, \vee, \cdot,  \rightarrow_\text{P} \rangle$, with $\odot = \cdot$ being the usual product of reals and the product implication $x \rightarrow_\text{P} y = \mathrm{min}\{\tfrac{y}{x}, 1\}$ (where we let $\tfrac{y}{0} := 1$). This algebra gives rise to \emph{product logic} (see, \emph{e.g.}, \cite{Hajek1996}). 
\item Every commutative integral \emph{quantale} $\alg{Q} = \langle Q, \bigvee, \odot, 1 \rangle$ can be identified with a complete $\FLew$-algebra via $x \rightarrow y := \bigvee\{ z \mid x \odot z \leq y  \}$.
\item In \cite{GalatosJipsen2017} an exhaustive list of finite residuated lattices up to size six is provided. Among them, the $\FLew$-algebras are the ones where $\top = 1$ and every element is central. \hfill $\blacksquare$
\end{enumerate}
\end{exa}

\begin{rem}[\textsf{Semi-primal $\FLew$-algebras}]\label{remark:SemiPrimal}
The finite $\mathsf{MV}$-algebras $\lucas_n$ of Example~\ref{example:FLew-algebras}(4) are examples of $\FLew$-algebras which are {semi-primal}, meaning that they satisfy a term-expressivity property generalising the well-known \emph{functional completeness} (\emph{primality}) of the Boolean algebra $\alg{2}$ (\emph{i.e.}, every map $2^n \to 2$ is term-definable in $\alg{2}$). A finite algebra $\alg{A}$ is \emph{semi-primal} if every map $f\colon A^n \to A$ which preserves subalgebras (\emph{i.e.}, $f(S) \subseteq S$ for every subalgebra $\alg{S} \leq \alg{A}$) is term-definable in $\alg{A}$. For a finite $\FLew$-algebra $\alg{A}$, semi-primality is equivalent to the property that for all subsets $P \subseteq A$, the characteristic function $\chi_P \colon A \to \{0,1\}$ is term-definable in $\alg{A}$, that is, there exists a unary term-function $t_P(x) \colon A \to A$ with $t_P(a) = 1$ if $a \in P$ else $t_P(a) = 0$ \cite[Proposition 2.8]{KurzPoigerTeheux2024b}. Any finite $\FLew$-algebra has a semi-primal expansion, obtained by adding as primitive operation the characteristic functions $\chi_{ \{d\}}$ of all singleton subsets $\{ d \} \subseteq A$. Semi-primality is the key property in the study of many-valued coalgebraic logics of \cite{KurzPoigerTeheux2024, KurzPoiger2023}. In the present paper, the algebra $\alg{A}$ is \emph{not} required to be semi-primal. However, semi-primality can be convenient later on to find \emph{reduction axioms} (\emph{e.g.}, Example~\ref{example:ReducibleTests}(1)) or to ensure \emph{one-step completeness} of coalgebraic logics (\emph{e.g.}, Example~\ref{example: MVPDL-crisp}). 
\hfill $\blacksquare$                   
\end{rem}       

Let $\alg{A}$ be a complete $\FLew$-algebra. Formulas of \emph{$\alg{A}$-valued modal logic} are inductively defined from a countable set $\Prop$ of \emph{propositional variables}, the algebraic signature of $\alg{A}$ and the unary modalities $\Box$ and $\Diamond$. A \emph{crisp Kripke model} $(X,R, \sem{\cdot})$ consists of a Kripke frame (\emph{i.e.}, $R \subseteq X^2$) together with a propositional \emph{$\alg{A}$-valuation} map $\sem{\cdot} \colon \Prop \to A^X$, which is inductively extended to all formulas by the pointwise extension of the corresponding $\alg{A}$-operations and for modal formulas via 
\begin{equation}\label{eq:mv-mod-crisp}
    \sem{\Box\varphi}(x) = \bigwedge_{xRx'} \sem{\varphi}(x') \quad\text{ and } \quad\sem{\Diamond\varphi}(x) = \bigvee_{xRx'} \sem{\varphi}(x').
\end{equation}
More generally, in an \emph{$\alg{A}$-labelled Kripke model} $(X,R,\sem{\cdot})$ the accessibility relation is \emph{$\alg{A}$-labelled}, that is, a map $R \colon X^2 \to A$. The propositional \emph{$\alg{A}$-valuation} map $\sem{\cdot} \colon \Prop \to A^X$ is then extended to modal formulas via
\begin{equation}\label{eq:mv-mod-fuzzy}
    \sem{\Box\varphi}(x) = \bigwedge_{x' \in X} R(x,x') \rightarrow \sem{\varphi}(x') \quad\text{ and }\quad \sem{\Diamond\varphi}(x) = \bigvee_{x' \in X} R(x,x') \odot \sem{\varphi}(x').
\end{equation} 

Strong completeness results for Gödel modal logic are found in \cite{CaicedoRodriguez2010} for crisp and \cite{RodriguezVidal2021} for labelled accessibility relations. For \L ukasiewicz modal logics such results are found in \cite{HansoulTeheux2013} for crisp and \cite{BouEstevaGodo2011} for labelled relations. The latter also includes strong completeness results for $\alg{A}$-valued modal logic (crisp and labelled) for any finite $\FLew$-algebra expanded with truth-constants (the crisp case also requires a unique coatom). Let us also mention that completeness for many-valued modal logic with \emph{neighbourhood semantics} over arbitrary $\FLew$-algebras is discussed in \cite{CintulaNoguera2018}. 

Prior work on \emph{many-valued propositional dynamic logic} includes the following. Using \emph{crisp accessibility} relations, the papers \cite{Teheux2014} and \cite{Sedlar2016} contain decidability and completeness proofs for PDL based on finite \L ukasiewicz logic and on four-valued Belnap-Dunn logic, respectively. For many-valued PDL with \emph{$\alg{A}$-labelled accessibility} relations, the article \cite{Sedlar2020} contains decidability and completeness results for any finite $\FLew$-algebra $\alg{A}$ expanded with truth-constants, and \cite{Sedlar2021} offers additional complexity results for the case of $\alg{A} = \lucas_n$ being a finite \L ukasiewicz chain (again, expanded with truth-constants).             

\subsection{Coalgebraic logic via predicate liftings}
\label{subsection:PredicateLiftings}

We assume familiarity with basic category theory and coalgebraic logic, and only briefly recall relevant definitions and fix notation.
For overviews of the topics, we refer to \cite{MacLane1997} for elementary category theory, to \cite{Rutten00} for coalgebras and \cite{KupkePattinson2011} for an overview of coalgebraic logic.
Unless stated otherwise, $\alg{A}$ is an arbitrary $\FLew$-algebra.

\begin{defi}[\textsf{Coalgebras}]\label{definition:coalgebras}
Let $\func{F}$ be a $\Set$-endofunctor. An \emph{$\func{F}$-coalgebra} is a map $\gamma\colon X \to \func{F}X$. 
A map $f\colon X \to Y$ is a \emph{coalgebra morphism} from $\gamma\colon X \to \func{F}X$ to $\gamma'\colon Y \to \func{F}Y$, denoted $f \colon \gamma \to \gamma'$,
if $\func{F}f \circ \gamma = \gamma' \circ f$, that is, the below diagram commutes.
$$ 
\begin{tikzcd}[row sep = large, column sep = large]
X \arrow[r, "\gamma"] \arrow[d, "f"']
& \func{F}X \arrow[d, "\func{F}f"] \\
Y \arrow[r,"\gamma'"']
&  \func{F}Y
\end{tikzcd}  
$$
$\func{F}$-coalgebras with coalgebra morphisms form the \emph{category of $\func{F}$-coalgebras} which we denote by $\Coalg{\func{F}}$.
\end{defi} 

Coalgebras provide a uniform way to model \emph{state-based transition systems}. From the logical perspective, these give rise to the structures on which formulas are interpreted semantically, as illustrated in the following.   

\begin{exa}\label{example:FunctorsAndTheirCoalgebras/Kleisli}
The following are examples of (categories of) coalgebras.
\begin{enumerate}
\item It is well-known that $\Coalg{\mathcal{P}}$ for the \emph{covariant powerset functor} $\mathcal{P}$ is isomorphic to the category of \emph{Kripke frames} (with \emph{bounded morphisms}). More specifically, a Kripke frame $(X, R)$ corresponds to the coalgebra $\gamma \colon X \to \mathcal{P}X$ via $xRy \Leftrightarrow y \in \gamma(x)$. 
\item More generally, if $\alg{A}$ is a complete $\FLew$-algebra, the \emph{covariant $\alg{A}$-powerset functor} $\PA$ is defined 
by $\PA X = A^X$ on objects, and sends a map $f\colon X \to Y$ to $\PA f \colon A^X \to A^Y$ defined by $(\PA f)(\sigma) \colon y \mapsto \bigvee \{ \sigma(x) \mid f(x) = y \}$. Coalgebras for the functor $\PA$ are Kripke frames with \emph{$\alg{A}$-labelled accessibility relations} (or \emph{$\alg{A}$-weighted graphs}) as discussed in Subsection~\ref{subsection:Many-valuedModalLogic}.  
\item The \emph{$\alg{A}$-neighbourhood functor} $\NA$ is given by $\NA := \Acontra \circ \Acontra$, where we use $\Acontra$ to abbreviate the contravariant hom-functor $\Set(-,A)$. The \emph{monotone $\alg{A}$-neighbourhood functor} $\MA$ is the subfunctor of $\NA$ obtained by $\MA X = \{ N \in \NA X \mid \forall \sigma_1,\sigma_2 \in A^X \colon\sigma_1 \leq \sigma_2 \Rightarrow N(\sigma_1) \leq N(\sigma_2) \}$. In the case $\alg{A} = \alg{2}$, the corresponding coalgebras are \emph{(monotone) neighbourhood frames} (see, \emph{e.g.}, \cite{HansenKupke2004,Pacuit2017}). In general,
$\NA$-coalgebras are the \emph{$\alg{A}$-neighbourhood frames} of \cite{CintulaNoguera2018}.   
\item The \emph{instantial neighbourhood functor} is the double (covariant) powerset functor $\mathcal{P} \circ \mathcal{P}$ and the corresponding coalgebras are \emph{instantial neighbourhood frames} considered in \cite{vanBenthemBezhanishviliEnqvistYu2017,vanBenthemBezhanishviliEnqvist2019}. Note that on the object level these coincide with neighbourhood frames, however their corresponding coalgebra morphisms are different. \hfill $\blacksquare$
\end{enumerate}
\end{exa}

\begin{rem}[\textsf{Pointed Monads}]\label{remark:PointedMonads}
Prior work on coalgebraic dynamic logic \cite{HansenKupkeLeal2014,HansenKupke2015} is based on the assumption that the coalgebraic signature functor is a \emph{monad} $(\func{F}, \eta, \mu)$ (for the theory of monads see, \emph{e.g.}, \cite[Chapter VI]{MacLane1997}). In order to define tests, the monad $\func{F}$ is furthermore required to be \emph{pointed} \cite[Section 2.3]{HansenKupkeLeal2014}, meaning that for every set $X$ there exists $\bot_{\func{F}X} \in \func{F}X$ such that $\func{F}f(\bot_{\func{F}X}) = \bot_{\func{F}Y}$ for every map $f \colon X \to Y$. In the present article we do \emph{not} require $\func{F}$ to be a monad or to be pointed. However, all our examples are based on pointed monads $\func{F}$ or \emph{double (pointed) monads} $\func{F} \circ \func{F}$, since these structures gives rise to uniform notions, for example of sequential composition (see Examples~\ref{example:CoalgebraOperations-Kleisli} \& ~\ref{example:CoalgebraOperations-CompositionDoubleMonad}) and tests (see Example~\ref{example:TestOperations}).          
\hfill $\blacksquare$
\end{rem}

Formal reasoning about classes of coalgebras is the subject of \emph{coalgebraic logic}. We follow the \emph{predicate lifting} approach (see, \emph{e.g.}, \cite{Pattinson2003};  for an overview of all approaches \cite{KupkePattinson2011}), that is, our modalities are given semantics by certain \emph{natural transformations}. The following is a straightforward generalisation of Boolean-valued predicate liftings to the many-valued setting (also used, \emph{e.g.}, in \cite{BilkovaDostal2016, LinLiau2023}).  
     
\begin{defi}[\textsf{Predicate liftings}]\label{definition:PredicateLifting}
A $k$-ary \emph{$A$-predicate lifting} for $\func{F}$ (where $k \geq 1$) is a natural transformation $\lambda\colon (\Acontra)^k \Rightarrow \Acontra \circ \func{F}$. 
\end{defi}  

Note that for any set $X$, via transposition, we have that $\lambda_X \colon (A^X)^k \rightarrow  A^{\func{F}X}$
corresponds to $\widehat{\lambda}_X \colon \func{F}X  \rightarrow A^{(A^X)^k}$.
Hence, a $k$-ary $A$-predicate lifting $\lambda$ equivalently corresponds to a natural transformation $\widehat{\lambda} \colon \func{F} \Rightarrow \kNA$, where $\kNA$ is defined by $X \mapsto A^{(A^X)^k}$ and similarly to the $\alg{A}$-neighbourhood functor $\NA$ (Example~\ref{example:FunctorsAndTheirCoalgebras/Kleisli}(3)) on morphisms. We frequently employ this correspondence throughout the paper. 

\emph{Formulas} are defined inductively from a countable set $\Prop$ of propositional variables, the algebraic operations $o \in \sign(\alg{A})$ in the signature of $\alg{A}$ (which at the least includes $\top, \bot, \wedge, \vee, \odot, \rightarrow$) and a collection of predicate liftings $\lambda \in \Lambda$.     
\begin{align*}
&&\Formu \ni \varphi ::= p \in \mathsf{Prop} 
&\text{ $\big\vert$ } o(\varphi_1, \dots, \varphi_{\arity(o)})\phantom{.} 
\text{ $\big\vert$ } \heartsuit^\lambda (\varphi_1,\dots,\varphi_{\arity(\lambda)}) 
\end{align*}
 
A \emph{model} consists of a coalgebra $\gamma \colon X \to \func{F}X$ and a propositional $\alg{A}$-valuation $\sem{\cdot} \colon \Prop \to A^X$, which is inductively extended to all formulas via the pointwise extension of the corresponding $\alg{A}$-operations and to modal formulas via
$$
\sem{\heartsuit^\lambda (\varphi_1, \dots, \varphi_{\arity(\lambda)}} = \lambda_X(\sem{\varphi_1}, \dots, \sem{\varphi_{\arity(\lambda)}}) \circ \gamma.
$$

A key feature of coalgebraic logic is that modal truth is \emph{invariant under coalgebra morphisms}, in particular under \emph{bisimulation and behavioural equivalence} (cf.~\cite[Definition~4.1 \& Corollary~4.4]{KupkePattinson2011}). In order to have an \emph{expressive} modal language, \emph{i.e.}, where modal equivalence implies behavioural equivalence, the collection of predicate liftings needs to be able to distinguish elements in $\func{F}X$ (see, \emph{e.g.}, \cite{Schroeder2008}).
This separation property is also required to obtain \emph{quasi-canonical models} in \cite{SchroederPattinson2009} (which we require later on in Section~\ref{section:Completeness}).
  
\begin{defi}[\textsf{Separating}]\label{definition:SeparatingPredLifts}
A collection $\Pred$ of $A$-predicate liftings is \emph{separating} if the collection of transposes $\{ \widehat{\lambda} \mid \lambda \in \Pred \}$ is jointly monic, in other words, whenever $t_1 \neq t_2$ holds for two elements $t_1,t_2 \in \func{F}X$, then there exists some $\lambda \in \Pred$ for which $\widehat{\lambda}_X(t_1) \neq \widehat{\lambda}_X(t_2)$ holds.    
\end{defi}

We now present how predicate liftings give rise to semantics of various modal logics, in particular (crisp and labelled) many-valued modal logic (see Subsection~\ref{subsection:Many-valuedModalLogic}). 

\begin{exa}\label{example:PredicateLiftings}
The following are examples of $A$-predicate liftings. In (1)-(3), we require $\alg{A}$ to be complete.
\begin{enumerate}
\item $\mathcal{P}$-coalgebras are crisp Kripke frames and
the semantics of the $\alg{A}$-valued $\Box$ and $\Diamond$ modalities (Eq.~\eqref{eq:mv-mod-crisp}) correspond to unary $A$-predicate liftings whose components $\lambda_X^\Box, \lambda_X^\Diamond \colon A^X \to A^{\mathcal{P}X}$ are defined on $\sigma \in A^X$ and $Z \in \mathcal{P}X$ by 
$$
\lambda_X^\Box(\sigma) \colon Z \mapsto  \bigwedge \sigma(Z) \quad \text{ and } \quad \lambda_X^\Diamond(\sigma) \colon Z \mapsto  \bigvee \sigma(Z).
$$  
Both $\{ \lambda^\Box \}$ and $\{ \lambda^\Diamond \}$ are separating, we show the former to illustrate. Let $Z_1,Z_2 \in \mathcal{P}X$ be distinct and without loss of generality pick $z \in Z_1{\setminus}Z_2$. Let $\sigma \in A^X$ be defined by $\sigma(z) = 0$ while $\sigma(x) = 1$ for all $x \neq z$. Then $\lambda^\Box_X(\sigma)(Z_1) = 0$ while $\lambda^\Box_X(\sigma)(Z_2) = 1$. 
\item For the functor $\PA$, whose coalgebras are $\alg{A}$-labelled Kripke frames, the $\alg{A}$-valued $\Box$ and $\Diamond$ modalities on $\alg{A}$-labelled frames (Eq.~\eqref{eq:mv-mod-fuzzy}) correspond to the unary $A$-predicate liftings with components $\lambda_X^\Box, \lambda_X^\Diamond \colon A^X \to A^{\PA X}$ defined on $\sigma \in A^X$ and $\rho \in \PA X$ by
$$
\lambda_X^\Box(\sigma) \colon \rho \mapsto  \bigwedge_{x\in X} \rho(x) \rightarrow \sigma(x) \quad \text{and} \quad \lambda_X^\Diamond(\sigma) \colon \rho \mapsto  \bigvee_{x\in X} \rho(x) \odot \sigma(x).
$$  
Again, it can be checked that $\{ \lambda^\Box \}$ and $\{ \lambda^\Diamond \}$ are separating, we prove the latter to illustrate. Let $\rho_1(z) \neq \rho_2(z)$ for $\rho_1,\rho_2 \in A^X$ and set $\sigma(z) = 1$ while $\sigma(x) = 0$ for all $x \neq z$. Then $\lambda_X^\Diamond(\sigma)(\rho_1) = \rho_1(z)$ while $\lambda_X^\Diamond(\sigma)(\rho_2) = \rho_2(z)$. 
\item For $\PA$, we also consider a Boolean-valued logic over $\alg{A}$-labelled frames, with $2$-predicate liftings $\lambda^r_X \colon 2^X \to 2^{A^X}$ indexed by elements $r\in \alg{A}$ which are defined by 
$$
\lambda^r_X(S) = \{ \rho \in A^X \mid  \bigvee \rho(S) \geq r \}.
$$ 
The collection $\{ \lambda^r \}_{r \in A}$ is separating. To see this, let $\rho_1(z) \neq \rho_2(z)$ and without loss of generality assume $\rho_1(z) \not\leq \rho_2(z)$. Set $r = \rho_1(z)$ and $S = \{ z \}$, then $\lambda^r(S)(\rho_1) = 1$ while $\lambda^r(S)(\rho_2) = 0$. Note that for $\alg{A}$ \emph{linear} the collection $\{ \lambda^r \}_{r \in D}$ is already separating if $D \subseteq A{\setminus}\{ 0 \}$ is an \emph{order-dense} subset, as the proof can be easily adapted. 
\item For $\NA$ and $\MA$, the unary $A$-predicate lifting $\lambda^\ev$ which evaluates at its input, is defined by 
$$
\lambda^\ev_X(\sigma) = \ev_\sigma \colon N \mapsto N(\sigma)
$$
and corresponds to \emph{$\alg{A}$-valued neighbourhood semantics} \cite{CintulaNoguera2018}.  
One easily checks that $\{ \lambda^\ev \}$ is separating for both $\NA$ and $\MA$.
\item For the instantial neighbourhood functor $\mathcal{P} \circ \mathcal{P}$ and any $k \in \mathbb{N}$, we define the $(k+1)$-ary $2$-predicate lifting $\lambda^{k+1}$ by 
$$
\lambda^{k+1}_X(S_1, \dots, S_k, S) = \{ N \subseteq \mathcal{P}X \mid \exists Z \in N\colon Z \subseteq S \wedge (\forall i \colon  Z \cap S_i \neq \varnothing)\}.$$ 
The collection $\{ \lambda^{k+1} \}_{k \in \mathbb{N}}$ corresponds to the modalities of \emph{instantial neighbourhood semantics} \cite{vanBenthemBezhanishviliEnqvistYu2017,vanBenthemBezhanishviliEnqvist2019}.
In general, this collection is not separating, but becomes separating if one instead uses the coalgebraic signature functor $\mathcal{P} \circ \mathcal{P}_\mathrm{fin}$. \hfill $\blacksquare$
\end{enumerate}
\end{exa}

Our aim in the following section is to extend this coalgebraic framework to the domain of many-valued dynamic logics. 

\section{A general coalgebraic framework for dynamic logics}\label{section:CoalgebraicDynmicLogic}
Before establishing syntax and semantics of general coalgebraic dynamic logics (Subsection~\ref{subsection:SyntaxSemantics}), we introduce coalgebra operations and tests to provide semantics for the \emph{dynamic} part in the following subsection.
\subsection{Coalgebra operations and tests}\label{subsection:Coalg/Test-Operations}
For any $\Set$-endofunctor $\func{F}$, we use $\car_\func{F}\colon \Coalg{\func{F}} \rightarrow \Set$ to denote the forgetful functor sending a coalgebra $X \to \func{F}X$ to its \emph{carrier set} $X$. We use $\func{F}^n X = \func{F}X \times \cdots \times \func{F}X$ to denote the $n$-fold product (not composition of $\func{F}$) and often use $\vec{\gamma}$ to denote an $\func{F}^n$-coalgebra, 
identifying it with an $n$-tuple $(\gamma_1, \dots, \gamma_n)$ of $\func{F}$-coalgebras in the obvious way. Lastly, for a category $\mathcal{C}$ we denote by $\mathcal{C}_\ob$ its corresponding \emph{discrete category} (\emph{i.e.}, the corresponding category having the same objects but only identity morphisms) and (very slightly abusing notation) we also use $\car_\func{F}$ for the analogous forgetful functor $\Coalg{\func{F}}_\ob \to \Set$. Our uniform coalgebraic generalisation of operations such as sequential composition, non-deterministic choice and iteration used in various dynamic logics is subsumed by the following notion.

\begin{defi}[\textsf{Coalgebra operations}]\label{definition:CoalgebraOperation}
Let $\func{F}\colon \Set \to \Set$ be an endofunctor. An \emph{$\func{F}$-coalgebra operation} of arity $n \geq 1$ is a functor $O \colon \Coalg{\func{F}^n}_\ob \to \Coalg{\func{F}}_{\ob}$ 
satisfying $\car_\func{F} \circ O = \car_{\func{F}^n}$, that is, for which the following diagram commutes:
$$ 
\begin{tikzcd}[row sep = small]
\Coalg{\func{F}^n}_\ob \arrow[rr, "O"] \arrow[dr, "\car_{\func{F}^n}"']
&& \Coalg{{\func{F}}}_\ob \arrow[dl, "\car_{\func{F}}"] \\
& \Set &
\end{tikzcd}  
$$
\end{defi}  

In other words, an $n$-ary $\func{F}$-coalgebra operation takes $n$-many coalgebras $\gamma_1, \dots, \gamma_n \colon X \to \func{F}X$ and returns a composed coalgebra $O(\gamma_1, \dots, \gamma_n) \colon X \to \func{F}X$ on the same carrier set. Note that at this point we do not require any compatibility with \emph{coalgebra morphisms} here. However, we will require it later in the context of \emph{safety} (Section~\ref{section:SafetyAndInvariance}).

In the following examples, we illustrate how operations used in various instances of dynamic logics are described (and generalised) as coalgebra operations. 

\begin{exa}[\textsf{Pointwise induced coalgebra operations}]
\label{example:CoalgebraOperations-Induced}
A $\Set$-indexed family of maps $\theta_X \colon \func{F}^n X \to \func{F}X$ (which need not form a natural transformation) yields the \emph{induced $n$-ary coalgebra operation} $O^\theta \colon \Coalg{\func{F}^n}_\ob \to \Coalg{\func{F}}_\ob$ pointwise via 
$$
O^\theta(\vec{\gamma}) = \theta_X \circ \vec{\gamma}.
$$
The following are specific examples of induced coalgebra operations.  
\begin{enumerate}[(i)]
\item For $\mathcal{P}$, set-theoretical operations like union $\cup$ and intersection $\cap$ induce binary coalgebra operations.  
\item More generally for $\PA$, algebraic operations of $\alg{A}$ induce coalgebra operations. For example $\vee, \wedge, \odot \colon \PA^2 X\to \PA X$ (taken pointwise, \emph{e.g.}, $(\sigma_1 \wedge \sigma_2)(x) = \sigma_1(x) \wedge \sigma_2(x)$) all induce binary coalgebra operations.
\item Similarly for $\NA$ and $\MA$, any (monotone) $\alg{A}$-operation, such as $\wedge, \vee, \odot \colon \NA^2 X \rightarrow \NA X$ (again taken pointwise, \emph{e.g.}, $(N_1 \wedge N_1)(\sigma) = N_1(\sigma) \wedge N_2(\sigma)$) induces a coalgebra operation.  
\item For $\NA$ and $\MA$, the \emph{dual} operation $\dual$ is defined by $N^\partial(\sigma) = \neg N(\neg \sigma)$ and induces a unary coalgebra operation. Here, $\neg \sigma$ for $\sigma \in A^X$ is defined pointwise.
\item The \emph{neighbourhood-wise} union $N_1 \Cup N_2 = \{ Z_1 \cup Z_2 \subseteq X \mid Z_i \in N_i\}$ induces a binary coalgebra operation for the instantial neighbourhood functor $\mathcal{P} \circ \mathcal{P}$ (in \cite{vanBenthemBezhanishviliEnqvist2019} this is the \emph{parallel composition} denoted by `$\cap$'). \hfill $\blacksquare$ 
\end{enumerate} 
\end{exa}

\begin{exa}[\textsf{Kleisli composition}]
\label{example:CoalgebraOperations-Kleisli}
Let $(\func{F}, \eta, \mu)$ be a \emph{monad} (see, \emph{e.g.}, \cite[Chapter VI]{MacLane1997}). As in \cite{HansenKupkeLeal2014}, a natural choice of \emph{sequential composition} is the binary coalgebra operation of \emph{Kleisli composition}, given by 
$$
\gamma_1 {;} \gamma_2 = \mu_X \circ \func{F}\gamma_2 \circ \gamma_1.
$$
This includes the following examples in particular. 
\begin{enumerate}[(i)]
\item For the covariant powerset functor $\mathcal{P}$, one obtains the usual composition of relations 
$$
x(R_1{;}R_2)z \Leftrightarrow \exists y \colon x R_1 y R_2 z.
$$ 
\item More generally for $\PA$, one obtains the composition of $\alg{A}$-labelled relations defined by 
$$
(R_1{;}R_2)(x,z) = \bigvee \{ R_1(x,y) \odot R_2(y,z) \mid y\in X\}.
$$ 
\item $\NA$ and $\MA$ are monads with Kleisli composition 
$$
(\xi_1{;}\xi_2)(x)(\sigma) = \xi_1(x)(\ev_\sigma \circ \xi_2).
$$ 
For $\alg{A} = \alg{2}$, this yields $S \in (N_1{;}N_2)(x) \Leftrightarrow \{ y \in X \mid S \in N_2(y) \} \in N_1(x)$, the sequential composition used in game logic. \hfill $\blacksquare$   
\end{enumerate}    
\end{exa}

\begin{exa}[\textsf{Compositions for double monads}]
\label{example:CoalgebraOperations-CompositionDoubleMonad}
Again, let $(\func{F}, \eta, \mu)$ be a monad, we now want to similarly define a composition for the \emph{double monad} $\func{F} \circ \func{F}$. The natural transformation $M \colon \func{FFFF} \Rightarrow \func{FF}$ given by 
$$
M = \func{F}\mu \circ \func{FF}\mu,
$$
yields a binary coalgebra operation (setting $\gamma_1 {;} \gamma_2 = M_X \circ \func{FF}\gamma_2 \circ \gamma_1$ similarly to Kleisli composition). For the instantial neighbourhood functor $\mathcal{P} \circ \mathcal{P}$ (which is not a monad \cite{KlinSalamanca2018}), this recovers the \emph{sequential composition of instantial neighbourhoods} $;$ of \cite[Section 3]{vanBenthemBezhanishviliEnqvist2019} (therein denoted `$\circ$') defined via 
$$
(\gamma_1 {;} \gamma_2)(x) = \big\lbrace \bigcup \mathcal{F}\mid \exists Z \in \gamma_1(x)\colon \mathcal{F} \subseteq \bigcup \gamma_2[Z] \wedge \forall z \in Z \colon \gamma_2(z) \cap \mathcal{F} \neq \varnothing  \big\rbrace 
$$
as a binary coalgebra operation. 

There is another way to define a natural transformation $M' \colon \func{FFFF} \Rightarrow \func{FF}$, namely via 
$$
M' = \mu_\func{F} \circ \mu_\func{FF},
$$  
which gives rise to an alternative composition which we denote by `$\star$'. In particular, for $\mathcal{P} \circ \mathcal{P}$ this yields  
$$
(\gamma_1 \star \gamma_2)(x) = \{ Z \subseteq X \mid \exists Y \in \gamma_1(x), y \in Y \colon Z \in \gamma_2(y) \}
$$     
as an alternative composition of instantial neighbourhoods (not considered in \cite{vanBenthemBezhanishviliEnqvist2019}). \hfill $\blacksquare$
\end{exa}

\begin{exa}[\textsf{Iteration constructs}]
\label{example:CoalgebraOperations-Iteration}
For monads $\func{F}$, we define \emph{Kleisli iteration} $(\cdot)^\ast$ similarly to \cite[Section 2.2]{HansenKupke2015}. For this, we assume that there is natural transformation $\bigsqcup \colon \mathcal{P}\func{F} \Rightarrow \func{F}$ (\emph{i.e.}, $\func{F}f(\bigsqcup_{i \in I} t_i) = \bigsqcup_{i \in I} \func{F}f(t_i)$ for all $f\colon X \to Y$) such that for all $X$, $(\func{F}X, \bigsqcup)$ is a complete sup-semilattice structure.
For a coalgebra $\gamma \colon X \to \func{F}X$ we define   
$$
\gamma^{[0]} = \eta_X, \quad \gamma^{[n+1]} = \gamma;\gamma^{[n]}, \quad \gamma^\ast = \bigsqcup_{n \in \omega} \gamma^{[n]}.
$$
For $\mathcal{P}$, using $\bigsqcup = \bigcup$ we recover iteration of PDL. 
Similarly, one can define iteration for $\PA$ and $\MA$ using $\bigsqcup = \bigvee$ being the (corresponding) pointwise join.\footnote{Naturality for $\PA$ follows from the definition of $\PA f$ via joins, recall Example~\ref{example:FunctorsAndTheirCoalgebras/Kleisli}(2). Naturality for $\MA$ holds because $\MA f(\bigvee \nu_i)(\sigma) = (\bigvee\nu_i)(\sigma \circ f) = \bigvee\nu_i(\sigma \circ f) = \bigvee \MA f(\nu_i)(\sigma) = (\bigvee \MA f(\nu_i))(\sigma)$.}  

Analogously (and with the same assumptions on $\func{F}$), iteration for the double monad $\func{F} \circ \func{F}$ is defined by 
$$
\gamma^{[0]} = \func{F}\eta_X \circ \eta_X, \quad \gamma^{[n+1]} = \gamma;\gamma^{[n]}, \quad \gamma^\ast = \bigsqcup_{n \in \omega} \gamma^{[n]}
$$
where $;$ is the composition defined in Example~\ref{example:CoalgebraOperations-CompositionDoubleMonad}. In particular, for $\mathcal{P}\circ \mathcal{P}$ this yields \emph{instantial neighbourhood iteration} of \cite[Section 4.1]{vanBenthemBezhanishviliEnqvist2019}. Similarly, the alternative composition $\star$ of Example~\ref{example:CoalgebraOperations-CompositionDoubleMonad} yields a corresponding alternative iteration $(\cdot)^\star$. \hfill $\blacksquare$
\end{exa}

\begin{exa}[\textsf{Counter-domain generalised}]
\label{example:CoalgebraOperations-CounterDomain}
For the covariant powerset functor $\mathcal{P}$, the \emph{counter-domain} (see, \emph{e.g.}, \cite{vanBenthem1998}) is a unary coalgebra operation $\sim$ defined by 
$$
({\sim}\gamma) (x) = \begin{cases}
\{ x \} & \text{ if } \gamma(x) = \varnothing, \\ 
\varnothing & \text{ if } \gamma(x) \neq \varnothing.
\end{cases}
$$ 

This can be generalised to a unary coalgebra operation $\sim$ for any $\func{F}$ or $\func{F \circ F}$ with a \emph{pointed monad} $\func{F}$ (see Remark~\ref{remark:PointedMonads}) via 
$$
({\sim}\gamma) (x) = \begin{cases}
\eta_X(x) & \text{ if } \gamma(x) = \bot_{\func{F}X} \\ 
\bot_{\func{F}X} & \text{ if } \gamma(x) \neq \bot_{\func{F}X}
\end{cases}
\quad \text{and} \quad
({\sim}\gamma) (x) = \begin{cases}
\func{F}\eta_X\circ \eta_X(x) & \text{ if } \gamma(x) = \bot_{\func{FF}X} \\ 
\bot_{\func{FF}X} & \text{ if } \gamma(x) \neq \bot_{\func{FF}X}
\end{cases}
$$   
respectively. For $\PA$ we call the resulting coalgebra operation the \emph{counter-support} and for $\mathcal{P} \circ \mathcal{P}$ we call the resulting operation the \emph{instantial counter-domain}. \hfill $\blacksquare$
\end{exa}

The prior coalgebraic framework for (two-valued) dynamic logics of \cite{HansenKupkeLeal2014,HansenKupke2015} covers operations induced pointwise by natural transformations (Example~\ref{example:CoalgebraOperations-Induced}) as well as Kleisli composition and Kleisli iteration (Examples~\ref{example:CoalgebraOperations-Kleisli} \& \ref{example:CoalgebraOperations-Iteration}), but none of the remaining coalgebra operations of the above examples.  

We now define \emph{tests} in a manner similar to coalgebra operations. In the following, we use $\func{A}$ to denote the \emph{constant functor} sending every set to $A$ and every map to $\id_A$ (note that $\func{A}$-coalgebras can be identified with $A$-predicates). 

\begin{defi}[\textsf{Tests}]\label{definition:TestOperation}
An \emph{$A$-test} for $\func{F}$ is a functor $\test \colon \Coalg{\func{A}}_\ob \to \Coalg{\func{F}}_\ob$ 
such that $\car_\func{F} \circ \test = \car_{\func{A}}$, that is,
for which the following diagram commutes:  
$$ 
\begin{tikzcd}[row sep = small]
\Coalg{\func{A}}_\ob \arrow[rr, "\test"] \arrow[dr, "\car_{\func{A}}"']
&& \Coalg{{\func{F}}}_\ob \arrow[dl, "\car_{\func{F}}"] \\
& \Set &
\end{tikzcd}   
$$ 
\end{defi}

In other words, an $A$-test for $\func{F}$ turns a formula $\varphi$ corresponding to a predicate $\sem{\varphi} \in A^X$ into a coalgebra $\test(\sem{\varphi}) \colon X \to \func{F}X$ (intuitively speaking, the transition system for $\varphi?$). 

In the following, we illustrate how the tests of previously considered dynamic logics arise in this way. Again, our framework generalises that of \cite{HansenKupkeLeal2014}, where only tests of the form in (1) below (with $P = \{ 1 \}$) are considered. 

\begin{exa}\label{example:TestOperations} 
The following are examples of tests. 
\begin{enumerate}
\item Suppose that $\func{F}$ is a \emph{pointed monad} (see Remark~\ref{remark:PointedMonads}). Let $P \subseteq A$ be an arbitrary subset (defining some `property' or `desirable values' in $A$). Then we define the $A$-test $\test_P$ via 
$$
\test_P(\sigma)(x) = \begin{cases}
\eta_X(x) & \text{ if } \sigma(x)\in P, \\
\bot_{\func{F}X} & \text{ otherwise}.
\end{cases}
$$
Informally, $\test_P(\sigma)$ should be understood as `\emph{if $\sigma$ evaluates to a value in $P$ then continue, else abort}'. In particular, with $P = \{ 1 \}$ this yields the usual test for truth used in \cite{HansenKupkeLeal2014} (where $\alg{A} = \alg{2}$) and \cite{Teheux2014} (where $\alg{A} = \lucas_n$ and $\func{F} = \mathcal{P}$).   
\item For $\PA$, another $A$-test is given by the pointwise monoid operation 
$$
\test(\sigma)(x) = e_x \odot^\mathrm{pw} \sigma,
$$
with the `unit vector' $e_x(x) = 1$, otherwise $e_x(x') = 0$. This yields the $\alg{A}$-labelled relation $R(x,x) = \sigma(x)$, and $R(x,x') = 0$ for $x \neq x'$ which has been proposed as adequate test for many-valued PDL (\emph{e.g.}, in \cite{Sedlar2020}).   
\item For $\MA$, defining 
$$
\test(\sigma)(x) = \ev_x\big( (\text{-}) \odot^\mathrm{pw} \sigma\big)
$$ 
generalises angelic tests of game logic.

For example, in \L ukasiewicz logic, $\test(\sem{\psi})(x)(\sem{\phi}) = \sem{\psi}(x) \odot \sem{\phi}(x)$ reduces the value of $\sem{\phi}(x)$ by $(1-\sem{\psi}(x))$.
Demonic tests are obtained as $\test(\neg\sigma)^\partial$, which yields $\test(\sem{\neg\psi})^\partial(x)(\sem{\phi}) = \sem{\psi}(x) \oplus \sem{\phi}(x)$, \emph{i.e.}, adding $\sem{\psi}(x)$ to $\sem{\phi}(x)$. 
\item For the instantial neighbourhood functor $\mathcal{P} \circ \mathcal{P}$, the $2$-test used in Instantial PDL \cite{vanBenthemBezhanishviliEnqvist2019} is defined via 
$$
\test(S)(x) = \begin{cases}
\big\lbrace\!\lbrace x \rbrace\!\big\rbrace & \text{if } x \in S, \\
\varnothing & \text{otherwise}. 
\end{cases}
$$
More generally, for any double pointed monad $\func{F} \circ \func{F}$ and $P \subseteq A$, there is an $A$-test defined by 
$$
\test_P(\sigma)(x) = \begin{cases}
\func{F}\eta_X \circ \eta_X(x) & \text{if } \sigma(x) \in P, \\
\bot_{\func{FF}X} & \text{otherwise}. 
\end{cases}
$$  
This is similar to (1), and we recover the Instantial PDL test with $P = \{ 1 \}$. 
\hfill $\blacksquare$
\end{enumerate}
\end{exa}

We are now ready to define a general syntax and semantics for many-valued coalgebraic dynamic logics in the following subsection.

\subsection{Syntax and semantics of coalgebraic dynamic logics}\label{subsection:SyntaxSemantics}

For the remainder of the article, let $\alg{A} = \langle A, \wedge, \vee,  \odot, \rightarrow, 0, 1 \rangle$ be a \emph{$\FLew$-algebra} (assumed to be complete whenever we use $\func{F} = \PA$ or $\func{F} = \mathcal{P}$), possibly extended with further operations.
We write $\sign(\alg{A})$ for the \emph{algebraic signature} of $\alg{A}$.
Furthermore, let $\Pred$ be a countable collection of \emph{$A$-predicate liftings} for the \emph{$\Set$-endofunctor} $\func{F} \colon \Set \to \Set$. Let $\CoOp$ be a finite collection of \emph{coalgebra operations} (Definition~\ref{definition:CoalgebraOperation}) and let $\TeOp$ be a finite collection of \emph{$A$-tests} for $\func{F}$ (Definition~\ref{definition:TestOperation}). We use the common notation $\arity(o)$, $\arity(\lambda)$ and $\arity(O)$ to denote the finite \emph{arity} of any $o\in \sign(\alg{A})$, $\lambda \in \Pred$ and $O \in \CoOp$, respectively. Lastly, let $\Prop$ be a countably infinite set of \emph{propositional variables} and let $\AtAct$ be a countable collection of \emph{atomic actions} (we use `\emph{action}' as a generalisation of `\emph{program}' and `\emph{game}').  

From this data, first we define general formulas and actions. 

\begin{defi}[\textsf{Formulas \& Actions}]\label{definition:ActionsFormulas}
We define the set of \emph{dynamic formulas} $\varphi \in \Formu$ and the set of \emph{actions} $a \in \Act$ by simultaneous induction as follows.
\begin{align*}
&&\varphi &::= p \in \mathsf{Prop} 
\text{ $\big\vert$ } o(\varphi_1, \dots, \varphi_{\arity(o)})\phantom{.} 
\text{ $\big\vert$ } \diam{a}^\lambda (\varphi_1,\dots,\varphi_{\arity(\lambda)}) 
  \\
&&a &::= a_0 \in \AtAct 
\text{ $\big\vert$ } O(a_1, \dots, a_{\arity(O)}) 
\text{ $\big\vert$ } \test(\varphi)    
\end{align*} 
Here, $o \in \mathsf{sign}(\alg{A})$ ranges over (propositional) connectives of $\alg{A}$, modal formulas correspond to predicate liftings $\lambda\in\Pred$ and complex actions are obtained from coalgebra operations $O \in \CoOp$ and tests $\test \in \TeOp$.
\end{defi} 
    
Note that the set of dynamic formulas $\Form$ always includes formulas of the form $\varphi \wedge \varphi$, $\varphi \vee \varphi$, $\varphi \odot \varphi$ and $\varphi \rightarrow \varphi$, as well as $\top := 1$, $\bot := 0$ and $\neg \varphi := \varphi \rightarrow \bot$, since $\alg{A}$ is (an extension of) a $\FLew$-algebra. 

The coalgebraic dynamic \emph{semantics} are as follows.

\begin{defi}[\textsf{Models}]\label{definition:model}
A \emph{(dynamic) model} is a map $\gamma \colon \Act \to (\func{F}X)^X$ together with a \emph{propositional $\alg{A}$-valuation} $\sem{\cdot} \colon \Prop \to A^X$. 
\end{defi} 

For a model as above we write $\gamma_a \colon X \to \func{F}X$ rather than $\gamma(a)$. As usual, the propositional valuation is inductively extended to all formulas in $\Formu$ via the rules 
\begin{align*}
&& \sem{o(\varphi_1, \dots, \varphi_{\arity(o)})} &= o^\mathrm{pw}\big(\sem{\varphi_1}, \dots, \sem{\varphi_{\arity(o)}}\big), \\
&& \sem{\diam{a}^\lambda(\varphi_1,\dots, \varphi_{\arity(\lambda)})} &= \lambda_X\big(\sem{\varphi_1},\dots, \sem{\varphi_{\arity(\lambda)}}\big) \circ \gamma_a.
\end{align*}

Note that, in order to fully determine a model $\gamma$, it suffices to provide a map $\gamma^0 \colon \AtAct \to (\func{F}X)^X$ defined only on \emph{atomic} actions together with the propositional valuation $\sem{\cdot} \colon \Prop \to A^X$, since one can extend the valuation to all formulas as above and simultaneously extend $\gamma^0$ to \emph{all} actions via 
$$
\gamma_{O(a_1, \dots, a_{\arity(O)})} = O(\gamma_{a_1},\dots, \gamma_{a_{\arity(O)}}), \quad \gamma_{\test(\varphi)} = \test(\sem{\varphi}).
$$  
We are mainly interested in models which arise in this way which, as in \cite{HansenKupkeLeal2014}, we call \emph{standard} models.

\begin{defi}[\textsf{Standard models}]\label{definition:StandardModel}
A model $\gamma \colon \Act \to (\func{F}X)^X$ is called \emph{standard} if it is obtained from some $\gamma^0 \colon \AtAct \to (\func{F}X)^X$ together with a propositional valuation in the manner described above. 
\end{defi} 
A formula $\phi$ is \emph{true} at a state $x$ if  $\sem{\varphi}(x) = 1$,
and \emph{local semantic entailment} $\Gamma \models \varphi$ holds iff 
every state $x$ of every \emph{standard} model satisfies $\sem{\varphi}(x) = 1$ whenever $\sem{\psi}(x) = 1$ for all $\psi \in \Gamma$. 

We now present (in Examples~\ref{example:CoalgebraicDynamicLogics-PDL-Crisp}--\ref{example:CoalgebraicDynamicLogics-IPDL} below) some coalgebraic dynamic logics which will serve as our main examples throughout this article. To the best of our knowledge, no instances of Examples~\ref{example:CoalgebraicDynamicLogics-PDL-Labelled2} and \ref{example:CoalgebraicDynamicLogics-GL} have appeared previously in the literature.

\begin{exa}[\textsf{Many-valued PDL with crisp accessibility relations}]
\label{example:CoalgebraicDynamicLogics-PDL-Crisp}
With the coalgebraic signature functor $\func{F} = \mathcal{P}$ being the covariant powerset functor, a dynamic model $\gamma \colon \Act \to (\mathcal{P}X)^X$ can be identified with a multi-relational Kripke frame $(X, \{R_a\}_{a \in \Act})$ containing a (crisp) relation $R_a \subseteq X^2$ for every action $a \in \Act$. The coalgebra operations $\CoOp = \{ \cup, ;  , (\cdot)^\ast\}$ and test(s) $\TeOp \subseteq \{ \test_P \mid P \subseteq A \}$ are interpreted by the rules
\begin{align*}
x R_{a\cup b} y & \Longleftrightarrow xR_a y \text{ or } x R_b y, \\
x R_{a;b} y & \Longleftrightarrow \exists v\colon xR_a v R_b y, \\
x R_{a^\ast}y & \Longleftrightarrow x = y \text{ or }\exists v_1, \dots, v_n : xR_a v_1 R_a v_2R_a \dots v_{n-1}R_a v_n R_a y \\
x R_{\test_P (\psi)} y &\Longleftrightarrow x = y \text{ and } \sem{\psi}(x) \in P
\end{align*}
in every standard model. The $A$-predicate liftings $\Lambda \subseteq \{ \lambda^\Box, \lambda^\Diamond \}$ correspond to $\alg{A}$-valued modal logic on crisp Kripke frames (Example~\ref{example:PredicateLiftings}(1)) and yield the interpretations of modalities (where we abbreviate $\boxm{a} := \diam{a}^{\lambda^\Box}$ and $\diam{a} := \diam{a}^{\lambda^\Diamond}$) via 
$$
\sem{\boxm{a}\varphi}(x) = \bigwedge_{x R_a y} \sem{\varphi}(y) \quad \text{and} \quad
\sem{\diam{a}\varphi}(x) = \bigvee_{x R_a y} \sem{\varphi}(y).  
$$
With $\alg{A} = \lucas_n$ a finite \L ukasiewicz chain, $\Pred = \{\lambda^\Box \}$ and $\TeOp = \{ \test_{\{ 1 \} } \}$,  this logic is considered in \cite{Teheux2014} (used, \emph{e.g.}, to model \emph{searching games with errors}).
\hfill $\blacksquare$
\end{exa}

\begin{exa}[\textsf{Many-valued PDL with labelled accessibility relations}]
\label{example:CoalgebraicDynamicLogics-PDL-Labelled1}
With the coalgebraic signature functor $\func{F} = \PA$, a dynamic model can be identified with $(X, \{ R_a \}_{a \in \Act})$, where the  $R_a \colon X^2 \to A$ are $\alg{A}$-labelled accessibility relations for every $a \in \Act$. The coalgebra operations $\CoOp = \{ \vee, ;, (\cdot)^\ast \}$ and the test of Example~\ref{example:TestOperations}(2) correspond to the rules
\begin{align*}
R_{a\vee b}(x,y) & = R_a(x,y) \vee R_b(x,y), \\
R_{a;b}(x, y) & = \bigvee_{v \in X} R_a(x,v) \odot R_b(v,y), \\
R_{a^\ast}(x,y) &= e_x(y) \vee \bigvee_{v_i \in  X} R_a(x,v_1) \odot R_a(v_1,v_2) \odot \dots  R_a(v_{n-1}, v_n) \odot R_a(v_n, y),\\
R_{\test(\psi)}(x,y) &= \begin{cases}
\sem{\psi}(x) &\text{ if }x = y \\
0 &\text{ if } x \neq y
\end{cases}
\end{align*}
in every standard model. The predicate liftings $\Lambda \subseteq \{ \lambda^\Box, \lambda^\Diamond \}$ correspond to $\alg{A}$-valued modal logic on $\alg{A}$-labelled frames (Example~\ref{example:PredicateLiftings}(2)) and yield the interpretations of modalities (where we abbreviate $\boxm{a} := \diam{a}^{\lambda^\Box}$ and $\diam{a} := \diam{a}^{\lambda^\Diamond}$) via 
$$
\sem{\boxm{a}\varphi}(x) = \bigwedge_{y \in X} R_a(x,y) \rightarrow \sem{\varphi}(y), \quad \text{ and } \quad
\sem{\diam{a}\varphi}(x) = \bigvee_{y \in X} R_a(x,y) \odot \sem{\varphi}(y).  
$$      
With $\Pred = \{\lambda^\Box \}$, this is similar to the finitely-valued PDL considered in \cite{Sedlar2020}\footnote{Therein, $(\cdot)^+$ instead of $(\cdot)^\ast$ is used and tests are excluded.} to argue, among others, about \emph{program costs}. For example, the formula $\diam{\pi_1 \vee \pi_2}\top \rightarrow \diam{\pi_2}\top$ can intuitively be read as `\emph{executing $\pi_1$ is always less costly than $\pi_2$}'.
\hfill $\blacksquare$
\end{exa}

\begin{exa}[\textsf{Two-valued PDL with labelled accessibility relations}]
\label{example:CoalgebraicDynamicLogics-PDL-Labelled2}
We use the same coalgebraic signature functor $\func{F} = \PA$, coalgebra operations $\CoOp = \{ \vee, ; (\cdot)^\ast\}$ and test as in the previous example. However, now we consider the Boolean-valued logic defined by the collection of $2$-predicate liftings $\Lambda = \{ \lambda^r \mid r\in A{\setminus}\{ 0 \} \}$ of Example~\ref{example:PredicateLiftings}(3). This gives rise to the interpretation of modalities (where we abbreviate $\diam{a}^r := \diam{a}^{\lambda^r}$) via 
$$
x \Vdash \diam{a}^r\varphi \Longleftrightarrow \bigvee_{x' \Vdash \varphi} R_a(x,x') \geq r.  
$$    
For example, if $\alg{A}$ is again thought of as describing \emph{costs} of programs, the formula $\diam{\pi}^r \varphi$ can intuitively be read as `\emph{reaching $\varphi$ after executing $\pi$ may cost more than $r$}'.
\hfill $\blacksquare$
\end{exa}

\begin{exa}[\textsf{Many-valued game logic}]
\label{example:CoalgebraicDynamicLogics-GL}
With the coalgebraic signature functor $\func{F} = \MA$, a dynamic model $\gamma \colon \Act \to (\MA  X)^X$ can be identified with a structure $(X, \{\nu_a\}_{a\in\Act})$ containing an $\alg{A}$-neighbourhood assignment $\nu_a \colon X \to A^{A^X}$ for every action $a \in \Act$ (and these assignments are monotone in the sense that $\sigma_1 \leq^\pw \sigma_2 \Rightarrow \nu_a(x)(\sigma_1) \leq \nu_a(x)(\sigma_2)$). The coalgebra operations $\CoOp = \{ \vee, \wedge, \dual, ; , (\cdot)^\ast \}$ and the test of Example~\ref{example:TestOperations}(3) correspond to the rules
\begin{align*}
\nu_{a\vee b}(x)(\sigma) & = \nu_a(x)(\sigma) \vee \nu_b(x)(\sigma), \\
\nu_{a\wedge b}(x)(\sigma) & = \nu_a(x)(\sigma) \wedge \nu_b(x)(\sigma), \\
\nu_{a^\partial}(x)(\sigma) & = \neg \nu_a(x)(\neg\sigma), \\
\nu_{a;b}(x)(\sigma) &= \nu_a(x)(\ev_\sigma \circ \nu_b), \\
\nu_{a^\ast}(x)(\sigma) &= \sigma(x) \vee \bigvee_{n\geq 1} \nu^{[n]}_a(x)(\sigma), \\
\nu_{\test(\psi)}(x)(\sigma) &= \sigma(x) \odot \sem{\psi}(x)
\end{align*}
in every standard model (note that $\ev_\sigma \colon A^{A^X} \to A$ so the fourth rule is well-defined). The predicate lifting $\Lambda = \{ \lambda^\ev\}$ of Example~\ref{example:PredicateLiftings}(4) yields the interpretations of modalities (where we abbreviate $\diam{a} := \diam{a}^{\lambda^\ev}$) via 
\begin{align*}
\sem{\diam{a}\varphi}(x) &= \nu_a(x)(\sem{\varphi}).  
\end{align*}  
This logic can be seen as describing $\alg{A}$-valued strategic ability in two-player games (say between Angel and Demon) as follows. 
Angel wants to maximise truth-values, while Demon wants to minimise them.
Suppose that for each game $\alpha$, we have a monotonic neighbourhood function $S_\alpha \colon X \to \M X$ where $U \in S_\alpha(x)$ 
means that Angel has a strategy when playing $\alpha$ in state $x$ to ensure that the outcome is in $U$. 
Let 
$$
N_\alpha(x)(\sigma) = \bigvee_{U \in S_\alpha(x)} \bigwedge_{y \in U} \sigma(y)
$$
for all states $x \in X$ and all $\sigma \in A^X$ (it is easy to verify monotonicity using that $S_\alpha$ is monotonic).
One then obtains $\sem{\diam{\alpha}\phi}(x) = \bigvee_{U \in S_\alpha(x)} \bigwedge_{y \in U} \sem{\phi}(y)$, that is, $\sem{\diam{\alpha}\phi}(x)=r$ means that at state $x$, Angel has a strategy in $\alpha$ to ensure $\sem{\phi}(y) \geq r$ for any outcome $y$.
In an angelic test $\diam{\psi?}\phi$, the truth value of $\psi$ constrains the overall truth value, and hence can be seen as a challenge for Angel (also see Example~\ref{example:TestOperations}(3)).
The operations $;$ , $\dual$ and $\vee$ are consistent with the above interpretation and the corresponding operations on $S$ in classic game logic \cite{PaulyParikh03}. In particular, for all $\sigma \in A^X$, $N_{\alpha^\partial}(x)(\sigma) = \neg N_{\alpha}(x)(\neg \sigma)$ is 
the least $\sigma$-value that Demon can ensure when playing $\alpha$ at $x$. 
If $\alg{A}$ is not linear or $\neg$ is not involutive then $\dual$ can no longer be understood as introducing the other player (negation is involutive, \emph{e.g.}, in \L ukasiewicz logic, but not in G\"odel and product logic). 
\hfill $\blacksquare$
\end{exa}

\begin{exa}[\textsf{Instantial PDL}]
\label{example:CoalgebraicDynamicLogics-IPDL}
With the coalgebraic signature functor
$\func{F} = \mathcal{P} \circ \mathcal{P}$, a dynamic model $\gamma \colon \Act \to (\func{F}X)^X$ can be identified with a structure $(X, \{N_a\}_{a\in\Act})$ containing a neighbourhood assignment $N_a \colon X \to \mathcal{PP}X$ for every action $a \in \Act$. The coalgebra operations $\CoOp = \{ \Cup, ;,(\cdot)^\ast\}$ and the test of Example~\ref{example:TestOperations}(4) correspond to the rules
\begin{align*}
N_{a \Cup b}(x) & = \{ Z \mid \exists Z' \in N_a, Z'' \in N_b \colon Z = Z' \cup Z''  \}, \\
N_{a;b}(x) & = \big\lbrace \bigcup \mathcal{F} \mid \exists Z \in N_a(x) \colon \mathcal{F} \subseteq \bigcup_{z \in Z} N_b(z) \wedge \forall z \in Z: \mathcal{F} \cap N_b(z) \neq \varnothing\big\rbrace, \\
N_{a^\ast}(x) &= \big\lbrace\!\{ x \}\!\big\rbrace\cup \bigcup_{n \geq 1} N_a^{[n]}(x)  \\
N_{\test(\psi)}(x) &= \begin{cases}
\big\lbrace\!\lbrace x \rbrace\!\big\rbrace & \text{if } x \Vdash \psi \\
\varnothing & \text{otherwise}
\end{cases} 
\end{align*}
in every standard model. The predicate liftings $\Lambda = \{ \lambda^{k+1} \mid k \in \mathbb{N} \}$ of Example~\ref{example:PredicateLiftings}(5) give rise to the semantics of modal formulas (where we abbreviate $\diam{a}^{k+1} := \diam{a}^{\lambda^{k+1}}$) via 
$$
x \Vdash \diam{a}^{k+1}(\varphi_1, \dots, \varphi_k, \varphi) \Leftrightarrow \exists Z \in N_a(x) \colon Z \subseteq \sem{\varphi} \text{ and } Z \cap \sem{\varphi_i} \neq \varnothing \text{ for all } i = 1, \dots, k.  
$$
This logic is called \emph{Instantial PDL} and was introduced in \cite{vanBenthemBezhanishviliEnqvist2019} as a modal logic for \emph{computation in open systems}.
\hfill $\blacksquare$
\end{exa}
Note that Example~\ref{example: INSTPDL} shows that our framework generalises that of \cite{HansenKupkeLeal2014, HansenKupke2015} already in the two-valued case, as it uses a coalgebraic signature functor which is not a monad, as well as countably many predicate liftings which of arbitrary finite arities.      

In the next section, we discuss the topic of safety with respect to bisimulation and behavioural equivalence in this general coalgebraic setting.

\section{Safety and logical invariance}\label{section:SafetyAndInvariance}
A fundamental compositional aspect of dynamic logics such as PDL and game logic is that operations and tests are safe for bisimilarity \cite{vanBenthem1998, Pauly2000}, meaning that it suffices to check bisimilarity for the atomic actions to conclude bisimilarity for all actions. 
In our coalgebraic setting (where bisimilarity and behavioral equivalence may differ), 
safety can conveniently be defined as requiring that the operation preserves coalgebra morphisms. That is, while in Definition~\ref{definition:CoalgebraOperation} we defined coalgebra operations as functors $O \colon \Coalg{\func{F}^n}_\ob \to \Coalg{\func{F}}_{\ob}$ on the discrete categories of coalgebras only, safe operations can be seen as similar functors on the full categories of coalgebras.

\begin{defi}[\textsf{Safe coalgebra operations}]\label{definition:SafeCoalgebraOperation}
A $n$-ary coalgebra operation $O$ is \emph{safe} if it also defines (via $Of = f$) a functor $O \colon \Coalg{\func{F}^n} \to \Coalg{\func{F}}$ with $\car_\func{F} \circ O = \car_{\func{F}^n}$, that is, for which the following diagram similarly to Definition~\ref{definition:CoalgebraOperation} commutes: 
$$ 
\begin{tikzcd}[row sep = small]
\Coalg{\func{F}^n} \arrow[rr, "O"] \arrow[dr, "\car_{\func{F}^n}"']
&& \Coalg{{\func{F}}} \arrow[dl, "\car_{\func{F}}"] \\
& \Set &
\end{tikzcd}  
$$
\end{defi}      

In other words, safety ensures for coalgebras $\gamma_1,\dots,\gamma_n\colon X\to\func{F}X$ and $\gamma'_1,\dots,\gamma'_n\colon Y\to\func{F}Y$ that if $f\colon X \to Y$ is a joint coalgebra morphism $\gamma_i \to \gamma'_i$, then $f$ is also a coalgebra morphism $O(\vec{\gamma}) \to O(\vec{\gamma}')$.

For instance, the coalgebra operations of Examples~\ref{example:CoalgebraOperations-Kleisli}--\ref{example:CoalgebraOperations-CounterDomain} are all safe, which can be shown directly (in particular, for iteration constructs see Proposition~\ref{proposition:IterationIsSafe}), or as a consequence of Proposition~\ref{proposition:ReducibleImpliesSafe} later on. On the contrary, not all pointwise induced operations of Example~\ref{example:CoalgebraOperations-Induced} are safe. As an example, for the functor $\PA$ it holds that $\vee$ induces a safe coalgebra operation but $\wedge$ does not, the reason behind this being as follows. 

\begin{lem}\label{lemma:SafeInducedOperations}
Let $O^\theta\colon \Coalg{\func{F}^n}_\ob \to \Coalg{\func{F}}_\ob$ be induced by the $\Set$-indexed collection of maps $\theta_X \colon \func{F}^n X \to \func{F}X$. Then $O^\theta$ is safe if and only if $\theta$ is a natural transformation $\func{F}^n \Rightarrow \func{F}$. 
\end{lem}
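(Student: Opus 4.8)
The plan is to prove both implications of the biconditional directly from the definitions, using the fact that a coalgebra operation is safe exactly when every joint coalgebra morphism for the inputs is automatically a coalgebra morphism for the output.

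For the "if" direction, suppose $\theta \colon \func{F}^n \Rightarrow \func{F}$ is a natural transformation. I would take coalgebras $\vec{\gamma} = (\gamma_1, \dots, \gamma_n) \colon X \to \func{F}^n X$ and $\vec{\gamma}' = (\gamma'_1, \dots, \gamma'_n) \colon Y \to \func{F}^n Y$, and a map $f \colon X \to Y$ that is a joint coalgebra morphism, \emph{i.e.}, $\func{F}f \circ \gamma_i = \gamma'_i \circ f$ for each $i$, equivalently $\func{F}^n f \circ \vec{\gamma} = \vec{\gamma}' \circ f$. I need to check $\func{F}f \circ O^\theta(\vec{\gamma}) = O^\theta(\vec{\gamma}') \circ f$. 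Unfolding the definition $O^\theta(\vec{\gamma}) = \theta_X \circ \vec{\gamma}$, the left-hand side is $\func{F}f \circ \theta_X \circ \vec{\gamma}$, which by naturality of $\theta$ (the square $\func{F}f \circ \theta_X = \theta_Y \circ \func{F}^n f$) equals $\theta_Y \circ \func{F}^n f \circ \vec{\gamma} = \theta_Y \circ \vec{\gamma}' \circ f = O^\theta(\vec{\gamma}') \circ f$. This is a short diagram chase.

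For the "only if" direction, suppose $O^\theta$ is safe; I want to show every naturality square for $\theta$ commutes, \emph{i.e.}, $\func{F}f \circ \theta_X = \theta_Y \circ \func{F}^n f$ for every $f \colon X \to Y$. The key trick is to choose witnessing coalgebras that make the hypothesis bite: given $f \colon X \to Y$, consider on $X$ the "identity-shaped" coalgebras, but more precisely I would use a cofree-type construction. Concretely, take $Z = \func{F}^n X$ as a set (or rather work with the projections), and let $\vec{\gamma} \colon \func{F}^n X \to \func{F}^n(\func{F}^n X)$... actually the cleanest choice: for a fixed element $t \in \func{F}^n X$, I want a coalgebra on a set $W$ with a point $w$ such that $\vec{\gamma}(w) = t$ in a way compatible with a morphism to a coalgebra over $Y$ built from $\func{F}^n f(t)$. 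The standard move (as in safety arguments in coalgebra) is: take any $t \in \func{F}^n X$, build the one-step unfolding, or simply observe that for arbitrary sets $X$, $Y$, and arbitrary $f$, naturality at an element $t \in \func{F}^n X$ can be tested by picking coalgebras $\vec{\gamma} \colon X \to \func{F}^n X$ and $\vec{\gamma}' \colon Y \to \func{F}^n Y$ with $\vec{\gamma}$ "hitting" $t$ and $f$ a joint morphism. Since we need this for all $t$ simultaneously, the right choice is: let $\vec{\gamma}$ be \emph{any} $\func{F}^n$-coalgebra on $X$ and set $\vec{\gamma}' = \func{F}^n f \circ \vec{\gamma} \circ g$ where... no — the issue is $f$ need not be surjective or injective. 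The robust approach: use the $\func{F}^n$-coalgebra $\vec{\gamma} \colon X \to \func{F}^n X$ that is arbitrary, and on $Y$ we do \emph{not} get a canonical coalgebra unless $f$ is nice. Instead, replace $Y$ by $X$ via a different device: consider two copies. Actually the genuinely clean argument uses that naturality of $\theta$ is equivalent to: for the specific diagram where we take $Y$ arbitrary, $X$ arbitrary, $f$ arbitrary, pick $\vec{\gamma}$ on $X$ freely and observe that the composite $h = \func{F}^n f \circ \vec{\gamma} \colon X \to \func{F}^n Y$ need not factor through $Y$. So the honest fix is to build a coalgebra on the \emph{disjoint union} $X + Y$ or to use cofree coalgebras. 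I expect \textbf{this is the main obstacle}: producing, for an arbitrary $f \colon X \to Y$ and arbitrary $t \in \func{F}^n X$, a pair of coalgebras and a joint morphism $f' $ realizing the relevant instance of naturality. The resolution I would use: given $f \colon X \to Y$, define on the set $X$ a coalgebra $\vec{\gamma} \colon X \to \func{F}^n X$ as... hmm, we need $\func{F}X$ to have enough points. Simplest correct version: take the coalgebra on $\func{F}^n X \sqcup X$... Let me instead use the "two-state-ish" idea properly: pick $Z$ to be a set with a map $\vec{\gamma}\colon Z \to \func{F}^n Z$; actually the cleanest: let $W = X$, $\vec{\gamma} = $ anything, $W' = Y$; we need $f$ a joint morphism which forces $\vec{\gamma}' \circ f = \func{F}^n f \circ \vec{\gamma}$; this only constrains $\vec{\gamma}'$ on the image of $f$. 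If $f$ is not surjective, extend $\vec{\gamma}'$ arbitrarily off the image — fine, a coalgebra just needs to be \emph{some} map $Y \to \func{F}^n Y$. So given arbitrary $f \colon X \to Y$: pick $\vec{\gamma}\colon X \to \func{F}^n X$ to be any coalgebra whose image is all of $\func{F}^n X$ — this requires $|X| \geq |\func{F}^n X|$, which may fail. To sidestep cardinality issues, replace $X$ by $X' = X + \func{F}^n X$ and $Y' = Y + \func{F}^n X$ with $f' = f + \id$, define $\vec{\gamma}'\colon X' \to \func{F}^n X'$ using the inclusion $\func{F}^n X \hookrightarrow \func{F}^n X'$ and similarly on $Y'$, so that $\vec{\gamma}'$ is surjective onto a copy of $\func{F}^n X$, with $f'$ a joint morphism by naturality-free bookkeeping; then safety gives $\func{F}f' \circ \theta_{X'} \circ \vec{\gamma}' = \theta_{Y'} \circ \func{F}^n f' \circ \vec{\gamma}'$, and restricting along the surjection and using functoriality of $\func{F}$, $\func{F}^n$ applied to the inclusions, recovers $\func{F}f \circ \theta_X = \theta_Y \circ \func{F}^n f$. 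I would present this extension construction carefully, as it is the one subtle point; everything else is routine diagram-chasing. (I would also remark that the concrete consequence for $\PA$ — $\vee$ safe, $\wedge$ not — then follows since $\vee \colon \PA^2 \Rightarrow \PA$ is natural while $\wedge$ is not, as $\PA f$ is defined via joins.)

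Summarizing the order of steps: (1) state the two directions; (2) do the easy "if" direction by a one-line naturality chase; (3) for "only if", reduce to checking a single naturality square $\func{F}f\circ\theta_X = \theta_Y\circ\func{F}^n f$; (4) construct witnessing coalgebras on enlarged carriers $X + \func{F}^n X$ and $Y + \func{F}^n X$ with a surjective-enough $\vec{\gamma}'$ and $f' = f + \id$ a joint coalgebra morphism; (5) apply safety and restrict along the coproduct inclusions to conclude. The main obstacle is step (4): ensuring the chosen coalgebras have enough points that safety actually pins down $\theta$ on all of $\func{F}^n X$, which the coproduct-with-$\func{F}^n X$ trick handles.
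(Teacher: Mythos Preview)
Your ``if'' direction is correct and is exactly the paper's argument. Your ``only if'' direction, however, takes an unnecessarily complicated route and, as sketched, has a circularity. After applying safety to $f' = f + \id \colon X' \to Y'$ you obtain, at a point $t$ in the $\func{F}^n X$-summand, an equation of the shape
\[
\func{F}f'\big(\theta_{X'}(\func{F}^n\iota_X(t))\big) \;=\; \theta_{Y'}\big(\func{F}^n\iota_Y(\func{F}^n f(t))\big),
\]
where $\iota_X\colon X \hookrightarrow X'$ and $\iota_Y\colon Y \hookrightarrow Y'$ are the coproduct inclusions. To ``restrict along the inclusions'' and turn this into $\func{F}f(\theta_X(t)) = \theta_Y(\func{F}^n f(t))$ you would need $\theta_{X'}\circ\func{F}^n\iota_X = \func{F}\iota_X\circ\theta_X$ and the analogous identity for $\iota_Y$ --- but that is precisely naturality of $\theta$ at the inclusions, which is part of what you are trying to establish. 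The phrase ``using functoriality of $\func{F}$, $\func{F}^n$ applied to the inclusions'' does not supply this, since functoriality of $\func{F}$ and $\func{F}^n$ says nothing about how the \emph{non}-natural family $\theta$ interacts with those maps.

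The paper avoids all of this with a one-line device you passed over: you do not need a single coalgebra whose image is all of $\func{F}^n X$; it suffices to hit one element at a time. Arguing by contrapositive, if $\theta$ fails naturality at some $f\colon X\to Y$ and some $t=(t_1,\dots,t_n)\in\func{F}^nX$, take the \emph{constant} coalgebras $\gamma_i(x)=t_i$ on $X$ and $\gamma'_i(y)=\func{F}f(t_i)$ on $Y$. Then $f$ is trivially a joint coalgebra morphism (both sides of each square are the constant map $x\mapsto \func{F}f(t_i)$), and the safety condition evaluated at any point of $X$ reads $\func{F}f(\theta_X(t)) = \theta_Y(\func{F}^n f(t))$, contradicting the chosen witness. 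This also sidesteps the well-definedness issue you did not address when $f$ is not injective (with a non-constant $\vec{\gamma}$, the prescription $\vec{\gamma}'(f(x)) := \func{F}^n f(\vec{\gamma}(x))$ need not be single-valued on the image of $f$), and it dissolves the cardinality obstacle that sent you to coproducts in the first place.
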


\begin{proof}
Assuming $\theta$ is natural, we must show that the outer diagram below commutes for any joint coalgebra morphism $f \colon \vec{\gamma} \to\vec{\gamma}'$. But clearly the left subdiagram commutes (this is the definition of $f$ being a joint coalgebra morphism) and the right subdiagram commutes (by naturality of $\theta$).   
$$ 
\begin{tikzcd}[row sep = large, column sep = large]
X \arrow[r, "{\vec{\gamma}}"] \arrow[d, "f"']
& \func{F}^n X \arrow[r, "{\theta_X}"] \arrow[d, "\func{F}^n f"]
&\func{F}X \arrow[d, "\func{F}f"] \\
Y \arrow[r,"{\vec{\gamma}'}"']
&  \func{F}^n Y \arrow[r, "\theta_Y"']
& \func{F}Y
\end{tikzcd}  
$$
Conversely, if $\theta$ does not define a natural transformation, there is a map $f\colon X \to Y$ and a tuple $(t_1, \dots, t_n) \in \func{F}^n X$ such that $\func{F}f(\theta_X(t_1,\dots,t_n)) \neq \theta_Y(\func{F}f(t_1), \dots, \func{F}f(t_n))$. With the constant coalgebras $\gamma_i(x) = t_i$ and $\gamma_i'(y) = \func{F}f(t_i)$, this $f$ witnesses that $O^\theta$ is not safe. 
\end{proof} 

For further illustration, we show that the \emph{compositions} of Examples~\ref{example:CoalgebraOperations-Kleisli} \&~\ref{example:CoalgebraOperations-CompositionDoubleMonad} as well as \emph{iterations} of Example~\ref{example:CoalgebraOperations-Iteration} are always safe.   

\begin{prop}\label{proposition:IterationIsSafe}
For any monad $(\func{F}, \eta, \mu)$, Kleisli composition and Kleisli iteration are safe. For the double monad $\func{F} \circ \func{F}$, compositions and iterations of Examples~\ref{example:CoalgebraOperations-CompositionDoubleMonad}--\ref{example:CoalgebraOperations-Iteration} are safe.  
\end{prop}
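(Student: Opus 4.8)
The plan is to reduce everything to the naturality of a single transformation and then appeal to (the proof technique of) Lemma~\ref{lemma:SafeInducedOperations}, suitably adapted from unary operations $\theta\colon \func{F}^n \Rightarrow \func{F}$ to the composite-of-functors situation. The key observation is that each of the operations in question is defined by a formula of the shape $O(\vec\gamma) = \Theta_X \circ \func{G}\vec\gamma \circ (\text{something})$ where $\Theta$ is a natural transformation built from $\eta$, $\mu$ and $\func{F}$-images thereof, and $\func{G}$ is $\func{F}$ or $\func{FF}$ applied to the tuple. So safety will follow by pasting naturality squares, exactly as in the diagram in the proof of Lemma~\ref{lemma:SafeInducedOperations}.

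First I would treat Kleisli composition. Given a joint coalgebra morphism $f\colon (\gamma_1,\gamma_2)\to(\gamma_1',\gamma_2')$, I must show $\func{F}f\circ(\gamma_1;\gamma_2) = (\gamma_1';\gamma_2')\circ f$. Unfolding $\gamma_1;\gamma_2 = \mu_X\circ\func{F}\gamma_2\circ\gamma_1$, this is a diagram chase through three squares: the square for $f\colon\gamma_1\to\gamma_1'$, the square $\func{F}f\circ\func{F}\gamma_2 = \func{F}(\func{F}f\circ\gamma_2) = \func{F}(\gamma_2'\circ f) = \func{F}\gamma_2'\circ\func{F}f$ obtained from functoriality of $\func{F}$ and the square for $f\colon\gamma_2\to\gamma_2'$, and finally the naturality square for $\mu$ at $f$. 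Kleisli iteration is handled by induction: $\gamma^{[0]}=\eta_X$ is a coalgebra morphism target because $\eta$ is natural; $\gamma^{[n+1]}=\gamma;\gamma^{[n]}$ is safe by the composition case together with the inductive hypothesis (here one uses that safety of Kleisli composition applies to the pair $(\gamma,\gamma^{[n]})$, noting $f$ is a morphism $\gamma^{[n]}\to\gamma'^{[n]}$ by IH); and $\gamma^\ast=\bigsqcup_{n\in\omega}\gamma^{[n]}$ is safe because $\bigsqcup\colon\mathcal{P}\func{F}\Rightarrow\func{F}$ is assumed natural, so $\func{F}f(\bigsqcup_n\gamma^{[n]}(x)) = \bigsqcup_n\func{F}f(\gamma^{[n]}(x)) = \bigsqcup_n(\gamma'^{[n]}\circ f)(x)$.

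For the double monad $\func{FF}$, the compositions $;$ and $\star$ are of the same form: $\gamma_1;\gamma_2 = M_X\circ\func{FF}\gamma_2\circ\gamma_1$ with $M=\func{F}\mu\circ\func{FF}\mu$, and $\gamma_1\star\gamma_2 = M'_X\circ\func{FF}\gamma_2\circ\gamma_1$ with $M'=\mu_\func{F}\circ\mu_\func{FF}$. Since $M$ and $M'$ are composites of whiskerings of the natural transformation $\mu$ with the functor $\func{F}$, they are themselves natural transformations $\func{FFFF}\Rightarrow\func{FF}$; hence safety follows by the identical three-square chase as for Kleisli composition, now with $\func{FF}$ in place of $\func{F}$ and $M$ (resp.\ $M'$) in place of $\mu$. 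The iterations $(\cdot)^\ast$ and $(\cdot)^\star$ for $\func{FF}$ are then handled by the same induction as above, using that the base case $\func{F}\eta_X\circ\eta_X$ is natural in $X$ and invoking the now-established safety of $;$ (resp.\ $\star$) in the inductive step and naturality of $\bigsqcup$ for the join. I do not expect a genuine obstacle here; the only point requiring a little care is making the bookkeeping of the induction for iteration precise (that $f$ remains a joint morphism at every stage), and being explicit that $M$, $M'$ and the base points $\func{F}\eta\circ\eta$ really are natural transformations rather than merely $\Set$-indexed families — but this is immediate from the monad axioms and the interchange law for whiskering, so it amounts to a remark rather than an argument.
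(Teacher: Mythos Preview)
Your proposal is correct and follows essentially the same route as the paper's proof: a three-square diagram chase for composition (morphism square for $\gamma_1$, $\func{F}$-image of the morphism square for $\gamma_2$, naturality of $\mu$ or $M$/$M'$), followed by induction plus naturality of $\bigsqcup$ for iteration. The paper is slightly terser about the double-monad case (it simply remarks that the argument carries over because everything is built from natural transformations), whereas you spell out that $M$, $M'$ and $\func{F}\eta\circ\eta$ are natural; this is a presentational difference only.
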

\begin{proof}
To see that Kleisli composition is safe, let $f\colon X \to Y$ be a joint coalgebra morphism $(\gamma_1, \gamma_2) \to (\gamma_1', \gamma_2')$. Then $f$ also is a coalgebra morphism $f\colon\gamma_1;\gamma_2 \to \gamma'_1;\gamma'_2$ since the outer diagram
$$ 
\begin{tikzcd}[row sep = large, column sep = large]
X \arrow[r, "{\gamma_1}"] \arrow[d, "f"']
& \func{F}X \arrow[r, "{\func{F}\gamma_2}"] \arrow[d, "\func{F} f"']
& \func{FF}X \arrow[d, "\func{FF}f"] \arrow[r, "\mu_X"]
& \func{F}X \arrow[d, "\func{F}"]
\\
Y \arrow[r,"\gamma'_1"']
& \func{F}Y \arrow[r, "\func{F}\gamma'_2"']
& \func{FF}Y \arrow[r, "\mu_Y"']
& \func{F}Y
\end{tikzcd}  
$$  
commutes (because $f \colon \gamma_i \to \gamma'_i$ and $\mu$ is a natural transformation). 

For the case of Kleisli iteration, let $f\colon X \to Y$ be a coalgebra morphism $\gamma \to \tilde{\gamma}$. In order to show that $f$ is also a coalgebra morphism $\gamma^\ast \to \tilde{\gamma}^\ast$, first we use naturality of $\bigsqcup$ to find 
$$
\func{F}f\big(\gamma^\ast(x)\big) = \func{F}f\big(\bigsqcup_{n\in\omega} \gamma^{[n]}(x)\big) = \bigsqcup_{n\in\omega} \func{F}f\big(\gamma^{[n]}(x)\big).
$$  
Thus, we are done once we show $\func{F}f \circ \gamma^{[n]} = \tilde{\gamma}^{[n]} \circ f$ (\emph{i.e.}, $f \colon \gamma^{[n]} \to \tilde{\gamma}^{[n]}$) for all $n$, which we do by induction. The case $n = 0$ is simply naturality of $\eta$ (recall $\gamma^{[0]} = \eta_X$ and $\tilde{\gamma}^{[0]} = \eta_Y$). If $f\colon \gamma^{[n]} \to \tilde{\gamma}^{[n]}$, then we also get $f \colon \gamma;\gamma^{[n]} \to \tilde{\gamma};\tilde{\gamma}^{[n]}$ by our (initial) assumption $f\colon \gamma \to \tilde{\gamma}$ together with safety of $;$ as shown above.  

Since the above proofs only depend on the fact that the components involved are defined via various natural transformations, it is straightforward to show the same for the compositions and iterations with respect to double monads.
\end{proof}

Next we define safety of \emph{tests} (Definition~\ref{definition:TestOperation}) by functoriality on morphisms analogous to safety of coalgebra operations. 

\begin{defi}[\textsf{Safe tests}]\label{definition:SafeTestOperation}
An $A$-test for $\func{F}$ is \emph{safe} if it also defines (via $\test f = f$) a functor $\test \colon \Coalg{\func{A}} \to \Coalg{\func{F}}$ with $\car_\func{A} \circ O = \car_\func{F}$, that is, for which the following diagram similarly to Definition~\ref{definition:TestOperation} commutes:
$$ 
\begin{tikzcd}[row sep = small]
\Coalg{\func{A}} \arrow[rr, "\test"] \arrow[dr, "\car_{\func{A}}"']
&& \Coalg{{\func{F}}} \arrow[dl, "\car_{\func{F}}"] \\
& \Set &
\end{tikzcd}   
$$ 
\end{defi}

Similarly to Lemma~\ref{lemma:SafeInducedOperations}, we proceed to show that safety of tests can also be characterized by a certain \emph{naturality condition}. In the following, we use $\func{F}(\text{-})^{A^{(\text{-})}}$ to denote the $\Set$-functor defined by $X \mapsto (\func{F}X)^{A^X}$ on objects and $f \mapsto \func{F}f \circ (-) \circ A^f$ on morphisms.

\begin{lem}\label{lemma:SafetyTestOperations}
Let $\test$ be an $A$-test for $\func{F}$. Then the following are equivalent. 
\begin{enumerate}[(i)]
\item The functor $\test$ is a safe test. 
\item The transposes $\widehat{\test}_X \colon X \to \func{F}X^{A^X}$ define a natural transformation $\func{Id} \Rightarrow \func{F}(\text{-})^{A^{(\text{-})}}$.
\item For every $f \colon X \to Y$, the following diagram commutes:
$$ 
\begin{tikzcd}[row sep = tiny, column sep = large]
A^Y \arrow[r, "\test_Y"] \arrow[dd, "(\text{--})\circ f"'] & (\func{F}Y)^Y \arrow[dr, near start, "(\text{--})\circ f"] & 
\\
& & (\func{F}Y)^X 
\\
A^X \arrow[r,"\test_X"'] &  (\func{F}X)^X \arrow[ur, near start, "\func{F}f \circ (\text{--})"'] &
\end{tikzcd}  
$$ 
\end{enumerate}  
\end{lem}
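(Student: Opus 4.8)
The plan is to prove the cycle of equivalences $(i) \Rightarrow (ii) \Rightarrow (iii) \Rightarrow (i)$, treating the content as essentially an unwinding of definitions through the hom-set adjunction $\Set(X, Z^{A^X})$ versus $\Set(A^X, Z^X)$ that relates a test $\test_X \colon A^X \to (\func{F}X)^X$ with its transpose $\widehat{\test}_X \colon X \to (\func{F}X)^{A^X}$. The core observation, which I would state first, is what it means for $f \colon X \to Y$ to be a coalgebra morphism $\test(\sigma) \to \test(\tau)$ for $\sigma \in A^Y$ and $\tau = \sigma \circ f \in A^X$ (the latter being $\car_\func{A}$ applied to $f$ viewed as an $\func{A}$-coalgebra morphism): by Definition~\ref{definition:coalgebras} this says $\func{F}f \circ \test(\sigma \circ f) = \test(\sigma) \circ f$ as maps $X \to \func{F}Y$. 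Since $\test$ defines $\test f = f$, safety (Definition~\ref{definition:SafeTestOperation}) is \emph{exactly} the requirement that this identity holds for all $f$ and all $\sigma \in A^Y$, which is precisely the commutativity of the diagram in $(iii)$ read along its two paths from $A^Y$: the upper path sends $\sigma \mapsto \test_Y(\sigma) \mapsto \test_Y(\sigma) \circ f$, the lower path sends $\sigma \mapsto \sigma \circ f \mapsto \test_X(\sigma \circ f) \mapsto \func{F}f \circ \test_X(\sigma \circ f)$. So $(i) \Leftrightarrow (iii)$ is immediate once one checks that the only $\func{A}$-coalgebra morphisms $X \to Y$ are the maps $f$ for which source and target $A$-predicates are related by $(\text{--}) \circ f$, which follows since $\func{A}$ is the constant functor and $\func{A}f = \id_A$, so $\func{A}f \circ \sigma' = \sigma'$ must equal $\sigma \circ f$.

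For $(iii) \Leftrightarrow (ii)$ I would transpose. Fix $f \colon X \to Y$. Evaluating the diagram of $(iii)$ at $\sigma \in A^Y$ and then at $x \in X$ gives $\func{F}f\big(\test_X(\sigma \circ f)(x)\big) = \test_Y(\sigma)(f(x))$. Now $\test_X(\sigma \circ f)(x) = \widehat{\test}_X(x)(\sigma \circ f) = \big(\widehat{\test}_X(x) \circ A^f\big)(\sigma)$ where $A^f = (\text{--}) \circ f \colon A^Y \to A^X$, and $\test_Y(\sigma)(f(x)) = \widehat{\test}_Y(f(x))(\sigma)$. Hence the identity becomes, for all $\sigma$, $\func{F}f\big(\widehat{\test}_X(x) \circ A^f\big)(\sigma) = \widehat{\test}_Y(f(x))(\sigma)$, i.e. $\func{F}f \circ \widehat{\test}_X(x) \circ A^f = \widehat{\test}_Y(f(x))$ in $(\func{F}Y)^{A^Y}$. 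Since the action of the functor $\func{F}(\text{-})^{A^{(\text{-})}}$ on $f$ is $g \mapsto \func{F}f \circ g \circ A^f$, this says exactly $\big(\func{F}(\text{-})^{A^{(\text{-})}}\big)(f) \circ \widehat{\test}_X = \widehat{\test}_Y \circ f$, which is the naturality square of $(ii)$. As $x$ and $f$ were arbitrary, the two conditions are equivalent.

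The main thing to be careful about — rather than a genuine obstacle — is keeping the two distinct ``precompose with $f$'' maps straight: one is $A^f \colon A^Y \to A^X$ acting on predicates, the other is $(\text{--}) \circ f \colon (\func{F}Y)^Y \to (\func{F}Y)^X$ acting on the first coordinate of a coalgebra-with-carrier, and the diagram in $(iii)$ mixes both, so I would be explicit about which is which and in particular verify that the functor $\func{F}(\text{-})^{A^{(\text{-})}}$ as defined in the paper ($f \mapsto \func{F}f \circ (-) \circ A^f$) is indeed a functor (preserves identities and composition) so that $(ii)$ is a well-posed naturality statement. I would close by remarking that this is the exact test-analogue of Lemma~\ref{lemma:SafeInducedOperations}, with the constant functor $\func{A}$ in place of $\func{F}^n$, and that consequently the transpose picture $\widehat{\test} \colon \func{Id} \Rightarrow \func{F}(\text{-})^{A^{(\text{-})}}$ parallels the transpose picture $\widehat{\lambda} \colon \func{F} \Rightarrow \kNA$ used for predicate liftings.
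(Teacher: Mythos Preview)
Your argument is correct and matches the paper's proof: both establish $(i)\Leftrightarrow(iii)$ by observing that $\func{A}$-coalgebra morphisms $f\colon X\to Y$ are exactly maps with source predicate $\sigma'\circ f$ for some $\sigma'\in A^Y$, and both treat $(ii)\Leftrightarrow(iii)$ as unwinding the transpose (the paper calls this ``immediate by definition'', you spell it out). One small slip: where you write ``coalgebra morphism $\test(\sigma)\to\test(\tau)$ for $\sigma\in A^Y$ and $\tau=\sigma\circ f\in A^X$'' the arrow should point the other way, $\test(\tau)\to\test(\sigma)$, since $f\colon X\to Y$ goes from the $X$-carrier to the $Y$-carrier; your subsequent equation $\func{F}f\circ\test(\sigma\circ f)=\test(\sigma)\circ f$ is correct and the argument is unaffected.
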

\begin{proof}
The equivalence (ii)$\Leftrightarrow$(iii) is immediate by definition of natural transformation and the functor $\func{F}(\text{-})^{A^{(\text{-})}}$ (indeed, spelling out this definition (ii) precisely yields $\func{F}f \circ \test(\sigma' \circ f) = \test(\sigma') \circ f$ for every $f\colon X \to Y$ and $\sigma' \in A^Y$). 

For (iii)$\Rightarrow$(i), first note that $\func{A}$-coalgebras are predicates, say $\sigma \colon X \to A$ and $\sigma'\colon Y \to A$ and a coalgebra morphism $\sigma \to \sigma'$ is simply a map $f \colon X \to Y$ with $\sigma' \circ f = \sigma$. From (iii) together with this fact we get 
$$
\test(\sigma') \circ f = \func{F}f \circ \test(\sigma' \circ f) = \func{F}f \circ \test(\sigma),
$$        
which shows that $f$ is an $\func{F}$-coalgebra morphism $\test(\sigma) \to \test(\sigma')$ as desired.

Similarly for (i)$\Rightarrow$(iii) note that any $f \colon X \to Y$ is an $\func{A}$-coalgebra morphism $\sigma' \circ f \to \sigma'$, thus by safety it is also an $\func{F}$-coalgebra morphism $\test(\sigma' \circ f) \to \test(\sigma')$, which means $\func{F}f \circ \test(\sigma' \circ f) = \test(\sigma') \circ f$ as desired.  
\end{proof} 

Using Lemma~\ref{lemma:SafetyTestOperations}, one can show that all tests of Example~\ref{example:TestOperations} are safe, alternatively this can also be obtained as a consequence of Proposition~\ref{proposition:ReducibleTestsAreSafe} later on.

We now state our general compositionality result via safety for coalgebraic dynamic logics. 

\begin{prop}\label{proposition:CoalgMorphismInvariance}
Let all operations in $\CoOp$ and all tests in $\TeOp$ be safe and let $\gamma \colon \Act \to (\func{F}X)^X$ and $\gamma' \colon \Act \to (\func{F}Y)^Y$ be standard models. If $f \colon \gamma_{a_0} \to \gamma'_{a_0}$ is a coalgebra morphism for all atomic actions $a_0\in\AtAct$ and $\sem{p}' \circ f = \sem{p}$ for all propositional variables $p \in \Prop$, then $f \colon \gamma_{a} \to \gamma'_{a}$ is a coalgebra morphism for \emph{all} actions $a \in \Act$ and $\sem{\varphi}' \circ f = \sem{\varphi}$ for \emph{all} formulas $\varphi \in \Formu$. 
\end{prop}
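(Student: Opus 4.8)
The plan is to proceed by simultaneous induction on the mutual definition of formulas $\varphi \in \Formu$ and actions $a \in \Act$ (Definition~\ref{definition:ActionsFormulas}), proving the two statements in tandem:
\begin{enumerate}[(A)]
\item for every action $a \in \Act$, the map $f$ is a coalgebra morphism $\gamma_a \to \gamma'_a$;
\item for every formula $\varphi \in \Formu$, one has $\sem{\varphi}' \circ f = \sem{\varphi}$.
\end{enumerate}
The base cases are exactly the hypotheses: (A) for $a = a_0 \in \AtAct$ holds by assumption, and (B) for $\varphi = p \in \Prop$ holds by assumption $\sem{p}' \circ f = \sem{p}$.

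For the inductive step of (B) with a propositional connective, $\varphi = o(\varphi_1, \dots, \varphi_{\arity(o)})$, one uses that $\sem{\cdot}$ and $\sem{\cdot}'$ both interpret $o$ by the pointwise extension $o^{\mathrm{pw}}$, so $\sem{\varphi}' \circ f = o^{\mathrm{pw}}(\sem{\varphi_1}', \dots) \circ f = o^{\mathrm{pw}}(\sem{\varphi_1}' \circ f, \dots) = o^{\mathrm{pw}}(\sem{\varphi_1}, \dots) = \sem{\varphi}$ by the induction hypothesis applied to each $\varphi_i$. For the modal case $\varphi = \diam{a}^\lambda(\varphi_1, \dots, \varphi_{\arity(\lambda)})$ we invoke the induction hypothesis (B) for the $\varphi_i$, the induction hypothesis (A) for $a$ (i.e.\ $\func{F}f \circ \gamma_a = \gamma'_a \circ f$), and naturality of the predicate lifting $\lambda$. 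Concretely, $\sem{\varphi}' \circ f = \lambda_Y(\sem{\varphi_1}', \dots) \circ \gamma'_a \circ f = \lambda_Y(\sem{\varphi_1}', \dots) \circ \func{F}f \circ \gamma_a$; naturality of $\lambda \colon (\Acontra)^k \Rightarrow \Acontra \circ \func{F}$ gives $\lambda_Y(\sigma_1', \dots) \circ \func{F}f = \lambda_X(\sigma_1' \circ f, \dots)$, and then the induction hypothesis $\sem{\varphi_i}' \circ f = \sem{\varphi_i}$ finishes it. This is the standard coalgebraic-modal-logic invariance argument, now relativised to a fixed action $a$ whose morphism property we have as an induction hypothesis.

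For the inductive step of (A): if $a = O(a_1, \dots, a_{\arity(O)})$ with $O \in \CoOp$, then by the induction hypothesis $f$ is a joint coalgebra morphism $\gamma_{a_i} \to \gamma'_{a_i}$, so it is also a coalgebra morphism $O(\gamma_{a_1}, \dots) \to O(\gamma'_{a_1}, \dots)$ precisely because $O$ is safe (Definition~\ref{definition:SafeCoalgebraOperation}); and in a standard model $\gamma_{O(a_1,\dots)} = O(\gamma_{a_1}, \dots)$ and likewise for $\gamma'$. If $a = \test(\varphi)$ with $\test \in \TeOp$, then by the induction hypothesis (B) we have $\sem{\varphi}' \circ f = \sem{\varphi}$, i.e.\ $f$ is an $\func{A}$-coalgebra morphism $\sem{\varphi} \to \sem{\varphi}'$, and since $\test$ is safe it follows that $f$ is an $\func{F}$-coalgebra morphism $\test(\sem{\varphi}) \to \test(\sem{\varphi}')$; again $\gamma_{\test(\varphi)} = \test(\sem{\varphi})$ and $\gamma'_{\test(\varphi)} = \test(\sem{\varphi}')$ by standardness.

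The only subtlety worth flagging is the bookkeeping of the mutual induction: the modal case of (B) consumes (A) for the embedded action, and the test case of (A) consumes (B) for the embedded formula, so one should phrase the induction over the combined well-founded structure on $\Formu \sqcup \Act$ (or over the shared generation rank) rather than over formulas and actions separately. I do not expect a genuine obstacle here — every step is either an assumption, the definition of a standard model, naturality of a predicate lifting, or the safety hypothesis — so the proof is essentially a routine structural induction once the mutual recursion is set up cleanly.
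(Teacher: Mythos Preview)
Your proposal is correct and follows essentially the same approach as the paper: a simultaneous structural induction over formulas and actions, using safety of operations and tests for the action cases and naturality of predicate liftings for the modal case. The only cosmetic difference is that the paper unfolds the test case via the equality $\func{F}f \circ \test(\sem{\psi}' \circ f) = \test(\sem{\psi}') \circ f$, whereas you phrase it directly in terms of $f$ being an $\func{A}$-coalgebra morphism sent by the safe functor $\test$ to an $\func{F}$-coalgebra morphism; these are the same thing.
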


\begin{proof}
By simultaneous induction on the structure of the action $a \in \Act$ and the formula $\varphi \in \Formu$, we show that $f$ is a coalgebra morphism $\gamma_a \to \gamma'_a$ and that $\sem{\varphi}' \circ f = \sem{\varphi}$ holds.  

If $a = O(a_1, \dots, a_n)$ is obtained using some $n$-ary coalgebra operation, then safety of $O$ together with the inductive hypothesis that $f$ is a coalgebra morphism $\gamma_{a_i} \to \gamma'_{a_i}$ for all $i = 1, \dots, n$ immediately yields that it also is a coalgebra morphism $\gamma_{O(a_1, \dots, a_n)} \to \gamma'_{O(a_1, \dots, a_n)}$. If $a = \test(\psi)$ is obtained via some test, then the inductive hypothesis $\sem{\psi}' \circ f = \sem{\psi}$ together with safety of $\test$ yields 
$$
\func{F}f \circ \gamma_a = \func{F}f \circ \test(\sem{\psi}) = \func{F}f \circ \test(\sem{\psi}' \circ f) = \test(\sem{\psi}') \circ f = \gamma'_a \circ f
$$ 
as desired.

If $\varphi = o(\varphi_1, \dots, \varphi_m)$ is obtained using some $m$-ary algebraic operation $o \in \sign(\alg{A})$, using the inductive hypothesis $\sem{\varphi_j}' \circ f = \sem{\varphi_j}$ for all $j = 1, \dots , m$ we simply compute 
$$
o^\pw\big( \sem{\varphi_1}, \dots, \sem{\varphi_m} \big) = o^\pw\big( \sem{\varphi_1}' \circ f, \dots, \sem{\varphi_m}' \circ f \big) = o^\pw\big( \sem{\varphi_1}', \dots, \sem{\varphi_m}' \big) \circ f
$$  
as desired. 

If $\varphi = \diam{a}^\lambda(\varphi_1, \dots, \varphi_k)$ for a $k$-ary predicate lifting $\lambda \in \Pred$, we use the inductive hypothesis $f\colon \gamma_a \to \gamma'_a$ and $\sem{\varphi_j}' \circ f = \sem{\varphi}$ together with naturality of $\lambda$ to find
$$
\lambda_X\big( \sem{\varphi_1}' \circ f, \dots, \sem{\varphi_k}' \circ f \big) \circ \gamma_a = \lambda_Y\big( \sem{\varphi_1}', \dots, \sem{\varphi_k}\big) \circ \func{F}f \circ  \gamma_a = \lambda_Y\big( \sem{\varphi_1}', \dots, \sem{\varphi_k}\big) \circ \gamma'_a \circ f
$$
as desired, finishing the proof. 
\end{proof}

Since both \emph{bisimulation} and \emph{behavioural equivalence} are defined via joint coalgebra morphisms (see, \emph{e.g.}, \cite[Definitions 3.5.1 \& 1.3.2]{Pattinson2003a}), the following is an immediate consequence of Proposition~\ref{proposition:CoalgMorphismInvariance}. 

\begin{cor}\label{corollary:BisimSafety}
For models $\gamma$ and $\gamma'$ as in Proposition~\ref{proposition:CoalgMorphismInvariance}, if $x\in X$ and $y\in Y$ are bisimilar (behaviourally equivalent) for \emph{atomic} actions, then they are bisimilar (behaviourally equivalent) for \emph{all} actions and $\sem{\varphi}(x) = \sem{\varphi}'(y)$ holds for all formulas $\phi \in \Formu$.   
\end{cor}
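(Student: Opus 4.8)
The plan is to obtain Corollary~\ref{corollary:BisimSafety} as a direct consequence of Proposition~\ref{proposition:CoalgMorphismInvariance}, essentially just unwinding the definitions of bisimulation and behavioural equivalence in terms of (spans of) coalgebra morphisms. Recall that for $\func{F}$-coalgebras, two states $x \in X$ and $y \in Y$ are \emph{behaviourally equivalent} (for a fixed action $a$, say $\gamma_a$ and $\gamma'_a$) if there is a coalgebra $\delta \colon Z \to \func{F}Z$ together with coalgebra morphisms $g \colon \gamma_a \to \delta$ and $h \colon \gamma'_a \to \delta$ such that $g(x) = h(y)$; and $x,y$ are \emph{bisimilar} if there is a coalgebra $\delta \colon Z \to \func{F}Z$ with morphisms $g \colon \delta \to \gamma_a$, $h \colon \delta \to \gamma'_a$ and a point $z \in Z$ with $g(z) = x$, $h(z) = y$.

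The argument splits into the two cases. For \textbf{behavioural equivalence}: suppose $x$ and $y$ are behaviourally equivalent for all atomic actions, witnessed (we may take a single witnessing cospan by, e.g., first forming disjoint unions of atomic coalgebras and then the relevant quotient, or simply by choosing the cospan given by the quotient construction) by $g \colon X \to Z$ and $h \colon Y \to Z$ that are joint coalgebra morphisms $\gamma_{a_0} \to \delta_{a_0}$ and $\gamma'_{a_0} \to \delta_{a_0}$ for all $a_0 \in \AtAct$, with $g(x) = h(y)$. Extend the target to a standard model $\delta$ on $Z$ by using the valuation $\sem{p}^\delta$ pushed along $g$ (well-defined because $g$ and $h$ agree where it matters, or simply: take \emph{any} valuation on $Z$ making $g$ satisfy $\sem{p}^\delta \circ g = \sem{p}$; we then separately need $\sem{p}^\delta \circ h = \sem{p}'$, which holds when the cospan is chosen as the canonical quotient). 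Applying Proposition~\ref{proposition:CoalgMorphismInvariance} to $g$ and to $h$ gives that $g$ and $h$ are coalgebra morphisms $\gamma_a \to \delta_a$ and $\gamma'_a \to \delta_a$ for \emph{all} actions $a$, so $g(x) = h(y)$ witnesses behavioural equivalence of $x$ and $y$ for all actions; moreover the proposition yields $\sem{\varphi}(x) = \sem{\varphi}^\delta(g(x)) = \sem{\varphi}^\delta(h(y)) = \sem{\varphi}'(y)$ for all $\varphi \in \Formu$. The \textbf{bisimulation} case is analogous, with the cospan replaced by a span $g \colon \delta \to \gamma$, $h \colon \delta \to \gamma'$: equip $Z$ with the valuation $\sem{p}^\delta := \sem{p} \circ g$ (which also equals $\sem{p}' \circ h$ at the relevant point, and more robustly one defines it from $g$ and checks compatibility via $z$), apply Proposition~\ref{proposition:CoalgMorphismInvariance} to $g \colon \delta \to \gamma$ and to $h \colon \delta \to \gamma'$, and read off bisimilarity for all actions and $\sem{\varphi}(x) = \sem{\varphi}^\delta(z) = \sem{\varphi}'(y)$.

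The only genuinely delicate point is bookkeeping around the valuations on the mediating coalgebra $Z$: Proposition~\ref{proposition:CoalgMorphismInvariance} is stated for morphisms \emph{between two given standard models}, so one must make sure the mediating coalgebra carries a valuation that is simultaneously compatible (along the appropriate morphism direction) with \emph{both} sides of the span/cospan. In the bisimulation case this is automatic: define $\sem{p}^\delta := \sem{p} \circ g$; then $\sem{p}^\delta = \sem{p}' \circ h$ need not hold globally, but it is not needed — we only apply Proposition~\ref{proposition:CoalgMorphismInvariance} to the morphism $g \colon \delta \to \gamma$ to conclude $\sem{\varphi}(g(z))$-style equalities, and separately note that $h$ being a morphism for atomic actions into the standard model $\gamma'$ extends to all actions by the same proposition applied with $\sem{p}^\delta := \sem{p}' \circ h$ — so strictly one runs the proposition twice, once per valuation choice, but both give the same action-level conclusion ``$g$ (resp.\ $h$) is a morphism for all $a$'', which is all that is needed for the bisimilarity claim, and the formula equalities then chain through $z$. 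In the behavioural-equivalence case one instead takes $Z$ to be the canonical quotient identifying behaviourally equivalent atomic-states, on which the induced valuation is well-defined and compatible with both $g$ and $h$ by construction. I expect this is routine enough to state in a couple of lines, citing \cite[Definitions 3.5.1 \& 1.3.2]{Pattinson2003a} for the definitions and Proposition~\ref{proposition:CoalgMorphismInvariance} for the substance, which is presumably why the authors phrase it as ``an immediate consequence''.
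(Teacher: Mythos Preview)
Your overall strategy matches the paper's: the corollary is obtained directly from Proposition~\ref{proposition:CoalgMorphismInvariance} by unfolding the span/cospan definitions of bisimilarity and behavioural equivalence. The paper says nothing more than this, so at the level of approach you are aligned.

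However, your handling of the valuation on the mediating carrier $Z$ contains a genuine gap. You propose running Proposition~\ref{proposition:CoalgMorphismInvariance} twice with two \emph{different} valuations on $Z$ (once $\sem{p}^\delta := \sem{p}\circ g$, once $\sem{p}^\delta := \sem{p}'\circ h$) and claim that ``both give the same action-level conclusion''. This fails as soon as tests are present: the \emph{standard} extension of $\delta^0$ to all actions depends on the valuation through $\delta_{\test(\psi)} = \test(\sem{\psi}^\delta)$, so the two choices produce two genuinely different coalgebras $\delta^{(1)}_a \neq \delta^{(2)}_a$ on $Z$ for actions $a$ involving tests. You then have $g \colon \delta^{(1)}_a \to \gamma_a$ and $h \colon \delta^{(2)}_a \to \gamma'_a$, which is \emph{not} a span over a single middle object and hence does not witness bisimilarity of $x$ and $y$ for such $a$.

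The fix is simpler than your workaround. The corollary is stated for models ``as in Proposition~\ref{proposition:CoalgMorphismInvariance}'', whose hypothesis explicitly couples the atomic coalgebra-morphism condition with the valuation condition $\sem{p}' \circ f = \sem{p}$. Accordingly, ``bisimilar for atomic actions'' between \emph{models} means a single span $(Z,\delta^0,g,h)$ for which both $g,h$ are joint coalgebra morphisms over $\AtAct$ \emph{and} $\sem{p}\circ g = \sem{p}'\circ h$ for all $p\in\Prop$. With that, set $\sem{p}^\delta := \sem{p}\circ g = \sem{p}'\circ h$ once and for all, extend $\delta^0$ to a single standard model, and apply Proposition~\ref{proposition:CoalgMorphismInvariance} to $g$ and to $h$ separately with the \emph{same} $\delta$. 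The behavioural-equivalence case is handled dually (the compatible valuation on the cocone object exists for the same reason). This is presumably the one-line argument the paper has in mind.
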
 

Regarding \emph{expressivity} of many-valued coalgebraic logics, in \cite[Theorem 3]{BilkovaDostal2016} it is shown that modal equivalence implies bisimilarity for finitary, weak-pullback preserving $\func{F}_\omega$, given that $\Pred$ is \emph{$\alg{A}$-separating} \cite[Definition 4]{BilkovaDostal2016}, meaning that $t_1 \neq t_2$ in $\func{F}(A^n)$ implies $\lambda_{A^n}(\sigma)(t_1) \neq \lambda_{A^n}(\sigma)(t_2)$ for some $\lambda \in \Lambda$ and some \emph{term-definable} $\sigma\colon A^n \to A$ . Similarly, under those circumstances one may show that in a standard model $\gamma \colon \Act \to (\func{F}X)^X$ arising from $\gamma^0 \colon \AtAct \to (\func{F}_\omega X)^X$, modal equivalence implies bisimilarity with respect to all atomic actions, and therefore with respect to all actions given that every operation and test is safe. 

\section{Reducible operations and soundness}\label{section:ReducibleOperations}

Recall from PDL that non-deterministic choice, sequential composition and tests can be soundly reduced to their constituents in a uniform, term-definable manner due to the following validities  \cite{FischerLadner79}.  
\begin{align*}
[a \cup b]\varphi &\leftrightarrow [a]\varphi \wedge [b]\varphi  &  \diam{a \cup b}\varphi &\leftrightarrow \diam{a}\varphi \vee \diam{b}\varphi\\
[a{;}b]\varphi &\leftrightarrow [a][b]\varphi & \diam{a;b}\varphi &\leftrightarrow \diam{a}\diam{b}\varphi  \\
\boxm{\psi?}\varphi &\leftrightarrow(\psi \rightarrow \varphi) & \diam{\psi?}\varphi & \leftrightarrow (\psi \wedge \varphi)
\end{align*}

In this section, we investigate the special classes of such \emph{reducible} operations and tests in our general coalgebraic framework. This forms the basis of our uniform strong completeness proof in Section~\ref{section:Completeness} and we expect it to also be relevant for future research on coalgebraic dynamic logic (\emph{e.g.}, in order to provide a general definition of \emph{Fischer-Ladner closure}).     

\subsection{Templates and reducible coalgebra operations}\label{subsection:ReducingCoOps}
Intuitively speaking, we call an $n$-ary coalgebra operation $O$ \emph{reducible} with respect to a predicate lifting $\lambda$ if $\widehat{\lambda} \circ O(\gamma_{a_1}, \dots, \gamma_{a_n})$ can be directly obtained from the logical information of $\{ \widehat{\lambda'} \circ \gamma_{a_j} \mid \lambda' \in \Pred, j = 1,\dots, n \}$. To make this precise, we introduce the following.

\begin{defi}[\textsf{Templates}]\label{definition:templates}
The collection of \emph{$(n,k)$-templates} (where $n,k \geq 1$), denoted $\nkTempl$, is the set of formulas over $n$ modalities and $k$ variables inductively defined as follows. 
\begin{align*}
&& \tau ::= \omega_1, \dots, \omega_k  
&\text{ $\big\vert$ } o(\tau_1, \dots, \tau_{\arity(o)})\phantom{.} 
\text{ $\big\vert$ } \diam{j}^\lambda (\tau_1,\dots,\tau_{\arity(\lambda)}) \text{ with } j\in \{ 1, \dots, n \}     
\end{align*} 
Here, the variables $\omega_1, \dots, \omega_k$ are `placeholders for formulas' and the indices $j \in \{ 1, \dots, n \}$ are `placeholders for actions' (and, as usual, $o \in \mathsf{sign}(\alg{A})$ and $\lambda\in\Pred$). 
\end{defi}

Given some $(n,k)$-template $\tau$ together with $n$-many actions $\vec{a} = (a_1, \dots, a_n)$ and $k$-many formulas $\vec{\varphi} = (\varphi_1, \dots, \varphi_k)$, there is a corresponding formula $\tau[\vec{a}, \vec{\varphi}] \in \Formu$ obtained by substituting all occurrences of $\omega_i$ for $\varphi_i$ and all occurrences of $\diam{j}$ for $\diam{a_j}$. 

Similarly, from an $(n,k)$-template $\tau$ together with $\vec{\gamma} = (\gamma_1, \dots, \gamma_n) \colon X \to \func{F}^n X$ and $\vec{\sigma} = (\sigma_1, \dots, \sigma_k) \in (A^X)^k$, we inductively define $\tau[\vec{\gamma}, \vec{\sigma}] \in A^X$ via the rules  
\begin{align*}
&& \omega_i[\vec{\gamma}, \vec{\sigma}] &= \sigma_i,\\
&& o(\tau_1, \dots, \tau_{\arity(o)})[\vec{\gamma}, \vec{\sigma}] &= o^{\mathrm{pw}}\big(\tau_1[\vec{\gamma}, \vec{\sigma}], \dots, \tau_{\arity(o)}[\vec{\gamma}, \vec{\sigma}]\big),\\
&& \diam{j}^\lambda\big(\tau_1,\dots,\tau_{\arity(\lambda)})[\vec{\gamma}, \vec{\sigma}] & = \lambda_X\big(\tau_1[\vec{\gamma}, \vec{\sigma}], \dots, \tau_{\arity(\lambda)}[\vec{\gamma}, \vec{\sigma}]\big) \circ \gamma_j.
\end{align*}
This allows us to abstractly define the concept of reducibility as follows.     

\begin{defi}[\textsf{Reducible coalgebra operations}]\label{definition:ReducibleCoalgOperation}
Let $O\in \CoOp$ be an $n$-ary coalgebra operation and $\lambda \in \Pred$ be a $k$-ary predicate lifting. We call $O$ \emph{reducible with respect to $\lambda$} if there exists an $(n, k)$-template $\tau \in \nkTempl$ such that 
$$
\lambda_X(\vec{\sigma}) \circ O(\vec{\gamma}) = \tau[\vec{\gamma}, \vec{\sigma}]
$$   
holds for all $\vec{\gamma}\colon X {\to} \func{F}^n X$ and $\vec{\sigma} \in  (A^X)^k $. We call $O$ \emph{reducible} if it is reducible with respect to \emph{every} $\lambda \in \Pred$. 
\end{defi} 

By design, a template witnessing reducibility yields a sound \emph{reduction axiom} as follows.

\begin{prop}[\textsf{Soundness}]\label{proposition:soundness-ReductionAxioms}
Let $\tau \in \nkTempl$ be a $(n,k)$-template witnessing that $O$ is reducible with respect to $\lambda$. Then the formula
$$
\diam{ O(a_1, \dots, a_n) }^\lambda (\varphi_1, \dots, \varphi_k) \leftrightarrow \tau[\vec{a}, \vec{\varphi}]
$$ 
is true at every state of any standard model.  
\end{prop}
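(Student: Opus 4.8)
The plan is to unwind the definition of truth in a standard model and reduce the claim to the defining equation of reducibility. Fix a standard model $\gamma \colon \Act \to (\func{F}X)^X$ with propositional valuation $\sem{\cdot}$, and fix a state $x \in X$. Recall that a biconditional $\chi \leftrightarrow \chi'$ is true at $x$ precisely when $\sem{\chi}(x) = \sem{\chi'}(x)$ (this follows since in any $\FLew$-algebra $a \leftrightarrow b = (a \to b) \odot (b \to a) = 1$ iff $a \leq b$ and $b \leq a$ iff $a = b$; note $\leftrightarrow$ is an abbreviation available because $\alg{A}$ is an extension of a $\FLew$-algebra). Hence it suffices to show that $\sem{\diam{O(a_1,\dots,a_n)}^\lambda(\varphi_1,\dots,\varphi_k)} = \sem{\tau[\vec a,\vec\varphi]}$ as elements of $A^X$.

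First I would compute the left-hand side. By the semantic clause for modal formulas, $\sem{\diam{O(a_1,\dots,a_n)}^\lambda(\varphi_1,\dots,\varphi_k)} = \lambda_X(\sem{\varphi_1},\dots,\sem{\varphi_k}) \circ \gamma_{O(a_1,\dots,a_n)}$, and since the model is standard, $\gamma_{O(a_1,\dots,a_n)} = O(\gamma_{a_1},\dots,\gamma_{a_n}) = O(\vec\gamma)$ where $\vec\gamma = (\gamma_{a_1},\dots,\gamma_{a_n})$. So the left-hand side is $\lambda_X(\vec\sigma) \circ O(\vec\gamma)$ with $\vec\sigma = (\sem{\varphi_1},\dots,\sem{\varphi_k})$. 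By Definition~\ref{definition:ReducibleCoalgOperation}, the template $\tau$ witnessing reducibility satisfies exactly $\lambda_X(\vec\sigma) \circ O(\vec\gamma) = \tau[\vec\gamma,\vec\sigma]$.

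It then remains to identify $\tau[\vec\gamma,\vec\sigma]$ with $\sem{\tau[\vec a,\vec\varphi]}$, i.e., to check that the two ways of interpreting a template — the "syntactic" substitution $\tau \mapsto \tau[\vec a,\vec\varphi]$ followed by taking the semantics in the model, versus the "semantic" interpretation $\tau \mapsto \tau[\vec\gamma,\vec\sigma]$ directly — agree. This is a straightforward induction on the structure of $\tau \in \nkTempl$: for a variable $\omega_i$, both sides yield $\sem{\varphi_i} = \sigma_i$; for $o(\tau_1,\dots,\tau_{\arity(o)})$, both sides apply $o^{\pw}$ to the inductively equal interpretations of the $\tau_\ell$; and for $\diam{j}^{\lambda'}(\tau_1,\dots,\tau_{\arity(\lambda')})$, the syntactic reading gives $\lambda'_X(\sem{\tau_1[\vec a,\vec\varphi]},\dots) \circ \gamma_{a_j}$ while the semantic reading gives $\lambda'_X(\tau_1[\vec\gamma,\vec\sigma],\dots) \circ \gamma_j$, and these coincide by the induction hypothesis together with $\gamma_j = \gamma_{a_j}$. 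Chaining the three displayed equalities gives $\sem{\diam{O(\vec a)}^\lambda(\vec\varphi)}(x) = \sem{\tau[\vec a,\vec\varphi]}(x)$ for every $x$, hence the biconditional is true everywhere. The only mildly delicate point — and it is purely bookkeeping — is keeping the "placeholder" indices $j$ and variables $\omega_i$ of templates cleanly separated from the actual actions $a_j$ and formulas $\varphi_i$ during the induction; there is no real obstacle here since the definitions of $\tau[\vec a,\vec\varphi]$ and $\tau[\vec\gamma,\vec\sigma]$ are designed to mirror each other clause by clause.
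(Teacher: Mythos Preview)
Your proposal is correct and follows essentially the same approach as the paper: both reduce the claim to the key identity $\sem{\tau[\vec a,\vec\varphi]} = \tau[\vec{\gamma}_a,\sem{\vec\varphi}]$ and establish it by structural induction on the template $\tau$ with the same three cases. Your write-up is slightly more explicit in justifying why truth of the biconditional amounts to equality of semantic values and in unpacking the left-hand side via the standard-model clause, but the argument is otherwise identical.
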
 

\begin{proof}
This follows immediately from Definition~\ref{definition:ReducibleCoalgOperation} once we show that in every standard model $\gamma$, for all $(n,k)$-templates $\tau$ it holds that 
$$
\sem{\tau[\vec{a},\vec{\varphi}] } = \tau[\vec{\gamma}_a,\sem{\vec{\varphi}}],
$$ 
where $\vec{\gamma}_a = (\gamma_{a_1}, \dots, \gamma_{a_n})$ and $\sem{\vec{\varphi}} = (\sem{\varphi_1}, \dots, \sem{\varphi_k})$. We proceed to show this by induction on the structure of the template $\tau$. 

In the case $\tau = \omega_i$, we directly get $\sem{\tau[\vec{a}, \vec{\varphi}]} = \sem{\varphi_i} = \tau[\vec{\gamma}_a, \sem{\vec{\varphi}}]$. For the case $\tau = o(\tau_1, \dots, \tau_{\arity(o)})$ we use the inductive hypothesis $\sem{\tau_i[\vec{a},\vec{\varphi}] } = \tau_i[\vec{\gamma}_a,\sem{\vec{\varphi}}]$ for all $i = 1, \dots, \arity(o)$ to similarly get 
\begin{align*}
\sem{\tau[\vec{a}, \vec{\varphi}]} & = \sem{o(\tau_1[\vec{a}, \vec{\varphi}], \dots, \tau_{\arity(o)}[\vec{a}, \vec{\varphi}])}  \\
& = o^\pw\big(\sem{\tau_1[\vec{a}, \vec{\varphi}]}, \dots, \sem{\tau_{\arity(o)}[\vec{a}, \vec{\varphi}]}\big) \\
& = o^\pw\big(\tau_1[\vec{\gamma}_a,\sem{\vec{\varphi}}], \dots, \tau_{\arity(o)}[\vec{\gamma}_a,\sem{\vec{\varphi}}]\big)= \tau[\vec{\gamma}_a, \sem{\vec{\varphi}}].
\end{align*}          
Lastly, for the case $\tau = \diam{j}^\lambda(\tau_1, \dots, \tau_{\arity(\lambda)})$ we use the inductive hypothesis $\sem{\tau_i[\vec{a},\vec{\varphi}] } = \tau_i[\vec{\gamma}_a,\sem{\vec{\varphi}}]$ for all $i = 1, \dots, \arity(\lambda)$ and compute 
\begin{align*}
\sem{\tau[\vec{a}, \vec{\varphi}]} & = \sem{\diam{a_j}(\tau_1[\vec{a},\vec{\varphi}], \dots, \tau_{\arity(\lambda)}[\vec{a},\vec{\varphi})]}  \\
& = \lambda_X\big(\sem{\tau_1[\vec{a},\vec{\varphi}]}, \dots, \sem{\tau_{\arity(\lambda)}[\vec{a},\vec{\varphi})]}\big) \circ \gamma_{a_j} \\
& = \lambda_X\big(\tau_1[\vec{\gamma}_a,\sem{\vec{\varphi}}], \dots, \tau_{\arity(\lambda)}[\vec{\gamma}_a,\sem{\vec{\varphi}}]\big) \circ \gamma_{a_j} = \tau[\vec{\gamma}_a,\sem{\vec{\varphi}}]
\end{align*}          
as desired, finishing the proof.
\end{proof}

Recall that in Section~\ref{section:SafetyAndInvariance} we discussed the importance of \emph{safe} coalgebra operations (Definition~\ref{definition:SafeCoalgebraOperation}). In the following, we show that reducibility implies safety under the (mild) extra assumption that the collection $\Pred$ of predicate liftings is \emph{separating} (recall Definition~\ref{definition:SeparatingPredLifts}).

\begin{prop}\label{proposition:ReducibleImpliesSafe}
Let the collection of predicate liftings $\Pred$ be separating and let $O$ be a coalgebra operation. If $O$ is reducible, then it is safe.
\end{prop}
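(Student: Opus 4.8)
The plan is to show that a joint coalgebra morphism $f\colon X\to Y$ for $\vec{\gamma}$ and $\vec{\gamma}'$ is also a coalgebra morphism $O(\vec{\gamma})\to O(\vec{\gamma}')$, and to extract this from the reducibility templates by using the correspondence between predicate liftings $\lambda$ and their transposes $\widehat{\lambda}\colon\func{F}\Rightarrow\kNA$. Since $\Pred$ is separating, the family $\{\widehat{\lambda}\mid\lambda\in\Pred\}$ is jointly monic, so it suffices to prove $\widehat{\lambda}_Y\circ\func{F}f\circ O(\vec{\gamma}) = \widehat{\lambda}_Y\circ O(\vec{\gamma}')\circ f$ for every $\lambda\in\Pred$; the desired equality $\func{F}f\circ O(\vec{\gamma}) = O(\vec{\gamma}')\circ f$ then follows. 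I would first reduce checking this transposed identity pointwise: by definition of $\widehat{\lambda}$, the claim amounts to showing that for every $x\in X$ and every $\vec{\sigma}'\in(A^Y)^k$,
$$
\lambda_X(\vec{\sigma}'\circ f)\big(O(\vec{\gamma})(x)\big) = \lambda_Y(\vec{\sigma}')\big(O(\vec{\gamma}')(f(x))\big),
$$
where $\vec{\sigma}'\circ f = (\sigma'_1\circ f,\dots,\sigma'_k\circ f)$. (Here I use that $\widehat{\lambda}_X(t)$ is exactly the map $\vec{\sigma}\mapsto\lambda_X(\vec{\sigma})(t)$, and that $\kNA f$ precomposes the inner argument with $A^f=(\text{--})\circ f$.)

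Now the reducibility template does the work. Let $\tau\in\nkTempl$ witness that $O$ is reducible with respect to $\lambda$, so $\lambda_Z(\vec{\rho})\circ O(\vec{\delta}) = \tau[\vec{\delta},\vec{\rho}]$ for all coalgebras $\vec{\delta}\colon Z\to\func{F}^nZ$ and all $\vec{\rho}\in(A^Z)^k$. Applying this on $X$ with $\vec{\delta}=\vec{\gamma}$, $\vec{\rho}=\vec{\sigma}'\circ f$ gives $\lambda_X(\vec{\sigma}'\circ f)\circ O(\vec{\gamma}) = \tau[\vec{\gamma},\vec{\sigma}'\circ f]$, and applying it on $Y$ with $\vec{\delta}=\vec{\gamma}'$, $\vec{\rho}=\vec{\sigma}'$ gives $\lambda_Y(\vec{\sigma}')\circ O(\vec{\gamma}') = \tau[\vec{\gamma}',\vec{\sigma}']$. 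So the pointwise identity above reduces to proving the purely template-level equation
$$
\tau[\vec{\gamma},\vec{\sigma}'\circ f] = \tau[\vec{\gamma}',\vec{\sigma}']\circ f
$$
for every $(n,k)$-template $\tau$. This I would prove by induction on the structure of $\tau$, and this is the real content of the argument — though I expect it to be routine rather than hard. The base case $\tau=\omega_i$ is immediate since both sides equal $\sigma'_i\circ f$. The connective case $\tau = o(\tau_1,\dots,\tau_{\arity(o)})$ follows because $o^{\pw}$ is applied pointwise and precomposition with $f$ commutes with it, using the induction hypothesis on each $\tau_i$. The modal case $\tau = \diam{j}^{\lambda'}(\tau_1,\dots,\tau_{\arity(\lambda')})$ is where naturality enters: by induction $\tau_i[\vec{\gamma},\vec{\sigma}'\circ f] = \tau_i[\vec{\gamma}',\vec{\sigma}']\circ f$, so
$$
\tau[\vec{\gamma},\vec{\sigma}'\circ f] = \lambda'_X\big(\tau_1[\vec{\gamma}',\vec{\sigma}']\circ f,\dots\big)\circ\gamma_j = \lambda'_Y\big(\tau_1[\vec{\gamma}',\vec{\sigma}'],\dots\big)\circ\func{F}f\circ\gamma_j,
$$
using naturality of $\lambda'$ (which says $\lambda'_X\circ(A^f)^k = A^{\func{F}f}\circ\lambda'_Y$, i.e. $\lambda'_X(\vec{\rho}\circ f) = \lambda'_Y(\vec{\rho})\circ\func{F}f$); then since $f\colon\gamma_j\to\gamma'_j$ is a coalgebra morphism we have $\func{F}f\circ\gamma_j = \gamma'_j\circ f$, and the last expression equals $\lambda'_Y(\tau_1[\vec{\gamma}',\vec{\sigma}'],\dots)\circ\gamma'_j\circ f = \tau[\vec{\gamma}',\vec{\sigma}']\circ f$, as required.

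The main obstacle — if there is one — is purely bookkeeping: keeping the roles of the ambient templates $\tau$ (witnessing reducibility of $O$ w.r.t.\ a fixed $\lambda$) and the modalities $\diam{j}^{\lambda'}$ appearing \emph{inside} $\tau$ (with possibly different $\lambda'\in\Pred$) clearly separated, and making sure the substitution of $\vec{\sigma}'\circ f$ versus $\vec{\sigma}'$ is tracked correctly through the induction. Once the template identity is in place, separateness of $\Pred$ closes the argument, and the conclusion that $f\colon O(\vec{\gamma})\to O(\vec{\gamma}')$ is a morphism (hence that $O$ lifts to a functor on the non-discrete coalgebra categories, i.e.\ is safe in the sense of Definition~\ref{definition:SafeCoalgebraOperation}) follows immediately. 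Note also that $f\colon\gamma_i\to\gamma'_i$ being a morphism for each constituent $\gamma_i$ is exactly the hypothesis that $f$ is a joint coalgebra morphism $\vec{\gamma}\to\vec{\gamma}'$, which is what safety of $O$ requires us to start from.
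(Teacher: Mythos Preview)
Your proposal is correct and follows essentially the same approach as the paper: reduce via separation to checking $\widehat{\lambda}_Y\circ\func{F}f\circ O(\vec{\gamma}) = \widehat{\lambda}_Y\circ O(\vec{\gamma}')\circ f$, unpack this pointwise using the transpose correspondence, apply reducibility to turn both sides into template evaluations, and then prove the template identity $\tau[\vec{\gamma},\vec{\sigma}'\circ f] = \tau[\vec{\gamma}',\vec{\sigma}']\circ f$ by structural induction, with the modal case handled by naturality of $\lambda'$ together with $f\colon\gamma_j\to\gamma'_j$. The paper's proof is organised identically, including the same three-case induction.
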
 

\begin{proof}
Let $\arity(O) =: n$ and let $\vec{\gamma} \colon X \to \func{F}^n X$ and $\vec{\gamma}' \colon Y \to \func{F}^n Y$ be coalgebras. We want to show that 
if $f \colon \vec{\gamma} \to \vec{\gamma}'$ is a joint coalgebra morphism, then it is also a coalgebra morphism $f\colon O(\vec{\gamma}) \to O(\vec{\gamma}')$.  
Since $\Pred$ is separating, for the latter it suffices to show that 
$$
\widehat{\lambda}_Y \circ \func{F}f \circ O(\vec{\gamma}) = \widehat{\lambda}_Y \circ O(\vec{\gamma}') \circ f
$$ 
holds for all $\lambda \in \Pred$. 

Let $\arity(\lambda) =: k$ and choose some $(n,k)$-template $\tau \in \nkTempl$ which witnesses that $O$ is reducible with respect to $\lambda$. Starting on the left-hand side of the above equation, for every $x \in X$ and $\sigma_1, \dots, \sigma_k \in A^Y$, we get  
\begin{align*}
\big(\widehat{\lambda}_Y \circ \func{F}f \circ O(\vec{\gamma})\big)(x)(\vec{\sigma}) &= \big(\kNA f \circ \widehat{\lambda}_X \circ O(\vec{\gamma})\big)(x)(\vec{\sigma}) && \text{(Naturality $\widehat{\lambda}$)} \\
&= \big(\widehat{\lambda}_X \circ O(\vec{\gamma})\big)(x)(\vec{\sigma} \circ f^k) && \text{(Definition $\kNA f$)}\\ 
& = \big(\lambda_X(\vec{\sigma} \circ f^k) \circ O(\vec{\gamma})\big)(x) = \tau[\vec{\gamma}, \vec{\sigma} \circ f^k](x). &&\text{(Reducibility)}
\end{align*}
Starting on the right-hand side, we use reducibility as well to get
$$
\big(\widehat{\lambda}_Y \circ O(\vec{\gamma}') \circ f\big) (x)(\vec{\sigma}) = \big(\lambda_Y(\vec{\sigma}) \circ O(\vec{\gamma}')\big) \big( f (x)\big) = \tau[\vec{\gamma}', \vec{\sigma}]\big(f(x)\big).
$$
Thus, we are done once we prove $\tau[\vec{\gamma}, \vec{\sigma} \circ f^k] = \tau[\vec{\gamma}', \vec{\sigma}]\circ f$  for all $\tau \in \nkTempl$, which we do by induction on the structure of the template $\tau$. The cases $\tau = \omega_i$ and $\tau = o(\tau_1, \dots, \tau_{\arity(o)})$ are straightforward. For the case $\tau = \diam{j}^{\lambda'}(\tau_1, \dots, \tau_{\arity(\lambda')})$, we compute
\begin{align*}
\tau[\vec{\gamma}, \vec{\sigma} \circ f^k] &= \lambda'_X\big(\tau_1[\vec{\gamma}, \vec{\sigma} \circ f^k], \dots, \tau_{\arity(\lambda')}[\vec{\gamma}, \vec{\sigma} \circ f^k]\big) \circ \gamma_j && \text{(Definition)}\\
& = \lambda'_X\big(\tau_1[\vec{\gamma}', \vec{\sigma}]\circ f, \dots, \tau_{\arity(\lambda')}[\vec{\gamma}', \vec{\sigma}]\circ f\big) \circ \gamma_j && \text{(Ind.Hyp. on $\tau_i$'s)}\\ 
& = \lambda'_Y\big(\tau_1[\vec{\gamma}', \vec{\sigma}], \dots, \tau_{\arity(\lambda')}[\vec{\gamma}', \vec{\sigma}]\big) \circ \func{F}f \circ \gamma_j && \text{(Naturality $\lambda$)}\\ 
& = \lambda'_Y\big(\tau_1[\vec{\gamma}', \vec{\sigma}], \dots, \tau_{\arity(\lambda')}[\vec{\gamma}', \vec{\sigma}]\big) \circ \gamma'_j \circ f = \tau[\vec{\gamma}', \vec{\sigma}] \circ f && \text{($f \colon \gamma_j \to \gamma_j'$)}
\end{align*}
as desired, finishing the proof.  
\end{proof}

The converse of the above proposition is not necessarily true. For example, iteration (\emph{e.g.}, of PDL or game logic) is safe but not reducible. As first examples of reducible operations we consider \emph{sequential compositions}. For Kleisli composition (Example~\ref{example:CoalgebraOperations-Kleisli}), the following is already shown in \cite{HansenKupkeLeal2014}, the generalisation from $\alg{2}$ to $\alg{A}$ being straightforward.

\begin{exa}\label{example:ReductionKleisliMonadMorphism}
Let $\func{F}$ be a monad and $\lambda \in \Pred$ be a unary predicate lifting such that $\widehat{\lambda} \colon \func{F} \to \NA$ is a \emph{monad morphism}. As in \cite[Lemma 10]{HansenKupkeLeal2014}, one shows 
$$
\lambda_X (\sigma) \circ (\gamma_1 {;} \gamma_2) = \lambda_X(\lambda_X(\sigma) \circ \gamma_2) \circ \gamma_1,
$$
in other words, Kleisli composition $;$ is reducible with respect to $\lambda$ via the $(2,1)$-template $\tau = \diam{1}^\lambda \diam{2}^\lambda \omega$. Thus, the familiar reduction axiom 
$$
\diam{a_1 {;} a_2}^\lambda \varphi \leftrightarrow \diam{a_1}^\lambda \diam{a_2}^\lambda \varphi
$$
holds in all standard models. \hfill $\blacksquare$
\end{exa} 

The above applies to the coalgebraic dynamic logics of Examples~\ref{example:CoalgebraicDynamicLogics-PDL-Crisp}, \ref{example:CoalgebraicDynamicLogics-PDL-Labelled1} \& \ref{example:CoalgebraicDynamicLogics-GL}. The next example shows that our present framework also accommodates a template for $;$ in Example~\ref{example:CoalgebraicDynamicLogics-PDL-Labelled2}, where $\func{F}$ still is a monad, but the predicate liftings $\lambda^r$ do \emph{not} induce monad morphisms.

\begin{exa}\label{example:ReductionKleisli2-FiniteLinear}
Consider the $\alg{2}$-valued coalgebraic dynamic logic for $\PA$ with $\Pred = \{ \lambda^r\}_{ r\in A}$. If $\alg{A}$ is \emph{finite and linear}, we have 
$$
\lambda_X^r(S)\circ (\gamma_1{;}\gamma_2) = \bigvee_{r_1 \odot r_2 \geq r}\lambda_X^{r_1}(\lambda_X^{r_2}(S) \circ \gamma_2) \circ \gamma_1,
$$
in other words, the $(2,1)$-template $\tau = \bigvee_{r_1 \odot r_2 \geq r} \diam{1}^{r_1} \diam{2}^{r_2} \omega$ witnesses Kleisli composition $;$ being reducible with respect to $\lambda^r$. Thus, the corresponding reduction axioms 
$$
\diam{a_1 {;} a_2}^{r} \varphi \leftrightarrow \bigvee_{r_1 \odot r_2 \geq r} \diam{a_1}^{r_1} \diam{a_2}^{r_2} \varphi
$$  
hold in all standard models (note that since $\alg{A}$ is finite, the right-hand sides of the reduction axioms are all well-defined formulas). 

For the proof of this fact, first we note that linearity yields that $\bigvee _{i\in I} s_i \geq s \Leftrightarrow \exists i\in I \colon s_i \geq s$ always holds. Let $S \in 2^X$ and $\gamma_1, \gamma_2 \colon X \to A^X$ be $\PA$-coalgebras. Then we have the chain of equivalences 
\begin{align*}
&\lambda_X^r(S)\circ (\gamma_1{;}\gamma_2)(x) = 1 \Longleftrightarrow \bigvee_{z\in S} \gamma_1 {;} \gamma_2 (x)(z) \geq r &&\text{(Definition $\lambda^r$)}\\ 
& \Longleftrightarrow \bigvee_{z\in S} \bigvee_{y \in X} \gamma_1(x)(y) \odot \gamma_2(y)(z) \geq r &&\text{(Definition $;$ )}\\
& \Longleftrightarrow \exists z\in S, y\in X : \gamma_1(x)(y) \odot \gamma_2(y)(z) \geq r &&\text{(Linearity $\alg{A}$)}\\
& \Longleftrightarrow \exists z\in S, y\in X, r_1,r_2 \in A  : r_1 \odot r_2 \geq r, \gamma_1(x)(y) \geq r_1, \gamma_2(y)(z) \geq r_2 &&\text{(Monotonicity $\odot$)}\\
& \Longleftrightarrow \exists y\in X, r_1, r_2 \in A: r_1 \odot r_2 \geq r,  \gamma_1(x)(y) \geq r_1, \bigvee_{z \in S} \gamma_2(y)(z) \geq r_2 &&\text{(Linearity $\alg{A}$)}\\
& \Longleftrightarrow \exists y\in X, r_1, r_2 \in A : r_1 \odot r_2 \geq r, \gamma_1(x)(y) \geq r_1, \lambda_X^{r_2}(S)(\gamma_2(y)) = 1 &&\text{(Definition $\lambda^{r_2}$)}\\
& \Longleftrightarrow \exists r_1, r_2 \in A : r_1 \odot r_2 \geq r, \bigvee_{\lambda_X^{r_2}(S)(\gamma_2(y)) = 1} \gamma_1(x)(y) \geq r_1 &&\text{(Linearity $\alg{A}$)}\\
&\Longleftrightarrow \exists r_1, r_2 \in A : r_1 \odot r_2 \geq r, \lambda_X^{r_1}(\lambda_X^{r_2}(Z) \circ \gamma_2) (\gamma_1(x)) = 1 &&\text{(Definition $\lambda^{r_1}$)}\\ 
&\Longleftrightarrow\bigvee_{r_1 \odot r_2 \geq r}\lambda_X^{r_1}(\lambda_X^{r_2}(S) \circ \gamma_2) \circ \gamma_1(x) = 1. &&\text{(Linearity $\alg{A})$}
\end{align*}
This, as desired, shows that the template $\tau = \bigvee_{r_1 \odot r_2 \geq r} \diam{1}^{r_1} \diam{2}^{r_2} \omega$ witnesses that $;$ is reducible with respect to $\lambda^r$. 
\hfill $\blacksquare$
\end{exa}

For an example where $\func{F}$ is \emph{not} a monad, we consider Instantial PDL of Example~\ref{example:CoalgebraicDynamicLogics-IPDL}. The reduction axioms used here are due to \cite[Section 3.3]{vanBenthemBezhanishviliEnqvist2019}. 

\begin{exa}\label{example:ReductionCompositionInstantialNBH}
In Instantial PDL, the sequential composition $;$ is reducible with respect to every $\lambda^{k+1}$ via the $(2,k+1)$-template 
$$
\diam{1}^{k+1} \big(\diam{2}^2(\omega_1, \omega_{k+1}), \dots, \diam{2}^2(\omega_k, \omega_{k+1}), \diam{2}^1\omega_{k+1} \big)
$$
and the (alternative) sequential composition $\star$ defined in Example~\ref{example:CoalgebraOperations-CompositionDoubleMonad} is reducible with respect to $\lambda^{k+1}$ via the template
$$
\diam{1}^2\big( \diam{2}^{k+1}(\omega_1, \dots, \omega_k, \omega_{k+1}), \top \big).
$$
Thus, the corresponding reduction axioms 
$$
\diam{a_1 {;} a_2}^{k+1}(\varphi_1, \dots, \varphi_k, \varphi) \leftrightarrow \diam{a_1}^{k+1} \big(\diam{a_2}^2(\varphi_1, \varphi), \dots, \diam{a_2}^2(\varphi_k, \varphi), \diam{a_2}^1\varphi \big)
$$
and 
$$
\diam{a_1 \star a_2}^{k+1}(\varphi_1, \dots, \varphi_k, \varphi) \leftrightarrow \diam{a_1}^2\big( \diam{a_2}^{k+1}(\varphi_1, \dots, \varphi_k, \varphi), \top \big) 
$$
hold in all standard models.

Reducibility for $;$ is shown similarly to the proof of \cite[Proposition 3]{vanBenthemBezhanishviliEnqvist2019} as follows.
Untangling the definitions, for $S_1, \dots, S_k, S \in 2^X$ and $\gamma_1,\gamma_2 \colon X \to \mathcal{P}\mathcal{P}X$ we get for the left-hand side 
\begin{align*}
&\lambda^{k+1}_X(S_1, \dots, S_k, S)\big(\gamma_1 {;} \gamma_2(x)\big) = 1 \\ 
&\Longleftrightarrow \exists Z \in \gamma_1{;}\gamma_2(x) \colon Z \subseteq S \wedge \forall i\colon Z \cap S_i \neq \varnothing  \\
&\Longleftrightarrow \exists Z' \in \gamma_1(x), \mathcal{F}\subseteq \mathcal{P}X:\mathcal{F} \subseteq \bigcup_{x'\in Z'} \gamma_2(x') \wedge \bigcup \mathcal{F} \subseteq S \wedge \forall i \colon \bigcup \mathcal{F} \cap S_i \neq \varnothing,
\end{align*}
and for the right-hand side 
\begin{align*}
& \lambda_X^{k+1}\big(\lambda_2(S_1, S) \circ \gamma_2, \dots, \lambda_2(S_k, S) \circ \gamma_2, \lambda_X^1(S) \circ \gamma_2\big)\big(\gamma_1(x)\big) = 1 \\
&\Longleftrightarrow \exists Z' \in \gamma_1(x) \colon Z' \subseteq \lambda_X^1(S) \circ \gamma_2 \wedge \forall i \colon Z' \cap \lambda^2_X(S_i, S) \circ \gamma_2 \neq \varnothing \\
& \Longleftrightarrow  \exists Z' \in \gamma_1(x) \colon \forall x' \in Z' \colon \exists U \in \gamma_2(x') \colon U \subseteq S \wedge\\
& \phantom{\Longleftrightarrow \exists Z' \in \gamma_1(x) \colon i} \forall i \colon \exists x_i' \in Z', U_i \in \gamma_2(x_i') \colon U_i \cap S_i \neq \varnothing. 
\end{align*}
The former immediately implies the latter with the same $Z'$. The converse is obtained setting $\mathcal{F} = \{ U \mid U \subseteq S \wedge \exists y \in Z' \colon U \in \gamma_2(y) \}$ as in \cite[Proposition 3]{vanBenthemBezhanishviliEnqvist2019}. 

To see reducibility for $\star$, we get for the left-hand side 
\begin{align*}
&\lambda^{k+1}_X(S_1, \dots, S_k, S)\big(\gamma_1 \star \gamma_2(x)\big) = 1 \\ 
&\Longleftrightarrow \exists Z \in \gamma_1 \star \gamma_2(x) \colon Z \subseteq S \wedge \forall i\colon Z \cap S_i \neq \varnothing  \\
&\Longleftrightarrow \exists Z' \in \gamma_1(x), y \in Z', Z\in \gamma_2(y) \colon Z \subseteq S \wedge \forall i\colon Z \cap S_i \neq \varnothing, 
\end{align*}
and for the right-hand side
\begin{align*}
&\lambda^{2}_X\big(\lambda_X^{k+1}(S_1, \dots, S_k, S)\circ \gamma_2\big)\big(\gamma_1(x)\big) = 1 \\ 
&\Longleftrightarrow \exists Z' \in \gamma_1(x) \colon Z' \cap \lambda_X^{k+1}(S_1, \dots, S_k, S)\circ \gamma_2 \neq \varnothing   \\
&\Longleftrightarrow \exists Z' \in \gamma_1(x), y \in Z', Z\in \gamma_2(y) \colon Z \subseteq S \wedge \forall i\colon Z \cap S_i \neq \varnothing, 
\end{align*}
which finishes the proof. \hfill $\blacksquare$
\end{exa}

Other reducible operations, such as non-deterministic choice $\cup, \vee$ or dual $\dual$, are treated in the following subsection, as they are instances of what we will call \emph{independently reducible operations}. 
\subsection{Independent reducibility}\label{subsection:IndependentlyReducibleExamples}

Most remaining coalgebra operations discussed in Subsection~\ref{subsection:Coalg/Test-Operations}, in particular the ones induced by natural transformations, fall under the scope of the following `well-behaved' form of reducibility. 

\begin{defi}[\textsf{Independent reducibility}]\label{definition:independently-reducible}
Let $O$ be an $n$-ary coalgebra operation and let $\lambda$ be a $k$-ary predicate lifting. We call $O$ \emph{independently reducible with respect to $\lambda$} if $O$ is reducible with respect to $\lambda$ and this can be witnessed by a $(n,k)$-template in which no $\diam{j}^{\lambda'}$ with $\lambda' \neq \lambda$ occurs.    
\end{defi}

Clearly this notion only becomes relevant if $|\Pred| \geq 2$. Note that in Examples \ref{example:ReductionKleisli2-FiniteLinear} \& \ref{example:ReductionCompositionInstantialNBH} we already encountered operations which are reducible but not independently reducible.

Independent reducibility allows us to shift our attention towards coalgebra operations for the functor $\kNA$ and $\lambda^\ev$ which takes evaluations $\ev_{(\sigma_1, \dots, \sigma_k)} \colon A^{(A^X)^k} \to A$ similarly to Example~\ref{example:PredicateLiftings}(4). 

\begin{prop}\label{proposition:IndependentReducibility-neighbourhood}
An $n$-ary coalgebra operation $O \in \CoOp$ is independently reducible with respect to a $k$-ary predicate lifting $\lambda \in \Pred$ if and only if there is an $n$-ary coalgebra operation $\Omega$ for $\kNA$ that is independently reducible with respect to $\lambda^\ev$ and satisfies 
$\widehat{\lambda} \circ O = \Omega \circ \widehat{\lambda}^n$ (\emph{i.e.}, the diagram below commutes).
$$ 
\begin{tikzcd}[row sep = large, column sep = large]
\Coalg{\func{F}^n} \arrow[r, "O"] \arrow[d, "\widehat{\lambda}^n \circ (\text{--})"']
&\Coalg{\func{F}} \arrow[d, "\widehat{\lambda} \circ (\text{--})"] \\
  \Coalg{\kNA^n} \arrow[r, "\Omega"']
& \Coalg{\kNA}
\end{tikzcd}  
$$ 
Reducibility of $O$ and $\Omega$ is witnessed by the same templates (up to swapping $\diam{j}^\lambda$ and $\diam{j}^{\lambda^\ev}$). 
\end{prop}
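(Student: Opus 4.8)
The plan is to route the argument through the transpose $\widehat{\lambda}\colon\func{F}\Rightarrow\kNA$ of Definition~\ref{definition:PredicateLifting}, using that the transpose of $\lambda^\ev$ is the identity natural transformation on $\kNA$, so that $\lambda^\ev$ ``reads off'' the $\kNA$-structure directly. For $\vec{\gamma}=(\gamma_1,\dots,\gamma_n)\colon X\to\func{F}^nX$ I write $\widehat{\vec{\gamma}}:=(\widehat{\lambda}_X\circ\gamma_1,\dots,\widehat{\lambda}_X\circ\gamma_n)\colon X\to\kNA^nX$; this is exactly the value at $\vec{\gamma}$ of the functor $\widehat{\lambda}^n\circ(\text{--})$ occurring in the diagram. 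For an $(n,k)$-template $\tau\in\nkTempl$ in which only the modalities $\diam{j}^\lambda$ occur, let $\tau^{\ev}$ denote the $(n,k)$-template over $\{\diam{j}^{\lambda^\ev}\}_{j}$ obtained by replacing every occurrence of $\diam{j}^\lambda$ by $\diam{j}^{\lambda^\ev}$; since $\lambda$ and $\lambda^\ev$ have the same arity $k$, the assignment $\tau\mapsto\tau^{\ev}$ is a bijection between $(n,k)$-templates over $\{\diam{j}^\lambda\}_{j}$ and those over $\{\diam{j}^{\lambda^\ev}\}_{j}$, and this is the bijection referred to in the last sentence of the statement.

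The technical core is the following translation lemma, to be proved by induction on $\tau$: for every such $\tau$, every $\vec{\gamma}\colon X\to\func{F}^nX$, and every $\vec{\sigma}\in(A^X)^k$,
\[
\tau[\vec{\gamma},\vec{\sigma}]=\tau^{\ev}[\widehat{\vec{\gamma}},\vec{\sigma}].
\]
The variable case and the case $\tau=o(\tau_1,\dots,\tau_{\arity(o)})$ are immediate from the inductive definition of template evaluation and the pointwise action of $o$. For $\tau=\diam{j}^\lambda(\tau_1,\dots,\tau_k)$, put $\rho_i:=\tau_i[\vec{\gamma},\vec{\sigma}]$, which equals $\tau_i^{\ev}[\widehat{\vec{\gamma}},\vec{\sigma}]$ by the inductive hypothesis; then $\tau[\vec{\gamma},\vec{\sigma}]=\lambda_X(\vec{\rho})\circ\gamma_j$ while $\tau^{\ev}[\widehat{\vec{\gamma}},\vec{\sigma}]=\lambda^\ev_X(\vec{\rho})\circ(\widehat{\lambda}_X\circ\gamma_j)$, and these coincide because, for every $t\in\func{F}X$, the transpose satisfies $\widehat{\lambda}_X(t)(\vec{\rho})=\lambda_X(\vec{\rho})(t)$ while $\lambda^\ev_X(\vec{\rho})$ is evaluation at $\vec{\rho}$, so $\lambda^\ev_X(\vec{\rho})\bigl(\widehat{\lambda}_X(t)\bigr)=\widehat{\lambda}_X(t)(\vec{\rho})=\lambda_X(\vec{\rho})(t)$.

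For the direction $(\Rightarrow)$, fix a template $\tau$ over $\{\diam{j}^\lambda\}_{j}$ witnessing that $O$ is reducible with respect to $\lambda$, and define $\Omega$ on a $\kNA^n$-coalgebra $\vec{\nu}\colon X\to\kNA^nX$ by $\Omega(\vec{\nu})(x)(\vec{\sigma}):=\tau^{\ev}[\vec{\nu},\vec{\sigma}](x)$. Since $\Omega(\vec{\nu})$ is then a map $X\to\kNA X$, this is a well-defined $n$-ary coalgebra operation for $\kNA$; moreover $\lambda^\ev_X(\vec{\sigma})\circ\Omega(\vec{\nu})$ sends $x$ to $\Omega(\vec{\nu})(x)(\vec{\sigma})=\tau^{\ev}[\vec{\nu},\vec{\sigma}](x)$, so $\tau^{\ev}$ witnesses that $\Omega$ is independently reducible with respect to $\lambda^\ev$. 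Finally, for all $\vec{\gamma}$, $x$, $\vec{\sigma}$,
\[
\widehat{\lambda}_X\bigl(O(\vec{\gamma})(x)\bigr)(\vec{\sigma})=\lambda_X(\vec{\sigma})\bigl(O(\vec{\gamma})(x)\bigr)=\tau[\vec{\gamma},\vec{\sigma}](x)=\tau^{\ev}[\widehat{\vec{\gamma}},\vec{\sigma}](x)=\Omega(\widehat{\vec{\gamma}})(x)(\vec{\sigma}),
\]
by the definition of the transpose, reducibility of $O$, the translation lemma, and the definition of $\Omega$; hence $\widehat{\lambda}\circ O=\Omega\circ\widehat{\lambda}^n$, \emph{i.e.}\ the diagram commutes. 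Conversely, for $(\Leftarrow)$, let $\Omega$ be independently reducible with respect to $\lambda^\ev$, witnessed by a template $\tau^{\ev}$ over $\{\diam{j}^{\lambda^\ev}\}_{j}$, and assume $\widehat{\lambda}\circ O=\Omega\circ\widehat{\lambda}^n$; let $\tau$ be the preimage of $\tau^{\ev}$ under the bijection above. Then for all $\vec{\gamma}$, $x$, $\vec{\sigma}$,
\[
\lambda_X(\vec{\sigma})\bigl(O(\vec{\gamma})(x)\bigr)=\widehat{\lambda}_X\bigl(O(\vec{\gamma})(x)\bigr)(\vec{\sigma})=\Omega(\widehat{\vec{\gamma}})(x)(\vec{\sigma})=\lambda^\ev_X(\vec{\sigma})\bigl(\Omega(\widehat{\vec{\gamma}})(x)\bigr)=\tau^{\ev}[\widehat{\vec{\gamma}},\vec{\sigma}](x)=\tau[\vec{\gamma},\vec{\sigma}](x),
\]
using in turn the transpose, the hypothesis $\widehat{\lambda}\circ O=\Omega\circ\widehat{\lambda}^n$, the definition of $\lambda^\ev$, reducibility of $\Omega$ via $\tau^{\ev}$, and the translation lemma; thus $\tau$ witnesses that $O$ is independently reducible with respect to $\lambda$, and it is obtained from the witnessing template of $\Omega$ exactly by the swap $\diam{j}^\lambda\leftrightarrow\diam{j}^{\lambda^\ev}$.

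I expect the only delicate point to be the modal step of the translation lemma, where one must carefully unwind the definitions of the transpose $\widehat{\lambda}$ and of $\lambda^\ev$ to see that the two modal clauses agree; a secondary, purely routine point in direction $(\Rightarrow)$ is checking that the displayed formula genuinely defines a coalgebra operation for $\kNA$, which amounts to observing that template evaluation relative to $\kNA$ and $\lambda^\ev$ is well-defined and carrier-preserving. Everything else is bookkeeping with transposition and the inductive definition of $\tau[\vec{\gamma},\vec{\sigma}]$.
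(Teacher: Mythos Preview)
Your proposal is correct and follows essentially the same approach as the paper: both proofs hinge on the translation lemma $\tau[\vec{\gamma},\vec{\sigma}]=\tau^{\ev}[\widehat{\vec{\gamma}},\vec{\sigma}]$ proved by induction on $\tau$, define $\Omega$ via $\Omega(\vec{\nu})(x)(\vec{\sigma}):=\tau^{\ev}[\vec{\nu},\vec{\sigma}](x)$, and then derive both directions from this lemma together with the transpose identity $\widehat{\lambda}_X(t)(\vec{\rho})=\lambda_X(\vec{\rho})(t)$. Your presentation is slightly cleaner in that you isolate the translation lemma upfront and invoke it symmetrically, whereas the paper proves it inside the $(\Rightarrow)$ direction and then remarks that it was independent of that direction's hypothesis.
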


\begin{proof}
`($\Rightarrow$)': Let $\tau$ be a $(n,k)$-reduction template in which no $\diam{j}^{\lambda'}$ with $\lambda' \neq \lambda$ occurs (as required in the definition of independent reducibility), and let $\tau^\ev$ denote the corresponding template replacing all $\diam{j}^\lambda$ by $\diam{j}^{\lambda^\ev}$. Note that $\tau^\ev$ defines itself a $n$-ary coalgebra operation $\Omega^\tau \colon \Coalg{\kNA^n} \to \Coalg{\kNA}$ via $\Omega^\tau(\vec{\xi})(x)(\vec{\sigma}) = \tau^\ev[\vec{\xi}, \vec{\sigma}](x)$. 

To show that the above diagram commutes, we need to show that for all $\vec{\gamma} \colon X \to \func{F}^nX$ and $\vec{\sigma}\in (A^X)^k$ it holds that 
$$
\tau[\vec{\gamma}, \vec{\sigma}] = \tau^\ev[\widehat{\lambda}^n_X \circ \vec{\gamma}, \vec{\sigma}],
$$     
which we prove by induction on the structure of the template $\tau$. The cases $\tau = \omega_i$ and $\tau = o(\tau_1, \dots, \tau_{\arity(o)})$ are straightforward by definitions, thus let $\tau = \diam{j}^\lambda(\tau_1, \dots, \tau_k)$ and suppose the statement already holds for $\tau_1, \dots, \tau_k$. We compute 
\begin{align*}
\tau[\vec{\gamma}, \vec{\sigma}](x) &= \lambda_X\big(\tau_1[\vec{\gamma}, \vec{\sigma}], \dots, \tau_k[\vec{\gamma}, \vec{\sigma}]\big) \big( \gamma_j (x)\big) \\
& = \lambda_X\big(\tau^\ev_1[\widehat{\lambda}_X^n \circ \vec{\gamma}, \vec{\sigma}], \dots, \tau^\ev_k[\widehat{\lambda}_X^n \circ \vec{\gamma}, \vec{\sigma}]\big) \big( \gamma_j (x) \big)\\
& = (\widehat{\lambda}_X\circ \gamma_j) (x)\big(\tau^\ev_1[\widehat{\lambda}_X^n \circ \vec{\gamma}, \vec{\sigma}], \dots, \tau^\ev_k[\widehat{\lambda}_X^n \circ \vec{\gamma}, \vec{\sigma}]\big)\\ 
& = \big( \lambda^\ev_X\big(\tau^\ev_1[\widehat{\lambda}_X^n \circ \vec{\gamma}, \vec{\sigma}], \dots, \tau^\ev_k[\widehat{\lambda}_X^n \circ \vec{\gamma}, \vec{\sigma}]\big) \circ \widehat{\lambda}_X \circ \gamma_j (x)\big) = \tau^\ev[\widehat{\lambda}^n_X \circ \vec{\gamma}, \vec{\sigma}](x)
\end{align*}
as desired.

`($\Leftarrow$)': Note that independent reducibility of $\Omega$ with respect to $\lambda^\ev$ yields
$$
\Omega(\vec{\xi})(x)(\vec{\sigma}) = \big(\lambda^\ev_X(\vec{\sigma}) \circ{O}(\vec{\xi})\big)(x) = \tau^\ev[\vec{\xi}, \vec{\sigma}]
$$
for some $(n,k)$-reduction template $\tau^\ev$. Let $\tau$ be the corresponding reduction template where all `$\lambda^\ev$' are replaced by `$\lambda$'. Since by assumption the above diagram commutes, we have 
$$
\widehat{\lambda}_X \circ O(\vec{\gamma}) = \Omega(\widehat{\lambda}_X^n \circ \vec{\gamma}),  
$$
and thus (the proof of the last equation was done in the previous part, and did not depend on the assumption of `($\Rightarrow$)')
$$
\big(\lambda_X(\vec{\sigma})\circ O(\gamma)\big)(x) = \big(\widehat{\lambda}_X\circ O(\gamma)\big)(x)(\vec{\sigma}) = \Omega(\widehat{\lambda}_X^n \circ \vec{\gamma})(x)(\vec{\sigma}) = \tau^\ev[\vec{\xi}, \vec{\sigma}] = \tau[\vec{\gamma}, \vec{\sigma}]
$$
shows independent reducibility of $O$ with respect to $\lambda$ via $\tau$, finishing the proof.
\end{proof}

The following examples illustrate that the reduction template for $O$ is usually obtained from a concrete description of the corresponding $\Omega$ in a straightforward manner. 

\begin{exa}\label{example:IndependentlyReducibleOperations}
The following are examples of independently reducible operations. 
The predicate liftings are as in Example~\ref{example:PredicateLiftings} and the coalgebra operations as in Example~\ref{example:CoalgebraOperations-Induced}.
\begin{enumerate}
\item Consider $\mathcal{P}$ with $O$ induced by $\cup \colon \mathcal{P}^2 \Rightarrow \mathcal{P}$. 
For $\lambda^\Diamond$, the corresponding $\Omega$ is induced by $\vee \colon \NA^2 \to \NA$, which is shown using the property $\bigvee_{i \in I} (r_i \vee s_i) = \bigvee_{i \in I} r_i \vee \bigvee_I s_i$ of the complete lattice $\alg{A}$. 
Hence, we find the familiar reduction axiom 
$$
\diam{a_1 \cup a_2 } \varphi \leftrightarrow \diam{a_1} \varphi \vee \diam{a_2} \varphi.
$$
Similarly, for $\lambda^\Box$ we find the reduction axiom 
$$
[a_1 \cup a_2] \varphi \leftrightarrow [a_1] \varphi \wedge [a_2] \varphi
$$
since the corresponding $\Omega$ is induced by $\wedge \colon \NA^2 \to \NA$.
\item Consider $\PA$ with $O$ induced by $\vee$. For $\lambda^\Diamond$ and $\lambda^\Box$ one can use the same $\Omega$ as in (1) and, therefore, the `same' reduction axioms
$$
\diam{a_1 \vee a_2 } \varphi \leftrightarrow \diam{a_1} \varphi \vee \diam{a_2} \varphi \quad \text{and} \quad [a_1 \vee a_2] \varphi \leftrightarrow [a_1] \varphi \wedge [a_2] \varphi
$$
are valid.  
This can be shown using that the following identities hold in $\FLew$-algebras (see, \emph{e.g.}, \cite[Lemma 2.6]{GalatosJipsen2007}): 
$
(r_1 \vee r_2) \odot s = (r_1 \odot s) \vee (r_2 \odot s) \text{ and } (r_1 \vee r_2) \rightarrow s = (r_1 \rightarrow s) \wedge (r_2 \rightarrow s). 
$
\item Consider again $\PA$ with $O$ induced by $\vee$. If $\alg{A}$ is \emph{linear}, then $O$ is independently reducible with respect to the ($\alg{2}$-valued) $\lambda^r$, with $\Omega$ again induced by $\vee$, due to the equivalence $\bigvee_{i \in I} s_i \vee s'_i \geq r \text{ iff } \bigvee_{i \in I} s_i \geq r \text{ or } \bigvee_{i \in I} s'_i \geq r$. Thus, we again get
$$
\diam{a_1 \vee a_2 }^r \varphi \leftrightarrow \diam{a_1}^r \varphi \vee \diam{a_2}^r \varphi
$$ 
as reduction axiom.
\item Consider $\MA$ with $O$ induced by dual $(\cdot)^\partial$. With $\lambda^\ev$, clearly $\Omega$ is itself induced by dual and we obtain the reduction axiom 
$$
\diam{a^\partial} \varphi \leftrightarrow \neg \diam{a} \neg \varphi.
$$
A similar argument works for $\vee$ and $\wedge$.  
\item Consider $\mathcal{P} \circ \mathcal{P}$ and $O$ induced by $\Cup$.
This operation is independently reducible with respect to $\lambda^{k+1}$ by taking $\Omega$ induced by $\theta_X^{k+1} \colon (k+1)\mathcal{N}^2 X \to (k+1)\mathcal{N} X$ with
$
\theta_X^{k+1}(N_1, N_2) \colon (S_1, \dots, S_k, S) \mapsto \bigvee_{K \subseteq \{ 1, \dots, k \} } N_1(S^K_1, \dots, S^K_k, S) \wedge N_2(S^{K^\mathrm{c}}_1, \dots, S^{K^\mathrm{c}}_k, S)   
$, 
where $S_i^K = S_i$ if $i \in K$, otherwise $S_i^K = X$ and $S_i^{K^\mathrm{c}}$ defined similarly for $K^\mathrm{c} := \{ 1, \dots, k \}{\setminus}K$ 
(shown as in \cite[Proposition 3]{vanBenthemBezhanishviliEnqvist2019}). 
Letting $\psi_i^K = \psi_i$ if $i \in K$, otherwise $\psi_i^K = \top$ and similarly with $\psi_i^{K^\mathrm{c}}$, this yields the corresponding reduction axiom $\diam{a_1 \Cup a_2}(\psi_1, \dots, \psi_k, \varphi) \leftrightarrow \bigvee_{K \subseteq \{ 1, \dots, k \} } \diam{a_1}(\psi^K_1, \dots, \psi^K_k, \varphi) \wedge \diam{a_2}(\psi^{K^\mathrm{c}}_1, \dots, \psi^{K^\mathrm{c}}_k, \varphi)$.  
\item As an example where $O$ and $\Omega$ are not induced operations, consider the \emph{counter-domain} $\sim$ for $\mathcal{P}$ of \autoref{example:CoalgebraOperations-CounterDomain}. For $\lambda^\Box$, this operation is independently reducible via\footnote{In what follows, $\xi \colon X \to \NA X$, $x \in X$, $\sigma\in A^X$ and slightly abusing notation $0,1 \in A^X$ denote the corresponding constant maps.} $\Omega(\xi)(x)(\sigma) = \xi(x)(0) \rightarrow \sigma(x)$ and for $\lambda^\Diamond$ it is independently reducible via $\Omega(\xi)(x)(\sigma) = \neg \xi(x)(1) \wedge \sigma(x)$, yielding the axioms 
$$
[{\sim}a] \varphi \leftrightarrow ([a]\bot \rightarrow \varphi) \quad \text{and} \quad \diam{{\sim}a}\varphi \leftrightarrow (\neg\diam{a}\top \wedge \varphi),
$$
respectively.
\item Similarly to the previous item, the \emph{counter-support} for $\PA$ (Example~\ref{example:CoalgebraOperations-CounterDomain}) is independently reducible with respect to $\lambda^\Box$ if the characteristic function $\chi_{ \{ 1 \} }$ is term-definable in $\alg{A}$ and independently reducible with respect to $\lambda^\Diamond$ if the characteristic function $\chi_{ \{ 0 \} }$ is term-definable in $\alg{A}$. The corresponding reduction axioms are then given by
$$
[{\sim}a] \varphi \leftrightarrow (\chi_{ \{ 1 \} }([a]\bot) \rightarrow \varphi) \quad \text{and} \quad \diam{{\sim}a}\varphi \leftrightarrow (\chi_{ \{ 0 \} }(\diam{a}\top) \wedge \varphi),
$$
respectively. The instantial counter-domain for $\mathcal{P} \circ \mathcal{P}$ (Example~\ref{example:CoalgebraOperations-CounterDomain}) is also independently reducible with respect to all predicate liftings $\lambda^{k+1}$ with 
$$
\diam{{\sim}a}^{k+1}(\varphi_1, \dots, \varphi_k, \varphi) \leftrightarrow (\neg\diam{a}^{k+1}(\top, \dots, \top) \wedge \varphi_1 \wedge \dots \wedge \varphi_k \wedge \varphi) 
$$
as corresponding reduction axioms. \hfill $\blacksquare$
\end{enumerate}
\end{exa} 

In the next subsection, we turn our attention towards the topic of reducibility of tests.

\subsection{Reducible tests}\label{subsection:ReducingTests}
Compared to coalgebra operations, reducibility for tests (recall Definition~\ref{definition:TestOperation}) is simpler to describe as it can be captured by a \emph{term-definability} condition on $\alg{A}$.

\begin{defi}[\textsf{Reducible tests}]\label{definition:Reducibletest}
Let $\test \in \TeOp$ be a test and $\lambda \in \Pred$ be a $k$-ary predicate lifting. We call $\test$ \emph{reducible with respect to $\lambda$} if there is a $(k+1)$-ary $\alg{A}$-term function $t(x, x_1, \dots, x_k) \colon A^{k+1} \to A$ such that 
$$
\lambda_X(\sigma_1, \dots, \sigma_k) \circ \test(\sigma) = t^\mathrm{pw}(\sigma, \sigma_1, \dots, \sigma_k)$$
for all $\sigma, \sigma_i \in A^X$. We call $\test$ \emph{reducible} if it is reducible with respect to \emph{every} $\lambda \in \Pred$.  
\end{defi}

Similarly to Proposition~\ref{proposition:soundness-ReductionAxioms}, we get the following soundness result for tests.

\begin{prop}[\textsf{Soundness}]\label{proposition:SoundnessTests}
Let $t\colon A^{k+1} \to A$ witness that $\test$ is reducible with respect to $\lambda$. Then the formula 
$$
\diam{\test(\psi)}^\lambda (\varphi_1, \dots, \varphi_k) \leftrightarrow t(\psi, \varphi_1, \dots, \varphi_k)$$
is true at every state of any standard model.  
\end{prop}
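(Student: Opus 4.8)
The plan is to mirror the structure of the soundness proof for reducible coalgebra operations (Proposition~\ref{proposition:soundness-ReductionAxioms}), but in the simpler setting of tests, where the witnessing data is an $\alg{A}$-term function rather than a template. First I would fix an arbitrary standard model $\gamma \colon \Act \to (\func{F}X)^X$ together with its valuation $\sem{\cdot}$, fix a state $x \in X$, and fix formulas $\varphi_1, \dots, \varphi_k$ and $\psi$. Since the biconditional is the formula $\diam{\test(\psi)}^\lambda(\varphi_1, \dots, \varphi_k) \leftrightarrow t(\psi, \varphi_1, \dots, \varphi_k)$ and $\leftrightarrow$ is interpreted pointwise by the residuum-based biconditional of $\alg{A}$, it suffices to show that the two sides have the same semantic value at $x$, i.e.
\[
\sem{\diam{\test(\psi)}^\lambda(\varphi_1, \dots, \varphi_k)}(x) = \sem{t(\psi, \varphi_1, \dots, \varphi_k)}(x),
\]
because then the biconditional evaluates to $1$ there.

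The key computation then unfolds both sides using the semantic clauses from Definition~\ref{definition:model} and the fact that $\gamma$ is standard, so $\gamma_{\test(\psi)} = \test(\sem{\psi})$. On the left, the modal clause gives $\sem{\diam{\test(\psi)}^\lambda(\vec\varphi)} = \lambda_X(\sem{\varphi_1}, \dots, \sem{\varphi_k}) \circ \gamma_{\test(\psi)} = \lambda_X(\sem{\vec\varphi}) \circ \test(\sem{\psi})$. Applying reducibility of $\test$ with respect to $\lambda$ (Definition~\ref{definition:Reducibletest}) to the predicates $\sigma = \sem{\psi}$ and $\sigma_i = \sem{\varphi_i}$ yields $\lambda_X(\sem{\vec\varphi}) \circ \test(\sem{\psi}) = t^{\mathrm{pw}}(\sem{\psi}, \sem{\varphi_1}, \dots, \sem{\varphi_k})$. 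On the right, I would note that $t$ is a term in the signature of $\alg{A}$ (possibly the extended signature), so $t(\psi, \varphi_1, \dots, \varphi_k)$ is a bona fide formula whose semantics, obtained by iterating the clause $\sem{o(\chi_1, \dots)} = o^{\mathrm{pw}}(\sem{\chi_1}, \dots)$ for each connective $o$ occurring in $t$, is exactly $t^{\mathrm{pw}}(\sem{\psi}, \sem{\varphi_1}, \dots, \sem{\varphi_k})$. This last fact is a routine induction on the structure of the term $t$ — entirely analogous to the inductive argument in the proof of Proposition~\ref{proposition:soundness-ReductionAxioms} establishing $\sem{\tau[\vec a, \vec\varphi]} = \tau[\vec\gamma_a, \sem{\vec\varphi}]$, only without the modal case — and I would either prove it in one line or simply cite that argument. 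Combining the two computations gives the desired equality of values, hence truth of the biconditional at $x$; since $x$ and the model were arbitrary, the formula is true at every state of every standard model.

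There is essentially no obstacle here: the content is packaged into Definition~\ref{definition:Reducibletest}, and the only thing to check is that term substitution into formulas commutes with the semantics, which is immediate from how the propositional connectives are interpreted pointwise. The mild care needed is just to be explicit that $t$ may use operations from an extended signature of $\alg{A}$ and that this is harmless because $\Formu$ is built over all of $\sign(\alg{A})$; beyond that the proof is a short unwinding of definitions.
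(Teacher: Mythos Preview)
Your proposal is correct and follows essentially the same approach as the paper's proof: both unfold the semantics using $\gamma_{\test(\psi)} = \test(\sem{\psi})$ in a standard model, apply the reducibility equation from Definition~\ref{definition:Reducibletest}, and identify the result with $\sem{t(\psi,\varphi_1,\dots,\varphi_k)}$ via pointwise interpretation of connectives. The paper presents this as a single chain of equalities without explicitly singling out the term-induction step you mention, but the content is identical.
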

\begin{proof}
In any standard model $\gamma\colon \Act \to (\func{F}X)^X$, we simply compute 
\begin{align*}
\sem{\diam{\test(\psi)}^\lambda(\varphi_1,\dots,\varphi_k)} & = \lambda_X(\sem{\varphi_1}, \dots, \sem{\varphi_k}) \circ \gamma_{\test(\psi)} \\
& = \lambda_X(\sem{\varphi_1}, \dots, \sem{\varphi_k}) \circ \test(\sem{\psi}) \\ 
& = t^\pw(\sem{\psi}, \sem{\varphi_1}, \dots, \sem{\varphi_k}) = \sem{t(\psi, \varphi_1, \dots, \varphi_k)} 
\end{align*}
as desired (here, we used that $\gamma_{\test(\psi)} = \test(\sem{\psi})$ holds in every standard model and reducibility of $\test$ with respect to $\lambda$). 
\end{proof}

We also get the following analogue of Proposition~\ref{proposition:ReducibleImpliesSafe} for tests, regarding the relationship between reducibility and \emph{safety} of tests (recall Definition~\ref{definition:SafeTestOperation}).

\begin{prop}\label{proposition:ReducibleTestsAreSafe}
Let the collection of predicate liftings $\Pred$ be separating and let $\test \in \TeOp$ be a test. If $\test$ is reducible, then it is safe.  
\end{prop}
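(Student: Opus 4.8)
The plan is to mirror the proof of Proposition~\ref{proposition:ReducibleImpliesSafe}, but using the term-definability witness for tests (Definition~\ref{definition:Reducibletest}) in place of reduction templates. So let $\test \in \TeOp$ be reducible and let $\sigma \colon X \to A$ and $\sigma' \colon Y \to A$ be $\func{A}$-coalgebras (\emph{i.e.}, predicates) with $f \colon X \to Y$ a coalgebra morphism $\sigma \to \sigma'$, which by definition means $\sigma' \circ f = \sigma$. I want to show that $f$ is then also an $\func{F}$-coalgebra morphism $\test(\sigma) \to \test(\sigma')$, that is, $\func{F}f \circ \test(\sigma) = \test(\sigma') \circ f$. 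Since $\Pred$ is separating, it suffices to check this after post-composing with each $\widehat{\lambda}_Y$ for $\lambda \in \Pred$, \emph{i.e.}, to verify
$$
\widehat{\lambda}_Y \circ \func{F}f \circ \test(\sigma) = \widehat{\lambda}_Y \circ \test(\sigma') \circ f.
$$

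Next I would evaluate both sides at an arbitrary $x \in X$ and feed in an arbitrary tuple $\vec{\sigma}^\ast = (\sigma_1, \dots, \sigma_k) \in (A^Y)^k$, where $k = \arity(\lambda)$. For the left-hand side, naturality of $\widehat{\lambda}$ gives $\widehat{\lambda}_Y \circ \func{F}f = \kNA f \circ \widehat{\lambda}_X$, and unfolding the definition of $\kNA f$ turns the input tuple $\vec{\sigma}^\ast$ into $\vec{\sigma}^\ast \circ f^k = (\sigma_1 \circ f, \dots, \sigma_k \circ f) \in (A^X)^k$; transposing back and using reducibility of $\test$ with respect to $\lambda$ via a $(k+1)$-ary term function $t$, this equals $t^{\mathrm{pw}}(\sigma, \sigma_1 \circ f, \dots, \sigma_k \circ f)(x)$. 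For the right-hand side, reducibility directly gives $\widehat{\lambda}_Y \circ \test(\sigma')$ evaluated at $f(x)$ on the tuple $\vec{\sigma}^\ast$ equal to $t^{\mathrm{pw}}(\sigma', \sigma_1, \dots, \sigma_k)(f(x)) = t(\sigma'(f(x)), \sigma_1(f(x)), \dots, \sigma_k(f(x)))$. Since $\sigma = \sigma' \circ f$, the first argument $\sigma(x)$ of the left-hand expression equals $\sigma'(f(x))$, and the remaining arguments $(\sigma_i \circ f)(x) = \sigma_i(f(x))$ match as well, so both sides evaluate $t$ at the same tuple and are therefore equal.

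Unlike the coalgebra-operation case, there is no auxiliary induction on templates here: the term-definability condition packages the whole reduction into a single $\alg{A}$-term $t$, and the argument collapses to a one-line computation showing that the two argument tuples into $t$ coincide. The only point requiring a little care is the bookkeeping around transposition — keeping straight the correspondence between $\lambda_X \colon (A^X)^k \to A^{\func{F}X}$ and $\widehat{\lambda}_X \colon \func{F}X \to A^{(A^X)^k}$, and making sure $\kNA f$ acts by precomposition with $f^k$ on the predicate-tuple coordinate. I expect this bookkeeping, rather than any conceptual difficulty, to be the main (and only mild) obstacle; it can be handled exactly as in the displayed chain of equalities in the proof of Proposition~\ref{proposition:ReducibleImpliesSafe}.
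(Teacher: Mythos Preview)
Your proposal is correct and follows essentially the same approach as the paper's proof: both reduce the question via separation to checking $\widehat{\lambda}_Y \circ \func{F}f \circ \test(\sigma) = \widehat{\lambda}_Y \circ \test(\sigma') \circ f$, use naturality of $\widehat{\lambda}$ on the left, and then invoke the reducibility term $t$ to see that both sides evaluate $t$ at the same tuple. The only cosmetic difference is that the paper phrases the target via the characterisation of safe tests in Lemma~\ref{lemma:SafetyTestOperations}(iii) (working with a single $\sigma \in A^Y$ and $\sigma \circ f$), whereas you work directly from the definition of safety with two predicates $\sigma,\sigma'$ related by $\sigma' \circ f = \sigma$; the resulting computation is identical.
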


\begin{proof}
To show this, we use the characterisation of safe tests of Lemma~\ref{lemma:SafetyTestOperations}(iii). Given $f \colon X \to Y$ and $\sigma \in A^Y$, by separation it suffices to show that for every $\lambda \in \Pred$ it holds that 
$$\widehat{\lambda}_Y \circ \func{F}f \circ \test(\sigma \circ f) = \widehat{\lambda}_Y \circ \test(\sigma) \circ f.$$
Say $\arity(\lambda) =: k$ and let $t$ be the $(k+1)$-ary term-function witnessing reducibility of $\test$ with respect to $\lambda$. Starting on the left-hand side of the above equation, note that $\widehat{\lambda}_Y \circ \func{F}f \circ \test(\sigma \circ f) = \kNA f \circ \widehat{\lambda}_X \circ \test(\sigma \circ f)$ by naturality. From here, for any $x\in X$ and $\sigma_1, \dots, \sigma_k \in A^Y$ we use reducibility to compute on the one hand
\begin{align*}
\big(\kNA f \circ \widehat{\lambda}_X \circ \test(\sigma \circ f)\big) (x)(\sigma_1, \dots, \sigma_k) 
& = \big(\lambda_X(\sigma_1 \circ f, \dots, \sigma_k \circ f) \circ \test(\sigma \circ f)\big)(x) \\ 
& = t^\mathrm{pw}(\sigma \circ f, \sigma_1 \circ f, \dots, \sigma_k \circ f)(x).
\end{align*}     
On the other hand we use reducibility as well and get
\begin{align*}
(\widehat{\lambda}_Y \circ \test(\sigma) \circ f)(x)(\sigma_1, \dots, \sigma_k) 
& = \big(\lambda_Y(\sigma_1, \dots, \sigma_k) \circ \test(\sigma)\big) \big(f(x)\big) \\ 
& = t^\mathrm{pw}(\sigma, \sigma_1, \dots, \sigma_k)\big(f(x)\big).
\end{align*}
Since both of these equal $t\big(\sigma(f(x)), \sigma_1(f(x)), \dots, \sigma_k(f(x))\big)$, this finishes the proof.
\end{proof}

In the following, we show that the tests we included in the coalgebraic logics of Subsection~\ref{subsection:SyntaxSemantics} are reducible.     

\begin{exa}\label{example:ReducibleTests}
The following are examples of reducible tests. The predicate liftings are as in Example~\ref{example:PredicateLiftings} and the tests as in Example~\ref{example:TestOperations}.  
\begin{enumerate}
\item Let $\func{F}$ be a pointed monad and $\test_P$ as in Example~\ref{example:TestOperations}(1), checking the `property' $P \subseteq A$. 
Furthermore, assume the characteristic function $\chi_P$ is $\alg{A}$-term definable (\emph{e.g.}, this holds for $\lucas_n$ and any other \emph{semi-primal} $\alg{A}$, see Remark~\ref{remark:SemiPrimal}), $\lambda$ is unary and $\widehat{\lambda}$ is a monad morphism. 
If $\widehat{\lambda}_X(\bot_{\func{F}X}) = 1$ for all $X$ (\emph{i.e.}, $\lambda$ is `$\Box$-like'), we can take 
$$
\diam{\test_P(\psi)}\varphi \leftrightarrow (\chi_P(\psi) \rightarrow \varphi)
$$ 
as reduction axiom. To see this, note that 
$$
\lambda_X(\sem{\varphi})\big(\test_P(\sem{\psi})(x)\big) = \begin{cases}
\lambda_X(\sem{\varphi})\big(\eta_X(x)\big) = \sem{\varphi}(x) & \text{if $\sem{\psi}(x) \in P$} \\
\lambda_X(\sem{\varphi})(\bot_{\func{F}X}) = 1 & \text{if $\sem{\psi}(x) \notin P$}
\end{cases}
$$
where in the first case we use that $\widehat{\lambda}$ is a monad morphism $\func{F} \Rightarrow \NA$. On the other hand, we have 
$$
\chi_P\big(\sem{\psi}(x)\big) \rightarrow \sem{\varphi}(x) = \begin{cases}
1 \rightarrow \sem{\varphi}(x) = \sem{\varphi}(x) & \text{if $\sem{\psi}(x) \in P$} \\
0 \rightarrow \sem{\varphi}(x) = 1 & \text{if $\sem{\psi}(x) \notin P$}
\end{cases}
$$ 
as well, finishing the proof. 
For example, with $P = \{ 1 \}$ and $\alg{A} = \lucas_n$ we recover the axiom $[\psi?]\varphi \leftrightarrow (\psi^n \rightarrow \varphi)$ used in \cite[Proposition 2.3]{Teheux2014}. 

If $\widehat{\lambda}_X(\bot_{\func{F}X}) = 0$ for all $X$ (\emph{i.e.}, $\lambda$ is `$\Diamond$-like'), we may take 
$$
\diam{\test_P(\psi)}\varphi \leftrightarrow (\chi_P(\psi) \wedge \varphi)
$$ as reduction axiom, which is shown similarly.    
\item Consider $\PA$ with the $\alg{A}$-test $\test$ as in Example~\ref{example:TestOperations}(2). For $\lambda^\Box$ corresponding to the $\alg{A}$-valued $\Box$-modality on $\alg{A}$-labelled frames we have the reduction axiom 
$$
\diam{\test(\psi)}\varphi \leftrightarrow (\psi \rightarrow \varphi)
$$
and for $\lambda^\Diamond$ corresponding to the $\alg{A}$-valued $\Diamond$-modality on $\alg{A}$-labelled frames we have the reduction axiom 
$$
\diam{\test(\psi)}\varphi \leftrightarrow (\psi \odot \varphi).
$$ 
To see this, one simply computes 
\begin{align*}
\lambda_X^\Box(\sem{\varphi})\big(\test(\sem{\psi})(x)\big) & = \lambda_X^\Box(\sem{\varphi})(e_x \odot^\pw \sem{\psi}) \\
& = \bigwedge_{x'\in X} \big(e_x(x') \odot \sem{\psi}(x')\big) \rightarrow \sem{\varphi}(x')\\
& = \sem{\psi}(x) \rightarrow \sem{\varphi}(x) 
\end{align*} 
and
\begin{align*}
\lambda_X^\Diamond(\sem{\varphi})\big(\test(\sem{\psi})(x)\big) & = \lambda_X^\Diamond(\sem{\varphi})(e_x \odot^\pw \sem{\psi}) \\
& = \bigvee_{x'\in X} e_x(x') \odot \sem{\psi}(x') \odot \sem{\varphi}(x') \\
& = \sem{\psi}(x) \odot \sem{\varphi}(x) 
\end{align*} 
as desired.
\item Again, consider the functor $\PA$ with $\test$ as in Example~\ref{example:TestOperations}(2), noting that this also defines a $2$-test for $\PA$. Let $r\in A{\setminus}\{ 0 \}$ and $\lambda^r$ be the corresponding $2$-predicate lifting as in Example~\ref{example:PredicateLiftings}(3). Then we find the reduction axiom 
$$
\diam{\test(\psi)}^r \varphi \leftrightarrow (\psi \wedge \varphi).
$$  
To see this, we observe that
$$
\lambda^r_X(\sem{\varphi})\big(\test(\sem{\psi})(x)\big) = 1 \Leftrightarrow \bigvee_{x' \in \sem{\varphi}} e_x(x') \odot \sem{\psi}(x') \geq r \Leftrightarrow \sem{\varphi}(x) \wedge \sem{\psi}(x) = 1  
$$
as desired.
\item For $\NA$ or $\MA$, predicate lifting $\lambda^\ev$ and $\test$ as in Example~\ref{example:TestOperations}(3), we again find the reduction axiom 
$$
\diam{\test(\psi)}\varphi \leftrightarrow (\psi \odot \varphi),
$$ 
which is shown by
\begin{align*}
\lambda^\ev_X(\sem{\varphi})\big(\test(\sem{\psi})(x)\big) & = \lambda^\ev(\sem{\varphi})\big(\ev_x((\text{-}) \odot^\pw \sem{\psi})\big) \\
& = \ev_x(\sem{\varphi} \odot^\pw \sem{\psi}) \\
&= \sem{\varphi}(x) \odot \sem{\psi}(x)
\end{align*}
as desired.
\item For $\mathcal{P} \circ \mathcal{P}$, $\lambda^{k+1}$ and $\test$ from Example~\ref{example:TestOperations}(4), we get $\diam{\test(\psi)}^{k+1} (\varphi_1, \dots, \varphi_k, \varphi) \leftrightarrow (\psi \wedge \varphi \wedge \varphi_1 \wedge \dots \wedge \varphi_k)$ analogously to \cite[Proposition 3]{vanBenthemBezhanishviliEnqvist2019}. \hfill $\blacksquare$  
\end{enumerate}     
\end{exa} 

With \emph{soundness} for reducible operations taken care of, next we deal with \emph{completeness}.

\section{Strong completeness via quasi-canonical models}\label{section:Completeness}
From now on, we assume that \emph{every} coalgebra operation $O \in \CoOp$ and test $\test \in \TeOp$ is reducible (Definitions~\ref{definition:ReducibleCoalgOperation} \& \ref{definition:Reducibletest}) and that we already know a sound and strongly complete axiomatization of the \emph{underlying $\alg{A}$-valued coalgebraic logic} of $\func{F}$ and $\Pred$, denoted by $\LT$ (some examples are found in Subsection~\ref{subsection:example-applications}). 

\begin{defi}[\textsf{Coalgebraic dynamic logic $\LTDyn$}]\label{definition: LTDyn}
The \emph{dynamic coalgebraic logic} $\LTDyn \subseteq \Formu$ is the smallest set of dynamic formulas satisfying the following conditions. 
\begin{itemize}
    \item For every action $a \in \Act$, the set $\LTDyn$ contains the axioms of $\LT$ with respect to the modalities $\{\diam{a}^\lambda\}_{\lambda\in \Lambda}$ and is closed under the corresponding rules of $\LT$.
    \item For every pair $(O, \lambda) \in \CoOp\times\Pred$, the set $\LTDyn$ contains all instances of the reduction axiom 
 $$\diam{O(\vec{a})}^\lambda(\vec{\varphi}) \leftrightarrow \tau[\vec{a}, \vec{\varphi}],$$
where the template $\tau$ witnesses reducibility of $O$ with respect to $\lambda$ (Definition~\ref{definition:ReducibleCoalgOperation}).
    \item For every pair $(\test, \lambda) \in \TeOp\times\Pred$, the set $\LTDyn$ contains all instances of the reduction axiom 
    $$\diam{\test(\psi)}^\lambda (\vec{\varphi}) \leftrightarrow t(\psi, \vec{\varphi}),$$
where the $\alg{A}$-term-function $t$ witnesses reducibility of $\test$ with respect to $\lambda$ (Definition~\ref{definition:Reducibletest}).
\end{itemize}   
\end{defi}

The goal of this section is to prove \emph{strong completeness} 
$$\Gamma \models \varphi \Leftrightarrow \Gamma \vdash_{\LTDyn} \varphi,$$ 
where the relation on the left-hand side is local semantic entailment in standard models (recall Subsection~\ref{subsection:SyntaxSemantics}) and the relation on the right-hand side intuitively holds if $\varphi$ can be derived from $\Gamma$ and $\LTDyn$ together with $\alg{A}$-reasoning (recall Subsection~\ref{subsection:Many-valuedModalLogic}). It will be formally defined in terms of \emph{non-modal homomorphisms}, which we make precise in the next subsection.

\subsection{Non-modal homomorphisms and quasi-canonical models}\label{subsection:CanonicalModels}
In the two-valued setting, canonical models are usually based on the collection of \emph{maximally consistent theories}. The corresponding generalisation we use for our many-valued logics is given by \emph{non-modal homomorphisms} \cite{BouEstevaGodo2011} (let us mention here that non-modal homomorphisms have previously been used for many-valued PDL in \cite{Teheux2014,Sedlar2020} and many-valued coalgebraic logic in \cite{LinLiau2023}).

\begin{defi}[\textsf{Non-modal homomorphisms}]\label{definition:Non-modal-homomorphism}
A map $h \colon \Formu \to \alg{A}$ is a \emph{non-modal homomorphism}for $\LTDyn$ if it respects the $\alg{A}$-operations\footnote{\emph{I.e.}, $h$ is a homomorphism in the sense of $h(o(\varphi_1, \dots, \varphi_{\arity(o)})) = o^\alg{A}(h(\varphi_1), \dots, h(\varphi_{\arity(o)})$ for all $o \in \sign(\alg{A})$.} and satisfies $h(\LTDyn) = \{1 \}$. The set of all non-modal homomorphisms is denoted by $\NMH$.  
\end{defi} 

The local syntactic entailment is now defined via
$$
\Gamma \vdash_{\LTDyn} \varphi \quad \text{if and only if} \quad h(\Gamma) = \{ 1 \} \Rightarrow h(\varphi) = 1 \text{ for all } h \in \NMH.
$$ 
The collection of non-modal homomorphisms $\NMH$ is the carrier set of the \emph{quasi-canonical models} we construct later on (generalising \cite{SchroederPattinson2009,HansenKupkeLeal2014}, the `\emph{quasi}' refers to possible non-uniqueness). Such models are defined as follows.

\begin{defi}[\textsf{Quasi-canonical models}]\label{definition:quasi-can-model}
A map $\zeta \colon \Act \to (\func{F} \NMH)^\NMH$ together with the \emph{canonical propositional valuation} $\sem{p}^\mathrm{c} := \ev_p$ (\emph{i.e.}, the evaluation map $h \mapsto h(p)$) is a \emph{quasi-canonical model for $\LTDyn$} if the \emph{coherence property} 
$$\sem{\varphi}^\mathrm{c} = \ev_\varphi$$
extends to \emph{all} dynamic formulas $\varphi \in \Formu$. 
\end{defi} 

The existence of a quasi-canonical standard model is sufficient for strong completeness, which is shown by a the following standard argument. 

\begin{prop}[\textsf{Strong completeness}]
If there exists a quasi-canonical standard model $\zeta$ for $\LTDyn$, then 
$$\Gamma \vdash_{\LTDyn} \varphi \text{ if and only if } \Gamma \models \varphi$$
for all $(\Gamma, \varphi) \in \mathcal{P}(\Formu) \times \Formu$.  
\end{prop}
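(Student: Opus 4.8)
The plan is to prove the two implications separately. Soundness, $\Gamma \vdash_{\LTDyn} \varphi \Rightarrow \Gamma \models \varphi$, follows from the soundness results established earlier. First I would observe that every axiom of $\LTDyn$ is valid in all standard models: the axioms inherited from $\LT$ are sound because $\LT$ is assumed sound and because, in any standard model, the modalities $\{\diam{a}^\lambda\}_{\lambda\in\Pred}$ are interpreted exactly by the predicate liftings $\lambda$ applied to $\gamma_a$ (so the standard model restricted to a fixed action $a$ is an ordinary $\func{F}$-model for $\LT$); the reduction axioms for coalgebra operations are sound by Proposition~\ref{proposition:soundness-ReductionAxioms}; and the reduction axioms for tests are sound by Proposition~\ref{proposition:SoundnessTests}. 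Since the deduction relation $\vdash_{\LTDyn}$ is, by definition, entailment in the algebra $\alg{A}$ across all non-modal homomorphisms that send $\LTDyn$ to $\{1\}$, and since every standard model (via the map $\varphi \mapsto \sem{\varphi}(x)$ at a fixed state $x$) induces such a non-modal homomorphism --- here one uses that $\sem{\cdot}(x)$ respects all $\alg{A}$-operations and sends every valid formula, in particular every formula in $\LTDyn$, to $1$ --- we conclude that $h(\Gamma)=\{1\}$ implies $h(\varphi)=1$ for this $h$, which is exactly $\sem{\psi}(x)=1$ for all $\psi\in\Gamma$ implying $\sem{\varphi}(x)=1$.

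For completeness, $\Gamma \models \varphi \Rightarrow \Gamma \vdash_{\LTDyn} \varphi$, I would argue by contraposition using the quasi-canonical standard model $\zeta$ whose existence is hypothesised. Suppose $\Gamma \not\vdash_{\LTDyn} \varphi$. By the definition of $\vdash_{\LTDyn}$ there is a non-modal homomorphism $h \in \NMH$ with $h(\Gamma) = \{1\}$ but $h(\varphi) \neq 1$. Now $h$ is precisely a state of the quasi-canonical model $\zeta$, and the coherence property $\sem{\psi}^{\mathrm c} = \ev_\psi$ (which holds for \emph{all} dynamic formulas $\psi$ because $\zeta$ is assumed to be quasi-canonical) gives $\sem{\psi}^{\mathrm c}(h) = \ev_\psi(h) = h(\psi)$ for every $\psi \in \Formu$. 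Hence at the state $h$ of the standard model $\zeta$ we have $\sem{\psi}^{\mathrm c}(h) = h(\psi) = 1$ for all $\psi \in \Gamma$ while $\sem{\varphi}^{\mathrm c}(h) = h(\varphi) \neq 1$. Since $\zeta$ is a \emph{standard} model, this witnesses $\Gamma \not\models \varphi$, completing the contrapositive.

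The structure of the argument is therefore short and essentially bookkeeping: both directions reduce to matching up the definition of $\vdash_{\LTDyn}$ (entailment over $\NMH$) with the definition of $\models$ (entailment over states of standard models), the bridge in one direction being ``a state of a standard model yields a non-modal homomorphism'' and in the other ``a non-modal homomorphism is a state of the given quasi-canonical standard model, by coherence''. I expect the only genuinely non-trivial point --- and the one that does real work --- to be the soundness of the $\LT$-axioms and rules in the dynamic setting, i.e.\ checking that fixing an action $a$ turns a standard dynamic model into a bona fide model of the underlying coalgebraic modal logic $\LT$ so that the assumed soundness of $\LT$ transfers. Everything else (the reduction axioms, the homomorphism-from-state observation, the coherence-based transfer) is either already proved earlier in the paper or immediate from the definitions; the heavy lifting --- actually \emph{constructing} a quasi-canonical standard model --- is deferred to later sections and is not needed here, since the statement only assumes such a model exists.
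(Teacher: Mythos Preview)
Your proposal is correct and follows essentially the same approach as the paper: the completeness direction is verbatim the contrapositive argument via coherence that the paper gives, and the soundness direction is the same idea (validity of all $\LTDyn$ axioms in standard models), though you spell out more than the paper does --- the paper simply cites Propositions~\ref{proposition:soundness-ReductionAxioms} and~\ref{proposition:SoundnessTests} for soundness, whereas you correctly add the observation that the inherited $\LT$-axioms are also sound and make explicit the bridge ``a state of a standard model yields an element of $\NMH$''.
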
 
\begin{proof}
Soundness `$\Rightarrow$' was established in Propositions~\ref{proposition:soundness-ReductionAxioms} \& ~\ref{proposition:SoundnessTests}. For completeness `$\Leftarrow$', proceed by contrapositive assuming that $\Gamma \not\vdash \varphi$. Then there exists a non-modal homomorphism $h \in \NMH$ with $h(\Gamma) = \{ 1 \}$ but $h(\varphi) \neq 1$. Thus, in the quasi-canonical model $\zeta$ we have $\sem{\psi}^\mathrm{c}(h) = \ev_\psi(h) = h(\psi) = 1$ for all $\psi \in \Gamma$, while $\sem{\psi}^\mathrm{c}(h) = \ev_\psi(h) = h(\psi) \neq 1$. In other words, the standard model $\zeta$ witnesses $\Gamma \not\models \varphi$ as desired.       
\end{proof}  

The main result of this subsection shows that it suffices to establish the coherence property for \emph{atomic} actions in order to obtain a standard quasi-canonical model (we construct such a model which satisfies the coherence property on atomic actions using one-step completeness in Subsection~\ref{subsection:OneStepCompleteness}).

\begin{thm}\label{theorem:StandardQuasiCanModel}
Let $\zeta^0 \colon \AtAct \to (\func{F}\NMH)^{\NMH}$ be coherent for \emph{atomic} actions, that is, 
$$
\lambda_\NMH \big(\ev_{\varphi_1}, \dots, \ev_{\varphi_{\arity(\lambda)}}\big)\big(\zeta_{a_0}^0 (h)\big) = h\big(\diam{a_0}^\lambda (\varphi_1, \dots, \varphi_{\arity(\lambda)})\big) 
$$
holds for all $a_0 \in \AtAct$, $\lambda \in \Pred$, $\varphi_i \in \Formu$ and $h \in \NMH$. Let $\zeta \colon \Act \to (\func{F}\NMH)^\NMH$ be the standard extension of $\zeta^0$ as in Definition~\ref{definition:StandardModel}. Then $\zeta$ is a quasi-canonical standard model for $\LTDyn$.     
\end{thm}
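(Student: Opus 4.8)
The plan is to establish, by a single simultaneous structural induction on dynamic formulas $\varphi \in \Formu$ and actions $a \in \Act$ (which are defined by mutual recursion, Definition~\ref{definition:ActionsFormulas}), the two claims: (I) that $\sem{\varphi}^\mathrm{c} = \ev_\varphi$ for every $\varphi$ --- which is exactly the coherence property required by Definition~\ref{definition:quasi-can-model} --- and (II) that $\lambda_\NMH(\ev_{\varphi_1}, \dots, \ev_{\varphi_{\arity(\lambda)}})(\zeta_a(h)) = h(\diam{a}^\lambda(\varphi_1, \dots, \varphi_{\arity(\lambda)}))$ for every $a$, every $\lambda \in \Pred$, all $\varphi_i \in \Formu$ and all $h \in \NMH$. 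The base case of (I) for $\varphi = p \in \Prop$ is the definition of the canonical valuation $\sem{p}^\mathrm{c} = \ev_p$, and the base case of (II) for $a = a_0 \in \AtAct$ is precisely the hypothesis of the theorem, using $\zeta_{a_0} = \zeta^0_{a_0}$ since $\zeta$ is the standard extension of $\zeta^0$ (Definition~\ref{definition:StandardModel}).

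For the inductive step on formulas, the case $\varphi = o(\varphi_1, \dots, \varphi_{\arity(o)})$ is immediate from the inductive hypothesis together with $h$ respecting the $\alg{A}$-operations (Definition~\ref{definition:Non-modal-homomorphism}) and $\sem{\cdot}^\mathrm{c}$ being the pointwise extension. For $\varphi = \diam{a}^\lambda(\varphi_1, \dots, \varphi_{\arity(\lambda)})$, I would unfold $\sem{\varphi}^\mathrm{c} = \lambda_\NMH(\sem{\varphi_1}^\mathrm{c}, \dots, \sem{\varphi_{\arity(\lambda)}}^\mathrm{c}) \circ \zeta_a$, apply the inductive hypothesis (I) to each $\varphi_i$ to replace $\sem{\varphi_i}^\mathrm{c}$ by $\ev_{\varphi_i}$, and then invoke the inductive hypothesis (II) for the (structurally smaller) action $a$ to conclude $\sem{\varphi}^\mathrm{c}(h) = h(\varphi)$.

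The inductive step on actions carries the real work. Its core is an auxiliary lemma, proved by induction on the structure of templates (Definition~\ref{definition:templates}): for every $(n,k)$-template $\tau \in \nkTempl$, actions $a_1, \dots, a_n$ each satisfying (II), formulas $\varphi_1, \dots, \varphi_k$ and $h \in \NMH$, one has $\tau[\vec{\zeta}_a, \vec{\ev}_\varphi](h) = h(\tau[\vec a, \vec\varphi])$, where $\vec{\zeta}_a = (\zeta_{a_1}, \dots, \zeta_{a_n})$ and $\vec{\ev}_\varphi = (\ev_{\varphi_1}, \dots, \ev_{\varphi_k})$; here the $\omega_i$-case is trivial, the $o(\cdot)$-case uses that $h$ is an $\alg{A}$-homomorphism, and the $\diam{j}^\lambda(\cdot)$-case uses (II) for $a_j$ together with the template-induction hypothesis. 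Granting this: for $a = O(a_1, \dots, a_n)$, standardness of $\zeta$ gives $\zeta_a = O(\zeta_{a_1}, \dots, \zeta_{a_n})$, and picking a template $\tau$ witnessing reducibility of $O$ with respect to $\lambda$ (Definition~\ref{definition:ReducibleCoalgOperation}) yields $\lambda_\NMH(\vec{\ev}_\varphi) \circ \zeta_a = \tau[\vec{\zeta}_a, \vec{\ev}_\varphi]$, whose value at $h$ is $h(\tau[\vec a, \vec\varphi])$ by the auxiliary lemma; since the reduction axiom $\diam{O(\vec a)}^\lambda(\vec\varphi) \leftrightarrow \tau[\vec a, \vec\varphi]$ lies in $\LTDyn$ (Definition~\ref{definition: LTDyn}) and $h(\LTDyn) = \{1\}$, we get $h(\tau[\vec a, \vec\varphi]) = h(\diam{O(\vec a)}^\lambda(\vec\varphi))$. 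The case $a = \test(\psi)$ is analogous but simpler: $\zeta_a = \test(\sem{\psi}^\mathrm{c}) = \test(\ev_\psi)$ by standardness and the inductive hypothesis (I) for $\psi$; reducibility of $\test$ with respect to $\lambda$ (Definition~\ref{definition:Reducibletest}) supplies an $\alg{A}$-term $t$ with $\lambda_\NMH(\vec{\ev}_\varphi) \circ \test(\ev_\psi) = t^\mathrm{pw}(\ev_\psi, \ev_{\varphi_1}, \dots, \ev_{\varphi_k})$, whose value at $h$ is $t(h(\psi), h(\varphi_1), \dots, h(\varphi_k)) = h(t(\psi, \vec\varphi))$ because $h$ preserves the $\alg{A}$-operations; and the reduction axiom $\diam{\test(\psi)}^\lambda(\vec\varphi) \leftrightarrow t(\psi, \vec\varphi)$ in $\LTDyn$ again transfers this to $h(\diam{\test(\psi)}^\lambda(\vec\varphi))$.

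I expect the auxiliary template lemma --- and keeping the simultaneous induction genuinely well-founded (the action $a$ in $\diam{a}^\lambda(\vec\varphi)$ must count as structurally smaller, and all actions substituted into the template lemma must already satisfy (II)) --- to be the main obstacle; the rest is bookkeeping. A small recurring ingredient is that in any $\FLew$-algebra $h(\varphi \leftrightarrow \psi) = 1$ if and only if $h(\varphi) = h(\psi)$ (since $(x \rightarrow y) \wedge (y \rightarrow x) = 1$ exactly when $x \leq y$ and $y \leq x$), so membership of a reduction axiom in $\LTDyn$ does transfer equalities through $h$. Note that, in contrast to Propositions~\ref{proposition:ReducibleImpliesSafe} and~\ref{proposition:ReducibleTestsAreSafe}, no separation hypothesis on $\Pred$ is needed here, as all required identities are checked by direct computation.
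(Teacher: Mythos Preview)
Your proposal is correct and follows essentially the same approach as the paper: a simultaneous induction on formulas and actions, with the key step for composite actions $O(\vec a)$ handled by an auxiliary induction on the structure of the reduction template $\tau$ establishing $\tau[\zeta_{\vec a}, \ev_{\vec\varphi}] = \ev_{\tau[\vec a,\vec\varphi]}$, and the $\test$ case by direct computation with the reducing term $t$. The only cosmetic difference is that the paper merges your claims (I) and (II) by treating the modal-formula case $\diam{a}^\lambda(\vec\varphi)$ via a case split on $a$, whereas you keep (II) as a separate explicit inductive claim; both organisations are equivalent and your remarks on well-foundedness and the $\FLew$-fact $h(\varphi\leftrightarrow\psi)=1 \Leftrightarrow h(\varphi)=h(\psi)$ make the implicit steps of the paper's proof explicit.
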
 

\begin{proof}
By mutual induction we show that $\sem{\varphi}^c = \ev_{\varphi} \colon \NMH \to A$ for all formulas $\varphi \in \Formu$ and coherence (as in the statement) for all actions $a \in \Act$ hold in $\zeta$. Here, only the case $\varphi = \diam{a}^\lambda (\varphi_1, \dots, \varphi_k)$ for some non-atomic action $a \in \Act$ and $k$-ary $\lambda \in \Pred$ is not immediate.  

\textbf{Coalgebra operations.} Let $a = O(a_1, \dots ,a_n)$ for some $n$-ary coalgebra operation $O \in \CoOp$ and inductively assume that coherence already holds for all $\zeta_{a_j}$. Let $\tau$ be the $(n,k)$-template corresponding to the reduction axiom for $(O, \lambda)$ which is contained in $\LTDyn$ (recall Definition~\ref{definition: LTDyn}). Since this reduction axiom is contained in $\LTDyn$, for every $h \in \NMH$ we have  
$$
h\big(\diam{O(a_1, \dots, a_n)}^\lambda (\varphi_1, \dots, \varphi_k)\big) = h\big(\tau[a_1, \dots, a_n, \varphi_1, \dots, \varphi_k]\big),$$ 
which shows $\ev_{\diam{O(a_1, \dots, a_n)}^\lambda (\varphi_1, \dots, \varphi_k)} = \ev_{\tau[a_1, \dots, a_n, \varphi_1, \dots, \varphi_k]}$. On the other hand, since $\tau$ witnesses reducibility, together with the inductive hypothesis on $\varphi_1, \dots, \varphi_k$ we get 
\begin{align*}
\sem{\diam{O(a_1, \dots, a_n)}^\lambda (\varphi_1, \dots, \varphi_k)}^\mathrm{c} & = \lambda_\NMH (\ev_{\varphi_1}, \dots, \ev_{\varphi_k}) \circ O(\zeta_{a_1}, \dots, \zeta_{a_n}) \\ 
& = \tau[\zeta_{a_1}, \dots, \zeta_{a_n}, \ev_{\varphi_1}, \dots, \ev_{\varphi_k}].
\end{align*}
Thus, it suffices to show that for every $\tau \in \nkTempl$ it holds that 
$$\ev_{\tau[\vec{a}, \vec{\varphi}]} = \tau[\zeta_{\vec{a}}, \ev_{\vec{\varphi}}],$$  
where we abbreviate $\zeta_{\vec{a}} = (\zeta_{a_1}, \dots, \zeta_{a_n})$ and $\ev_{\vec{\varphi}} = (\ev_{\varphi_1}, \dots, \ev_{\varphi_k})$. To show this, we proceed by induction on the structure of the reduction template $\tau$. 

For $\tau = \omega_i$, directly by definition we get 
$$
\ev_{\omega_i[\vec{a}, \vec{\varphi}]} = \ev_{\varphi_i} = \omega_i[\zeta_{\vec{a}}, \ev_{\vec{\varphi}}]
$$
as desired.

For $\tau = o(\tau_1 , \dots, \tau_{\arity(o)})$ we use the inductive hypothesis on $\tau_i$ for all $i = 1, \dots, \arity(o)$ to compute
$$
\tau[\zeta_{\vec{a}}, \ev_{\vec{\varphi}}] = o^\mathrm{pw} \big(\tau_1[\zeta_{\vec{a}}, \ev_{\vec{\varphi}}], \dots, \tau_{\arity(o)}[\zeta_{\vec{a}}, \ev_{\vec{\varphi}}]\big) = o^\mathrm{pw}(\ev_{\tau_1[\vec{a}, \vec{\varphi}]}, \dots, \ev_{\tau_{\arity(o)}[\vec{a}, \vec{\varphi}]}).
$$
Using the fact that $h$ (being a non-modal homomorphism) commutes with the $\alg{A}$-operation $o$ we observe that the above sends every $h \in \NMH$ to  
$$
o^\mathrm{pw}(\ev_{\tau_1[\vec{a}, \vec{\varphi}]}, \dots, \ev_{\tau_{\arity(o)}[\vec{a}, \vec{\varphi}]})(h) = h\big (o(\tau_1[\vec{a}, \vec{\varphi}], \dots, \tau_{\arity(o)}[\vec{a}, \vec{\varphi}]) \big) = h\big(\tau[\vec{a}, \vec{\varphi}]\big)
$$
as desired. 

Lastly, for the case $\tau = \diam{j}^{\lambda'}(\tau_1, \dots, \tau_{\arity(\lambda')})$, first we use the inductive hypothesis on the $\tau_i$ for all $i = 1,\dots, \arity(\lambda)$ and get 
\begin{align*}
\tau[\zeta_{\vec{a}}, \ev_{\vec{\varphi}}] & = \lambda_\NMH'\big(\tau_1[\zeta_{\vec{a}}, \ev_{\vec{\varphi}}], \dots, \tau_{\arity(\lambda')}[\zeta_{\vec{a}}, \ev_{\vec{\varphi}}]\big) \circ \zeta_{a_j} \\
& = \lambda_\NMH'(\ev_{\tau_1[\vec{a}, \vec{\varphi}]}, \dots, \ev_{\tau_{\arity(\lambda')}[\vec{a}, \vec{\varphi}]} ) \circ \zeta_{a_j}.
\end{align*}
Now we make use of the inductive hypothesis on coherence with respect to the action $a_j$, which yields that for every $h \in \NMH$ we have 
$$
\lambda_\NMH'(\ev_{\tau_1[\vec{a}, \vec{\varphi}]}, \dots, \ev_{\tau_{\arity(\lambda')}[\vec{a}, \vec{\varphi}]} ) \big( \zeta_{a_j}(h)\big) = h\big(\diam{a_j}^{\lambda'} (\tau_1[\vec{a}, \vec{\varphi}], \dots, \tau_{\arity(\lambda')}[\vec{a}, \vec{\varphi}])\big)
$$
as desired. This finishes the case of composite actions obtained via coalgebra operations.

\textbf{Tests.}
The last case we need to discuss is $a = \test(\psi)$, inductively assuming that $\sem{\psi} = \ev_\psi$ already holds. Let $t(x, x_1, \dots, x_k) \colon A^{k+1} \to A$ be the $\alg{A}$-term-function corresponding to the reduction axiom for $(\test, \lambda)$ which is included in $\LTDyn$. Then for any $h \in \NMH$ we get  
\begin{align*}
\sem{\diam{\test(\psi)}^\lambda(\vec{\varphi})}^\mathrm{c} & = \lambda_{\NMH}(\ev_{\varphi_1}, \dots, \ev_{\varphi_k})\big(\test(\ev_{\psi})(h)\big) && \text{(Def. \& Ind. Hyp.)} \\ 
& = t\big(\ev_{\psi}(h), \ev_{\varphi_1}(h), \dots, \ev_{\varphi_k}(h)\big) && \text{(Reducibility $\test$)} \\ 
& = t\big(h(\psi), h(\varphi_1), \dots,h(\varphi_k)\big)  = h\big(t(\psi, \vec{\varphi})\big). && \text{(Definitions $\ev$ \& $h \in \NMH$)}  
\end{align*} 
Finally, we have $h(t(\psi, \vec{\varphi})) = h(\diam{\test(\psi)}\vec{\varphi})$ since this reduction axiom is included in $\LTDyn$.   
\end{proof}

Note that this theorem does not require any extra assumptions on the algebra of truth-degrees $\alg{A}$. In the next subsection, we describe one way to prove \emph{existence} of quasi-canonical models via \emph{one-step completeness}, where we restrict our attention to \emph{finite} $\alg{A}$.

\subsection{Finite one-step completeness}\label{subsection:OneStepCompleteness}
Our strategy to obtain quasi-canonical models is inspired by \cite{SchroederPattinson2009,HansenKupkeLeal2014} and relies on the assumption that the underlying coalgebraic logic $\LT$ is \emph{strongly one-step complete over finite sets}. While \cite[Section 5]{HansenKupkeLeal2014} follows a similar strategy, here we adapt the proof of \cite[Theorem 2.5]{SchroederPattinson2009} more directly.

First, recall that a formula is \emph{rank-1} if every propositional variable is in the scope of precisely one modality. 
For any set $X$, we define the set of formal expressions 
$$\Pred(A^X) := \{ \heartsuit^\lambda (\sigma_1, \dots, \sigma_{\arity(\lambda)}) \mid \lambda \in \Pred, \sigma_i \in A^X  \}.$$
We call $H \colon \Pred(A^X) \to A$ a  \emph{rank-1 non-modal homomorphism for $\LT$} if $H$ (extended to $\alg{A}$-term combinations of $\Pred(A^X)$ in the obvious way) sends all instances of \emph{rank-1 formulas} of $\LT$ to $1$.  
We say that $H$ is \emph{(one-step) satisfiable} if there exists some $\alpha \in \func{F}X$ such that $\lambda_X(\sigma_1, \dots, \sigma_{\arity(\lambda)})(\alpha) = H\big(\heartsuit^\lambda(\sigma_1, \dots, \sigma_{\arity(\lambda)})\big)$ holds for all $\lambda \in \Pred$ and $\sigma_i \in A^X$. The following is a generalisation of \cite[Definition 2.4]{SchroederPattinson2009}.

\begin{defi}[\textsf{Finite one-step completeness}]\label{definition:One-step-completeness}
$\LT$ is \emph{strongly one-step complete over finite sets} if every rank-1 non-modal homomorphism $H \colon \Pred(A^X) \to A$ with finite $X$ is satisfiable.       
\end{defi}

Recalling more terminology of \cite{SchroederPattinson2009}, a \emph{surjective $\omega$-cochain of finite sets} is a sequence of finite sets $(X_m)_{m\in\mathbb{N}}$ together with surjective projection maps $p_m \colon X_{m+1} \to X_m$. Its \emph{inverse limit} $\varprojlim X_m$ consists of all sequences $(x_m) \in \prod X_m$ that are \emph{coherent}, that is $p_m(x_{m+1}) = x_m$. The functor $\func{F}$ \emph{weakly preserves inverse limits of surjective $\omega$-cochains of finite sets} if $\func{F}(\varprojlim X_i) \to \varprojlim \func{F} X_i$ is surjective. For brevity, we simply call functors $\func{F}$ \emph{weakly preserving} in this case.

The main theorem of this subsection is the following adaptation of \cite[Theorem 2.5]{SchroederPattinson2009} to the dynamic setting.
In order to effectively use Definition~\ref{definition:One-step-completeness} in the proof, we need $\alg{A}$ itself to be finite. 

\begin{thm}\label{theorem:FinitelyOneStepComplete-CanModel}
Let the algebra of truth-degrees $\alg{A}$ be finite, the collection of predicate liftings $\Pred$ be separating, the functor $\func{F}$ be weakly preserving and the (non-dynamic) coalgebraic logic $\LT$ be strongly one-step complete over finite sets. Then there exists a quasi-canonical model for the coalgebraic dynamic logic $\LTDyn$.   
\end{thm}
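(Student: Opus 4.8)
The plan is to reduce, via Theorem~\ref{theorem:StandardQuasiCanModel}, to constructing a map $\zeta^0\colon \AtAct \to (\func{F}\NMH)^\NMH$ that is coherent for atomic actions, and to build it by a finite‑approximation argument in the style of \cite[Theorem~2.5]{SchroederPattinson2009}. Fixing an atomic action $a_0$ and a non‑modal homomorphism $h \in \NMH$, the task is to produce $\alpha := \zeta^0_{a_0}(h) \in \func{F}\NMH$ realising the ``$a_0$-modal type of $h$'', i.e. with $\lambda_\NMH(\ev_{\varphi_1},\dots,\ev_{\varphi_{\arity(\lambda)}})(\alpha) = h\big(\diam{a_0}^\lambda(\varphi_1,\dots,\varphi_{\arity(\lambda)})\big)$ for all $\lambda \in \Pred$ and all $\varphi_i \in \Formu$; doing this for every $a_0$ and every $h$ and invoking Theorem~\ref{theorem:StandardQuasiCanModel} then finishes the proof.

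First I would set up the finite approximations of $\NMH$. Since $\alg{A}$ is finite and $\Formu$ is countable, fix an increasing chain $\Phi_0 \subseteq \Phi_1 \subseteq \cdots$ of finite, subformula-closed sets of formulas with $\bigcup_n \Phi_n = \Formu$, and put $X_n := \{\, h'|_{\Phi_n} \mid h' \in \NMH \,\} \subseteq A^{\Phi_n}$, which is finite. Restriction defines surjections $p_n \colon X_{n+1} \to X_n$, so $(X_n,p_n)$ is a surjective $\omega$-cochain of finite sets, and since $\bigcup_n \Phi_n = \Formu$ the map $\NMH \to \varprojlim_n X_n$, $h' \mapsto (h'|_{\Phi_n})_n$, is a bijection, with limit projections $\pi_n\colon \NMH \to X_n$. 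Each $\varphi \in \Phi_n$ induces a predicate $\sem{\varphi}^{(n)}\colon X_n \to A$, $\bar h \mapsto \bar h(\varphi)$, satisfying $\sem{\varphi}^{(n)}\circ p_n = \sem{\varphi}^{(n+1)}$ for $\varphi\in\Phi_n$ and $\sem{\varphi}^{(n)}\circ \pi_n = \ev_\varphi$.

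The core of the argument is to produce, at each level $n$, some $\alpha_n \in \func{F}X_n$ realising the level-$n$ fragment of the type of $h$, namely $\lambda_{X_n}(\sem{\vec\varphi}^{(n)})(\alpha_n) = h(\diam{a_0}^\lambda(\vec\varphi))$ for $\vec\varphi \in \Phi_n^{\arity(\lambda)}$, and to do so coherently, i.e. with $\func{F}p_n(\alpha_{n+1}) = \alpha_n$. I would proceed inductively (the base case, constructing $\alpha_0$, is the same argument using only constraints of type (i) below): given $\alpha_n$, define a rank-1 non-modal homomorphism $H_{n+1}\colon \Pred(A^{X_{n+1}}) \to A$ whose prescribed values are (i) $H_{n+1}(\heartsuit^\lambda(\sem{\vec\varphi}^{(n+1)})) = h(\diam{a_0}^\lambda(\vec\varphi))$ for $\vec\varphi \in \Phi_{n+1}$ (well-defined using that $\LT$ proves $\heartsuit^\lambda$-congruences), and (ii) $H_{n+1}(\heartsuit^\lambda(\vec\tau \circ p_n)) = \lambda_{X_n}(\vec\tau)(\alpha_n)$ for $\vec\tau \in A^{X_n}$; these two families of constraints agree on their overlap precisely because $\alpha_n$ realises the level-$n$ type. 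Strong one-step completeness of $\LT$ over finite sets (Definition~\ref{definition:One-step-completeness}) then yields $\alpha_{n+1} \in \func{F}X_{n+1}$ realising $H_{n+1}$; it realises the level-$(n+1)$ type by (i), while (ii) together with the separating condition on $\Pred$ (Definition~\ref{definition:SeparatingPredLifts}), which makes equality in $\func{F}X_n$ testable on predicate-lifting values, forces $\func{F}p_n(\alpha_{n+1}) = \alpha_n$. With the coherent sequence $(\alpha_n)_n$ in hand, weak preservation of $\func{F}$ makes $\func{F}\NMH \cong \func{F}(\varprojlim_n X_n) \to \varprojlim_n \func{F}X_n$ surjective, so some $\alpha \in \func{F}\NMH$ satisfies $\func{F}\pi_n(\alpha) = \alpha_n$ for all $n$; a short computation using naturality of the liftings and $\ev_\varphi = \sem{\varphi}^{(n)}\circ\pi_n$ then shows $\alpha$ realises the full type of $h$.

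The step I expect to be the main obstacle is the construction of $H_{n+1}$: one must verify that the combined partial assignment coming from (i) and (ii) extends to a genuine rank-1 non-modal homomorphism on all of $\Pred(A^{X_{n+1}})$ — the subtlety being that over a finite $X_{n+1}$ not every predicate in $A^{X_{n+1}}$ need be of the form $\sem{\varphi}^{(n+1)}$ (for $\FLew$-algebras $\alg{A}$ lacking enough definable constants), so (i)–(ii) do not determine $H_{n+1}$ and genuine extension work is required. I would handle this by exploiting finiteness of $\alg{A}$: the rank-1 non-modal homomorphisms compatible with (i) and (ii) form a subset of $A^{\Pred(A^{X_{n+1}})}$ (compact by Tychonoff, $\alg{A}$ discrete) cut out by closed conditions, so by the finite intersection property it suffices to satisfy finitely many constraints at a time, and finitary consistency follows from soundness of $\LT$ (hence one-step soundness of its rank-1 theorems) together with the fact that $h$ and $\alpha_n$ jointly witness the prescribed values. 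This is exactly the point where the proof carefully adapts \cite[Theorem~2.5]{SchroederPattinson2009} to the many-valued, dynamic setting; the remainder is bookkeeping with Theorem~\ref{theorem:StandardQuasiCanModel}, naturality, and the separating condition.
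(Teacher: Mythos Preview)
Your approach is essentially the paper's: reduce to Theorem~\ref{theorem:StandardQuasiCanModel}, present $\NMH$ as the inverse limit of its finite restrictions, invoke strong one-step completeness at each level to obtain $\alpha_n \in \func{F}X_n$, and lift the resulting sequence to $\func{F}\NMH$ using separation and weak preservation. The only difference is organisational --- you build the $\alpha_n$ inductively with the explicit coherence constraint~(ii) and isolate the extension of $H_{n+1}$ to all of $\Pred(A^{X_{n+1}})$ as the delicate step, whereas the paper constructs each $\alpha^{(m)}$ independently from the restricted $H_a^{(m)}$ and handles both points more tersely.
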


\begin{proof}
Fix an arbitrary atomic action $a \in \AtAct$ and an arbitrary non-modal homomorphism $h \in \NMH$. It suffices to show that there exists some $\alpha \in \func{F}\NMH$ which satisfies  
$$
\lambda_\NMH (\ev_{\varphi_1}, \dots, \ev_{\varphi_{\arity(\lambda)}})(\alpha) = h\big(\diam{a}^\lambda (\varphi_1, \dots, \varphi_{\arity(\lambda)})\big) 
$$ 
for all $\lambda \in \Pred$ and all $\varphi_1, \dots, \varphi_{\arity(\lambda)} \in \Formu$, since then we may define
$\zeta_{a}^0(h) := \alpha$, which ensures that $\zeta_a^0$ is coherent on atomic actions
as required in Theorem~\ref{theorem:StandardQuasiCanModel}. Equivalently, setting 
$$\mathcal{F} = \{ \ev_\varphi \mid \varphi \in \Formu \} \subseteq A^{\NMH}$$
and 
\begin{align*}
H_a \colon \Pred(\mathcal{F}) &\to A \\ 
\heartsuit^\lambda(\ev_{\varphi_1}, \dots, \ev_{\varphi_{\arity(\lambda)}}) & \mapsto h(\diam{a}(\varphi_1, \dots, \varphi_{\arity(\lambda)}))
\end{align*}
we need to show that $H_a$ is one-step satisfiable in $\func{F}\NMH$. 

Enumerate the sets $\Prop$, $\Act$ and $\Pred$ (recall from Subsection~\ref{subsection:SyntaxSemantics} that these are assumed countable) and, 
for all $m \in \mathbb{N}$, let $\Formu^{(m)} \subseteq \Formu$ be the set of formulas $\phi$ such that $\phi$ only contains the first $m$-many propositional variables, actions and predicate liftings according to these enumerations, and $\phi$ has modal depth at most $m$ and at most $m$ occurrences of $\alg{A}$-connectives. Note that all $\Formu^{(m)}$ thus defined are finite.
Furthermore, for all $m \in \mathbb{N}$, we define the collection of restrictions of non-modal homomorphisms 
$$\NMH^{(m)} = \{ h{\restriction}_{\Formu^{(m)}} \mid h \in \NMH \}.$$ 
Since $\alg{A}$ and $\Formu^{(m)}$ are finite, the sets $\NMH^{(m)}$ are also finite. It is easily seen that we obtain the inverse limit $\NMH \cong \varprojlim \NMH^{(m)}$ where all projection maps are given by the obvious restrictions. 

Now set 
$$\mathcal{F}^{(m)} = \{ \ev_{\varphi} \mid \varphi \in \Formu^{(m)} \} \subseteq A^{\NMH^{(m)}}$$
and let $H^{(m)}_a \colon \Pred(\mathcal{F}^{(m)}) \to A$ be the corresponding restrictions of $H_a$. Since we included the axioms for $\LT$ (`instantiated' for the action $a$) in $\LTDyn$ (Definition~\ref{definition: LTDyn}) and $h$ is a non-modal homomorphism for $\LTDyn$, it follows that $H^{(m)}_a$ is a rank-1 non-modal homomorphism for $\LT$. Thus, by strong one-step completeness over finite sets of $\LT$, for every $m \in \mathbb{N}$ there exists some $\alpha^{(m)} \in \func{F}(\NMH^{(m)})$ which one-step satisfies $H^{(m)}_a$. 

Since the collection of predicate liftings $\Pred$ is separating, the sequence $(\alpha^{(m)})$ is an element of $\varprojlim \NMH^{(m)}$ (this follows from the explicit description of these inverse limits - see, \emph{e.g.}, \cite[Definition 2.2]{SchroederPattinson2009}). Lastly, since $\func{F}$ is weakly preserving, there exists some $\alpha \in \func{F}\NMH$ which induces this sequence. To check that this $\alpha$ one-step satisfies $H_a$, for $\lambda \in \Pred$ and $\varphi_1, \dots, \varphi_{\arity(\lambda)}$ choose $m$ sufficiently large to ensure $\diam{a}^\lambda (\varphi_1, \dots, \varphi_{\arity(\lambda)}) \in \Formu^{(m)}$. Then naturality of $\lambda$ applied to the limit projection $\NMH \to \NMH^{(m)}$ finally yields
\begin{align*}
\lambda_\NMH(\ev_{\varphi_1}, \dots, \ev_{\varphi_{\arity(\lambda)}})(\alpha) & = \lambda_{\NMH^{(m)}}(\ev_{\varphi_1}, \dots, \ev_{\varphi_{\arity(\lambda)}})(\alpha^{(m)}) \\ 
& = H_a^{(m)}\big(\heartsuit^\lambda(\ev_{\varphi_1}, \dots, \ev_{\varphi_{\arity(\lambda)}} )\big) \\ 
& = H_a \big(\heartsuit^\lambda(\ev_{\varphi_1}, \dots, \ev_{\varphi_{\arity(\lambda)}} )\big) = h\big(\diam{a}^\lambda (\varphi_1, \dots, \varphi_{\arity(\lambda)})\big)
\end{align*} 
as desired, due to our construction of the $\alpha^{(m)}$. This finishes the proof.                
\end{proof}

In the following subsection, we give some sample applications of this now established `recipe' to obtain strong completeness.

\subsection{Applications}\label{subsection:example-applications}
In order to illustrate the main results of this section we show how to apply them to obtain strong completeness for the iteration-free fragments of the coalgebraic dynamic logics of Subsection~\ref{subsection:SyntaxSemantics}. 

\begin{exa}[\textsf{Many-valued PDL with crisp accessibility relations}]\label{example: MVPDL-crisp} 
Consider the iteration-free fragment of the coalgebraic dynamic logic of Example~\ref{example:CoalgebraicDynamicLogics-PDL-Crisp}. 
Since $\mathcal{P}$ is weakly preserving \cite[Example 3.1]{SchroederPattinson2009} and $\Lambda \subseteq \{ \lambda^\Box, \lambda^\Diamond \}$ is separating (Example~\ref{example:PredicateLiftings}(1)), for finite $\alg{A}$ we can apply Theorem~\ref{theorem:FinitelyOneStepComplete-CanModel} with the reduction axioms 
\begin{align*}
[a \cup b]\varphi &\leftrightarrow [a]\varphi \wedge [b]\varphi  &  \diam{a \cup b}\varphi &\leftrightarrow \diam{a}\varphi \vee \diam{b}\varphi\\
[a{;}b]\varphi &\leftrightarrow [a][b]\varphi & \diam{a;b}\varphi &\leftrightarrow \diam{a}\diam{b}\varphi  
\end{align*}
of Examples~\ref{example:ReductionKleisliMonadMorphism} \& \ref{example:IndependentlyReducibleOperations}(1). For the tests $\TeOp \subseteq \{ \test_P \mid P \subseteq A \}$ the reduction axioms are given by  
\begin{align*}
[\test_P (\psi)]\varphi &\leftrightarrow \chi_P(\psi) \rightarrow \varphi & \diam{\test_P (\psi)}\varphi &\leftrightarrow \chi_P(\psi) \wedge \varphi 
\end{align*}
if the characteristic function $\chi_P$ is term-definable in $\alg{A}$ as discussed in Example~\ref{example:ReducibleTests}(1). Furthermore, we may also expand $\CoOp$ by the counter-domain operation $\sim$, where the corresponding reduction axioms are 
\begin{align*}
[{\sim}a]\varphi &\leftrightarrow [a]\bot \rightarrow \varphi & \diam{{\sim}a}\varphi &\leftrightarrow \neg\diam{a}\top \wedge \varphi 
\end{align*}
as discussed in Example~\ref{example:IndependentlyReducibleOperations}(6). 

For the underlying logic $\LT$, for $\Lambda = \{ \lambda^\Box \}$ one may consider the axiomatizations of \cite[Subsection 4.4]{BouEstevaGodo2011} 
if $\alg{A}$ is endowed with truth-constants and has a unique coatom. For $\alg{A} = \alg{G}_n$ a finite Gödel chain \emph{without} constants see \cite[Section 4]{CaicedoRodriguez2010} and for the bi-modal case \cite{RodriguezVidal2021}). If $\alg{A}$ is \emph{semi-primal}, axiomatizations are found in \cite[Subsection 4.1]{KurzPoigerTeheux2024}. In particular, this is the case for finite \L ukasiewicz-chains $\lucas_n$ (also see \cite{HansoulTeheux2013, BouEstevaGodo2011}), hence this may be used to obtain strong completeness for the iteration-free part of $\lucas_n$-valued PDL with crisp accessibility relations considered in \cite{Teheux2014}.      
\hfill $\blacksquare$
\end{exa}

In order to prepare for Examples~\ref{example:CoalgebraicDynamicLogics-PDL-Labelled1} \&~\ref{example:CoalgebraicDynamicLogics-PDL-Labelled2} which both use labelled accessibility relations (\emph{i.e.}, the coalgebraic signature functor $\PA$), we first prove the following (for the definition of weakly preserving recall the paragraph after Definition~\ref{definition:One-step-completeness}). 

\begin{lem}\label{lemma:PA-is-weaklypreserving}
For every linear $\alg{A}$, the functor $\PA$ is weakly preserving. 
\end{lem}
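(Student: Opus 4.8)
The plan is to give an explicit preimage under the canonical comparison map $\PA(\varprojlim X_m)\to\varprojlim\PA X_m$. Fix a surjective $\omega$-cochain of finite sets $(X_m)_{m\in\mathbb{N}}$ with projections $p_m\colon X_{m+1}\to X_m$, inverse limit $X_\infty=\varprojlim X_m$, and limit projections $q_m\colon X_\infty\to X_m$. Unfolding the action of $\PA$ on morphisms, an element of $\varprojlim\PA X_m$ is a sequence $(\sigma_m)_m$ with $\sigma_m\in A^{X_m}$ and $\sigma_m(y)=\bigvee\{\sigma_{m+1}(z)\mid p_m(z)=y\}$ for all $y\in X_m$, while the comparison map sends $\tau\in A^{X_\infty}$ to the sequence $\bigl(\,y\mapsto\bigvee\{\tau(\xi)\mid q_m(\xi)=y\}\,\bigr)_m$. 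So, given such a $(\sigma_m)_m$, I would set $\sigma_\infty\colon X_\infty\to A$, $\sigma_\infty(\xi):=\bigwedge_m\sigma_m(q_m(\xi))$; the meet exists because $\alg{A}$ is complete (indeed, by coherence the sequence $(\sigma_m(q_m(\xi)))_m$ is non-increasing), and I would show $\PA q_n(\sigma_\infty)=\sigma_n$ for every $n$.

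One inequality is free: for $\xi\in q_n^{-1}(y)$ one has $\sigma_\infty(\xi)\le\sigma_n(q_n(\xi))=\sigma_n(y)$, so $\bigvee\{\sigma_\infty(\xi)\mid q_n(\xi)=y\}\le\sigma_n(y)$. For the converse, which is the crux, I would fix $n$, $y\in X_n$ and $r:=\sigma_n(y)$, and pass to the ``$r$-cut'' cochain $S_m:=\{z\in X_m\mid\sigma_m(z)\ge r\}$ with the restricted $p_m$. Monotonicity of $\bigvee$ gives $p_m(S_{m+1})\subseteq S_m$; for the reverse inclusion, if $\sigma_m(z)=\bigvee\{\sigma_{m+1}(z')\mid p_m(z')=z\}\ge r$ then, the join being \emph{finite} (the fibre sits inside the finite set $X_{m+1}$) and $\alg{A}$ being \emph{linear}, it is attained, so some $z'$ in the fibre already satisfies $\sigma_{m+1}(z')\ge r$, proving $p_m(S_{m+1})=S_m$. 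Hence $(S_m)_m$ is again a surjective $\omega$-cochain of finite sets, with $y\in S_n$. By König's Lemma --- equivalently, by the fact that $\mathcal{P}$ is weakly preserving, cf.~\cite[Example~3.1]{SchroederPattinson2009} --- the projection $\varprojlim S_m\to S_n$ is onto, yielding $\xi\in X_\infty$ with $q_n(\xi)=y$ and $q_m(\xi)\in S_m$ (i.e.\ $\sigma_m(q_m(\xi))\ge r$) for all $m$. Then $\sigma_\infty(\xi)=\bigwedge_m\sigma_m(q_m(\xi))\ge r$, so $\PA q_n(\sigma_\infty)(y)\ge\sigma_\infty(\xi)\ge r=\sigma_n(y)$. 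Combining the two inequalities shows $\sigma_\infty\mapsto(\sigma_m)_m$, so the comparison map is surjective.

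The step I expect to be the real obstacle --- and the only place linearity is genuinely used --- is the surjectivity $p_m(S_{m+1})=S_m$ of the cut cochains: without linearity a non-attained finite join $\sigma_m(z)=a\vee b$ with $a,b$ incomparable and no fibre element reaching $a\vee b$ would break it (this is the same phenomenon that makes $\vee$, but not $\wedge$, induce a safe operation on $\PA$, cf.\ Lemma~\ref{lemma:SafeInducedOperations}). Everything else --- the concrete description of $\varprojlim\PA X_m$ and of the comparison map, and the $\le$-inequality --- is routine unwinding of the definition of $\PA$ on morphisms.
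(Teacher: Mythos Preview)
Your proof is correct and follows essentially the same approach as the paper: you define the identical candidate $\sigma_\infty(\xi)=\bigwedge_m\sigma_m(q_m(\xi))$, obtain the easy inequality the same way, and use linearity at the same crucial point (a finite join in a chain is attained) to lift a point through the cochain. The only cosmetic difference is that the paper carries out the inductive lifting directly, whereas you package it via the $r$-cut cochain $(S_m)_m$ and invoke König's Lemma; unrolling your König step yields exactly the paper's explicit construction.
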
  
\begin{proof}
Let $(\sigma_m)_{m\in \mathbb{N}} \in \varprojlim \PA X_m$ be coherent, \emph{i.e.}, $\sigma_m(x^m) = \bigvee \{ \sigma_{m+1}(x^{m+1}) \mid p_m(x^{m+1}) = x^m \}$. We claim that this sequence is induced by $\sigma \in \PA(\varprojlim X_m)$ defined by 
$$
\sigma(\vec{x}) = \bigwedge_{n\in \mathbf{N}} \sigma_n(x_n).
$$
That is, for $m \in \mathbb{N}$ and $x^m \in X_m$ we need to verify 
$$
\sigma_m(x^m) = \bigvee_{\pi_m(\vec{x}) = x^m} \bigwedge_{n \in \mathbb{N}} \sigma_n(x_n).
$$
The inequality `$\geq$' is obvious by definition. For the other inequality `$\leq$' we construct a sequence $\vec{x} \in \varprojlim X_n$ such that $\pi(\vec{x}) = x^m$ and $\bigwedge x_n = \sigma_m(x^m)$. By coherence we have $\sigma_m(x^m) = \bigvee\{ \sigma_{m+1}(x^{m+1}) \mid p_m(x^{m+1} = x^m)\}$. Thus, by surjectivity of $p_m$ and linearity of $\alg{A}$ there exists some $x_{m+1} \in X_{m+1}$ with $p_m(x_{m+1})$ and $\sigma_{m+1}(x_{m+1}) = \sigma_{m}(x^m)$. Repeating this argument, we construct a coherent sequence $(x_n)_{n \in \mathbb{N}}$ with $\sigma_n(x_n) = \sigma_m(x^m)$ for all $n \geq m$. Since in every coherent sequence it holds that $n < n'$ implies $\sigma_{n'}(x_{n'}) \leq \sigma_n(x_n)$, for the sequence thus constructed we find 
$$
\bigwedge_{n\in\mathbb{N}} \sigma_n(x_n) = \sigma_m(x^m)
$$
as desired, finishing the proof.        
\end{proof}

\begin{exa}[\textsf{Many-valued PDL with labelled accessibility relations}]\label{example: MVPDL-labeled}
Consider the iteration-free part of the logic of Example~\ref{example:CoalgebraicDynamicLogics-PDL-Labelled1} with finite linear $\alg{A}$. Since $\PA$ is weakly preserving (Lemma~\ref{lemma:PA-is-weaklypreserving}), and $\Lambda \subseteq \{ \lambda^\Box, \lambda^\Diamond \}$ is separating (Example~\ref{example:PredicateLiftings}(2)), we can apply Theorem~\ref{theorem:FinitelyOneStepComplete-CanModel} with the reduction axioms 
\begin{align*}
[a \vee b]\varphi &\leftrightarrow [a]\varphi \wedge [b]\varphi  &  \diam{a \vee b}\varphi &\leftrightarrow \diam{a}\varphi \vee \diam{b}\varphi\\
[a{;}b]\varphi &\leftrightarrow [a][b]\varphi & \diam{a;b}\varphi &\leftrightarrow \diam{a}\diam{b}\varphi \\
[\test(\psi)]\varphi &\leftrightarrow \psi \rightarrow \varphi & \diam{\test(\psi)}\varphi &\leftrightarrow \psi \odot \varphi  
\end{align*}
of Example~\ref{example:ReductionKleisliMonadMorphism}, Example~\ref{example:IndependentlyReducibleOperations}(2) and Example~\ref{example:ReducibleTests}(2), respectively. Furthermore, we may also expand $\CoOp$ by the counter-support $\sim$, where the corresponding reduction axioms are 
\begin{align*}
[{\sim}a]\varphi &\leftrightarrow \chi_{ \{ 1 \} }([a]\bot) \rightarrow \varphi & \diam{{\sim}a}\varphi &\leftrightarrow \chi_{ \{ 0 \} }(\diam{a}\top) \wedge \varphi 
\end{align*}
given that $\chi_{ \{ 1\} }$ and $\chi_{ \{ 0\} }$ are term-definable in $\alg{A}$, as discussed in Example~\ref{example:IndependentlyReducibleOperations}(7).

If $\alg{A}$ is endowed with canonical truth-constants $\check{c}$ for all $c \in \alg{A}$, for an axiomatization of $\LT$ where $\Pred = \{\lambda^\Box\}$ consider the one from \cite[Subsection 4.2]{BouEstevaGodo2011} (for $\alg{A} = \lucas_n$ \emph{without} constants see \cite[Subsection 5.1]{BouEstevaGodo2011}). In particular, this may be used to obtain strong completeness for the iteration-free part of $\alg{A}$-valued PDL with $\alg{A}$-labelled accessibility relations of \cite{Sedlar2020} (with tests added). Using $\Pred = \{ \lambda^\Diamond \}$ instead (note that $\Box$ and $\Diamond$ are not necessarily inter-definable since $\neg$ need not be involutive), we claim that for $\LT$ one can use the axiomatization 
\begin{align*}
\Diamond(p\vee q) &\leftrightarrow \Diamond p \vee \Diamond q, \\ 
\Diamond (p \odot \check{c}) &\leftrightarrow \Diamond p \odot \check{c} \quad \text{for all } c \in \alg{A}. 
\end{align*} 
Indeed, to see strong one-step completeness over finite sets, take a rank-1 non-modal homomorphism $H \colon \Pred(A^X) \to A$ with $X$ finite and define $\alpha \in A^X$ by $\alpha(x) = H(\Diamond e_x)$. This $\alpha$ one-step satisfies $H$ since 
\begin{align*}
    \lambda^\Diamond(\sigma)(\alpha) &= \bigvee_{x \in X} \sigma(x) \odot \alpha(x) = \bigvee_{x\in X} \check{\sigma}_x \odot H(\Diamond e_x)\\
    & = \bigvee_{x \in X} H\big( \Diamond(\check{\sigma}_x \odot e_x)\big) = H\big( \Diamond \bigvee_{x \in X} (\check{\sigma}_x \odot e_x)\big) ,
\end{align*}
where $\check{\sigma}_x$ is the truth-constant of $\sigma(x)$. Note that $\bigvee_{x\in X} (\check{\sigma}_x \odot e_x) = \sigma$ as desired. 
\hfill $\blacksquare$
\end{exa}  

\begin{exa}[\textsf{Two-valued PDL with labelled accessibility relations}]\label{example: TWOPDL-labeled}
Consider the iteration-free part of the coalgebraic dynamic logic of Example~\ref{example:CoalgebraicDynamicLogics-PDL-Labelled2} with finite linear $\alg{A}$. Since $\PA$ is weakly preserving (Lemma~\ref{lemma:PA-is-weaklypreserving}) and $\Lambda$ is separating, we can apply Theorem~\ref{theorem:FinitelyOneStepComplete-CanModel} with the reduction axioms 
\begin{align*}
\diam{a_1 \vee a_2}^r \varphi &\leftrightarrow \diam{a_1}^r \varphi \vee \diam{a_2}^r \varphi  \\
\diam{a_1 {;} a_2}^{r} \varphi &\leftrightarrow \bigvee_{r_1 \odot r_2 \geq r} \diam{a_1}^{r_1} \diam{a_2}^{r_2} \varphi \\
\diam{\test(\psi)}^r \varphi &\leftrightarrow (\psi \wedge \varphi) 
\end{align*}
of Example~\ref{example:IndependentlyReducibleOperations}(3), Example~\ref{example:ReductionKleisli2-FiniteLinear} and Example~\ref{example:TestOperations}(1), respectively.  

An axiomatization of the underlying $\LT$ is found in \cite[Definition 4.5]{Ma2015}, given by 
$$
\Diamond^r\bot \leftrightarrow \bot, \quad \quad \Diamond^r(p \vee q) \leftrightarrow \Diamond^r p \vee \Diamond^r q, \quad \quad \Diamond^{r_1} p \rightarrow \Diamond^{r_2}p \text{ for all }r_2\leq r_1.
$$
For strong one-step completeness over finite sets observe that a rank-1 non-modal homomorphism $H \colon \Pred(2^X) \to 2$ is one-step satisfied by $\alpha \in A^X$ defined by 
$$
\alpha(x) = \bigvee \{ r \mid H(\Diamond^r \{ x \}) = 1)  \}.
$$
Indeed, for a non-empty subset $S \subseteq X$ (which is necessarily finite) we use the second axiom above to find  
$$
H(\Diamond^r S) = H(\Diamond^r \bigcup_{s \in S}\{ s \}) = \bigvee_{s\in S} H(\Diamond^r\{ s \})  
$$ 
and we can see that this equals $1$ if and only if $H(\Diamond^r \{ s \}) = 1$ for some $s \in S$.
On the other hand we have 
$$
\lambda^r(S)(\alpha) = \bigvee_{s \in S} \alpha(s) \geq r
$$ 
and this equals $1$ if and only if $\alpha(s) \geq r$ for some $s \in S$. Lastly, note that 
$$
H(\Diamond^r\{ s \}) = 1 \Leftrightarrow \alpha(s) \geq r
$$
where `$\Leftarrow$' follows directly from the definition of $\alpha$ and `$\Rightarrow$' follows from this definition combined with the third axiom above (\emph{i.e.}, if $\alpha(s) = r' \geq r$ then $H(\Diamond^{r'}\{ s \}) = 1$ implies $H(\Diamond^{r}\{ s \}) = 1$). Lastly, the special case where $S = \varnothing$ is covered by the first axiom above. 
\hfill $\blacksquare$
\end{exa} 

As preparation for our discussion of (iteration-free) $\alg{A}$-valued game logic as in Example~\ref{example:CoalgebraicDynamicLogics-GL}, we first prove the following about its coalgebraic signature functor (\emph{i.e.}, the monotone $\alg{A}$-neighbourhood functor $\MA$, recall Example~\ref{example:FunctorsAndTheirCoalgebras/Kleisli}(3)). 

\begin{lem}\label{lemma:MA-is-weaklypreserving}
For every $\alg{A}$, the functor $\MA$ is weakly preserving.  
\end{lem}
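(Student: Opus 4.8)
The plan is to transport to the many-valued setting the standard argument showing that the (Boolean) monotone neighbourhood functor weakly preserves inverse limits of surjective $\omega$-cochains of finite sets, the key point being that such limits of finite sets have \emph{surjective} projection maps. Fix a surjective $\omega$-cochain of finite sets $(X_m)_{m\in\mathbb{N}}$ with projections $p_m\colon X_{m+1}\to X_m$, write $X_\infty = \varprojlim X_m$ with limit projections $\pi_m\colon X_\infty\to X_m$ and transition maps $p_{m,n}\colon X_n\to X_m$ for $m\le n$, so that $\pi_m = p_{m,n}\circ\pi_n$. The first ingredient I would record is that each $\pi_m$ is surjective: for $x\in X_m$ the fibres $p_{m,n}^{-1}(x)\subseteq X_n$ ($n\ge m$) are finite and nonempty (by surjectivity of the $p_n$) and form an inverse subsystem, so by König's lemma their inverse limit is nonempty, producing a point of $X_\infty$ above $x$. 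I would then unwind coherence: a point $(N_m)_m\in\varprojlim\MA X_m$ satisfies $\MA p_m(N_{m+1}) = N_m$, i.e. $N_{m+1}(\tau\circ p_m) = N_m(\tau)$ for all $\tau\in A^{X_m}$, hence by induction $N_n(\tau\circ p_{m,n}) = N_m(\tau)$ whenever $m\le n$.

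Next I would define $N\colon A^{X_\infty}\to A$ by approximating from below by ``cylindrical'' functions,
$$ N(\sigma) \;=\; \bigvee\{\, N_m(\tau)\ \mid\ m\in\mathbb{N},\ \tau\in A^{X_m},\ \tau\circ\pi_m \le \sigma \,\}, $$
and check that this is the element of $\MA X_\infty$ we want. Monotonicity is immediate, since enlarging $\sigma$ only enlarges the index set of the join. It then remains to verify $\MA\pi_m(N) = N_m$ for every $m$, i.e. $N(\tau_0\circ\pi_m) = N_m(\tau_0)$ for all $\tau_0\in A^{X_m}$. The inequality ``$\ge$'' is trivial because $\tau_0$ is itself admissible in the join. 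For ``$\le$'', take any admissible pair $(n,\tau)$ with $\tau\circ\pi_n\le\tau_0\circ\pi_m$; lifting both functions to level $k=\max(m,n)$ along the transition maps leaves the relevant $N$-values unchanged by coherence and turns the inequality into $(\tau\circ p_{n,k})\circ\pi_k\le(\tau_0\circ p_{m,k})\circ\pi_k$; now surjectivity of $\pi_k$ lets me cancel $\pi_k$ to get $\tau\circ p_{n,k}\le\tau_0\circ p_{m,k}$ pointwise on the finite set $X_k$, whence $N_n(\tau)=N_k(\tau\circ p_{n,k})\le N_k(\tau_0\circ p_{m,k})=N_m(\tau_0)$ by monotonicity of $N_k$ and coherence. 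This shows the canonical map $\MA X_\infty\to\varprojlim\MA X_m$ is surjective.

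The point that needs care is the well-definedness of the join in the definition of $N$: it requires $\alg{A}$ to admit the relevant suprema (which is automatic in the finitely-valued setting where the lemma is applied), and one should note that the family involved is upward directed, since the pointwise join of two cylindrical functions below $\sigma$ is again cylindrical, below $\sigma$, and has an $N$-value at least as large on each component, so no coherence conflict can arise. Apart from this bookkeeping, the only genuinely substantive step is the surjectivity of the limit projections $\pi_m$ — this is exactly where finiteness of the $X_m$ is used, and it is indispensable for descending inequalities between cylindrical functions on $X_\infty$ to inequalities between their defining functions on the $X_m$.
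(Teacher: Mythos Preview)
Your proof is correct and follows essentially the same route as the paper: construct a preimage $N$ as a join over cylindrical approximations and verify the projection identities using surjectivity of the limit projections $\pi_m$ (which both you and the paper obtain from finiteness of the $X_m$). The one noteworthy difference is that you index the defining join by $\tau\circ\pi_m \le \sigma$, whereas the paper uses the equality $\tau\circ\pi_m = \sigma$; your choice makes monotonicity of $N$ immediate, while the paper's definition assigns the empty join to every non-cylindrical $\sigma$ and never explicitly checks that the resulting $N$ lies in $\MA(\varprojlim X_m)$. Both versions tacitly require the relevant joins in $\alg{A}$ to exist; you flag this correctly, and it is harmless in the finite-$\alg{A}$ setting in which the lemma is actually applied.
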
  
\begin{proof}
Let $(N_m) \in \varprojlim \MA X_m$ be coherent, \emph{i.e.}, $N_m(\sigma_m) = N_{m+1}(\sigma_{m} \circ p_{m})$. We claim that this sequence is induced by 
$N \in \MA(\varprojlim X_m)$ defined by 
$$
N(\sigma) = \bigvee_{m\in \mathbb{N}} \bigvee \{ N_m(\rho^m) \mid \rho^m \circ \pi_m = \sigma \},
$$
where $\pi_m \colon \varprojlim X_n \to X_m$ are the usual projection maps. That is, we want to verify that 
$$
N_n(\sigma_n) = N(\sigma_n \circ \pi_n) = \bigvee_{m \in \mathbb{N}} \bigvee \{ N_m(\rho^m) \mid \rho^m \circ \pi_m = \sigma_n \circ \pi_n\}.
$$ 
The inequality `$\leq$' is obvious. For the other inequality `$\geq$', we show that $\rho^{m_1} \circ \pi_{m_1} = \rho^{m_2} \circ \pi_{m_2}$ always implies $N_{m_1}(\rho^{m_1}) = N_{m_2}(\rho^{m_2})$. Without loss of generality assume $m_1 \leq m_2$. Then from $\pi_{m_1} =p_{m_1} \circ \dots \circ p_{m_2 - 1} \circ \pi_{m_2}$ we get 
$$
\rho^{m_1} \circ p_{m_1} \circ \dots \circ p_{m_2 - 1} \circ \pi_{m_2} = \rho^{m_2} \circ \pi_{m_2}
$$
and since $\pi_{m_2}$ is surjective (because all $p_i$ are surjective) we get $\rho^{m_1} \circ p_{m_1} \circ \dots \circ  p_{m_2 - 1} = \rho^{m_2}$ as well. Now coherence of the sequence $(N_m)$ yields
$$
N_{m_1}(\rho^{m_1}) = N_{m_1 + 1}(\rho^{m_1} \circ p_{m_1}) = \dots = N_{m_2}(\rho^{m_1} \circ p_{m_1} \circ \dots \circ p_{m_2 - 1}) = N_{m_2}(\rho^{m_2})
$$
as desired, finishing the proof.    
\end{proof}

\begin{exa}[\textsf{Many-valued game logic}]\label{example: MVGL}
Consider the iteration-free part of the logic of Example~\ref{example:CoalgebraicDynamicLogics-GL}. 
Since $\MA$ is weakly preserving (Lemma~\ref{lemma:MA-is-weaklypreserving}) and $\Lambda$ is separating, assuming finite $\alg{A}$ we can apply Theorem~\ref{theorem:FinitelyOneStepComplete-CanModel} with the reduction axioms
\begin{align*}
\diam{a \vee b}\varphi &\leftrightarrow \diam{a}\varphi \vee \diam{b}\varphi \\
\diam{a \wedge b}\varphi &\leftrightarrow \diam{a}\varphi \wedge \diam{b}\varphi \\
\diam{a^\partial}\varphi &\leftrightarrow \neg \diam{a} \neg \varphi \\
\diam{a{;}b}\varphi &\leftrightarrow \diam{a}\diam{b}\varphi \\
\diam{\test(\psi)}\varphi &\leftrightarrow (\psi \odot \varphi) 
\end{align*}
of Example~\ref{example:IndependentlyReducibleOperations}(4), Example~\ref{example:ReductionKleisliMonadMorphism} and Example~\ref{example:ReducibleTests}(3).

For $\LT$ it suffices to ensure monotonicity, for example via the axiom 
$$
\Diamond p \rightarrow \Diamond (p \vee q).
$$ 
Strong one-step completeness over finite sets holds since every rank-1 non-modal homomorphism $H \colon \Lambda(A^X) \to A$ is obviously one-step satisfied by $\alpha \in \MA X$ defined by $\alpha(\sigma) := H(\Diamond \sigma)$ (the above axiom ensures that this $\alpha$ is indeed an $\alg{A}$-monotone neighbourhood assignment). 

For example, this may be used to obtain strong completeness for iteration-free $\lucas_n$-valued game logic, by adding the axioms $\diam{a} p \rightarrow \diam{a} (p \vee q)$ and the above reduction axioms to an axiomatization of $\lucas_n$-valued propositional logic (see, \emph{e.g.}, \cite{Cignoli2000}). 
\hfill $\blacksquare$
\end{exa}

Lastly, we show how Instantial PDL of \cite{vanBenthemBezhanishviliEnqvist2019} fits into the coalgebraic picture. This uses the additional flexibility of our framework as a generalisation of \cite{HansenKupkeLeal2014}, allowing for coalgebraic signature functors which are not monads and countable collections of (not necessarily unary) predicate liftings. 

\begin{exa}[\textsf{Instantial PDL}]\label{example: INSTPDL}
Consider the iteration-free part of the logic defined in Example~\ref{example:CoalgebraicDynamicLogics-IPDL}, but with the coalgebraic signature functor
$\func{F} = \mathcal{P} \circ \mathcal{P}_{\mathrm{fin}}$ (to ensure separation, the case $\mathcal{P} \circ \mathcal{P}$ is left open here). 
Since $\mathcal{P} \circ \mathcal{P}_{\mathrm{fin}}$ is weakly preserving (this follows from the fact that $\mathcal{P}$ is weakly preserving) and $\Lambda$ is separating, we can apply Theorem~\ref{theorem:FinitelyOneStepComplete-CanModel} with the reduction axioms 
\begin{align*}
\diam{a \Cup b}^{k+1}(\varphi_1,\dots,\varphi_k,\varphi) &\leftrightarrow \bigvee_{K \subseteq \{ 1, \dots, k \} } \diam{a}(\varphi^K_1, \dots, \varphi^K_k, \varphi) \wedge \diam{b}(\varphi^{K^\mathrm{c}}_1, \dots, \varphi^{K^\mathrm{c}}_k, \varphi) \\
\diam{a ; b}^{k+1}(\varphi_1,\dots,\varphi_k,\varphi) &\leftrightarrow \diam{a}^{k+1} \big(\diam{b}^2(\varphi_1, \varphi), \dots, \diam{b}^2(\varphi_k, \varphi), \diam{b}^1\varphi \big) \\
\diam{\test(\psi)}^{k+1}(\varphi_1,\dots,\varphi_k,\varphi) &\leftrightarrow (\psi \wedge \varphi \wedge \varphi_1 \wedge \dots \wedge \varphi_k)
\end{align*}
of \cite{vanBenthemBezhanishviliEnqvist2019}, also see Example~\ref{subsection:IndependentlyReducibleExamples}(5),
Example~\ref{example:ReductionCompositionInstantialNBH} and Example~\ref{example:ReducibleTests}(4), respectively. Furthermore, we may also expand $\CoOp$ by the composition $\star$ and the instantial counter-domain $\sim$, where the corresponding reduction axioms are 
\begin{align*}
\diam{a \star b}^{k+1}(\varphi_1,\dots,\varphi_k,\varphi) &\leftrightarrow \diam{a}^2 \big( \diam{b}(\varphi_1, \dots, \varphi_k, \varphi), \top \big) \\
\diam{{\sim}a}^{k+1}(\varphi_1,\dots,\varphi_k,\varphi) &\leftrightarrow \neg \diam{a}^1 \top \wedge \varphi \wedge \varphi_1 \wedge \dots \wedge \varphi_k
\end{align*}
as in Example~\ref{example:ReductionCompositionInstantialNBH} and Example~\ref{example:IndependentlyReducibleOperations}(7).
   
For the underlying coalgebraic logic $\LT$ the clear candidate is the (notably rank-1) axiomatization of \cite[Section 4]{vanBenthemBezhanishviliEnqvistYu2017}. 
\hfill $\blacksquare$   
\end{exa} 

\section{Concluding remarks and future research}\label{section:Conclusion}
We presented a general coalgebraic framework which can account for many-valued dynamic logics where truth-degrees are taken from an arbitrary $\FLew$-algebra $\alg{A}$. This framework generalises \cite{HansenKupkeLeal2014,HansenKupke2015} already in the 2-valued Boolean case by providing a more general treatment of coalgebra operations and tests, which does not require the coalgebra signature functor to be a monad, and accommodates countably many predicate liftings $\lambda$ of arbitrary finite arities.
We defined a notion of reducibility which generalises the reduction axioms of PDL and game logic, and showed that reducible operations and tests are safe for bisimulation/behavioural equivalence. Moreover, for finite $\alg{A}$, we proved a uniform strong completeness theorem  for many-valued coalgebraic dynamic logics of reducible operations and tests (note that iteration is not reducible).
Our framework encompasses dynamic logics such as many-valued PDL \cite{Teheux2014,Sedlar2016} and two-valued instantial PDL \cite{vanBenthemBezhanishviliEnqvist2019} that were not covered by \cite{HansenKupkeLeal2014,HansenKupke2015},
as well as two examples that, to our knowledge, are new:
two-valued PDL with labelled accessibility relations and many-valued game logic, see Examples~\ref{example:CoalgebraicDynamicLogics-PDL-Labelled2} \&~\ref{example:CoalgebraicDynamicLogics-GL}. For these logics, our results (re)establish strong completeness for their iteration-free fragments.     

In this article, we opted to work with $\FLew$-algebras of truth-degrees since this is the most common framework for many-valued modal logic (\emph{e.g.}, \cite{BouEstevaGodo2011}), simplifies the presentation and is required for the strong completeness proof via non-modal homomorphisms in Section~\ref{section:Completeness}. Nevertheless, let us mention that the framework and most results of this paper (in particular, of Sections~\ref{section:CoalgebraicDynmicLogic}--\ref{section:ReducibleOperations}) can be adapted to work in more generality, that is, with \emph{arbitrary} algebras $\alg{A}$ of truth-degrees. In particular, one can also cover prior instances of many-valued propositional dynamic logic not based on $\FLew$-algebras, such as \cite{Sedlar2020} which also includes $\mathsf{FL}$-algebras which are not necessarily commutative/integral and \cite{Sedlar2016} which is based on four-valued Belnap-Dunn logic.

An obvious question for future consideration is how to prove (weak) completeness with \emph{non-reducible} operations, in particular with \emph{iteration}. Since the current paper can be seen as generalising \cite{HansenKupkeLeal2014}, one would reasonably attempt to generalise its follow-up \cite{HansenKupke2015}. However, this could be challenging in the many-valued case, for example \cite{Sedlar2020} needs to replace $(\cdot)^\ast$ by $(\cdot)^+$ and omit tests for technical reasons. The coalgebraic analysis may shed light on the nature of these issues, helping to resolve them in the future. 

Other interesting questions arise regarding \emph{safety}. For PDL, \cite{vanBenthem1998} shows that all \emph{safe first-order definable} $\mathcal{P}$-operations are combinations of $\cup$, $;$ , $\sim$ and propositional tests. A similar result for game logic is shown in \cite{Pauly2000}. One could try to characterise safe coalgebra operations that are first-order definable via \emph{coalgebraic correspondence theory} \cite{SchroederPattinson2010}. In work on behavioural/logical distances \cite{Beohar2024}, algebras of truth-values are often assumed to be commutative integral quantales (we note that complete $\FLew$-algebras can be seen as a subclass of these). In this context, safety could perhaps be studied as non-expansion of behavioural/logical distance.

While we only considered $\Set$-functors here, it would be interesting to explore the topics of this article over various other \emph{base categories}, such as the category of \emph{measurable spaces} to accommodate \emph{probabilistic dynamic logics} \cite{Kozen1985,Doberkat2015}. This category could also be \emph{quantale-enriched} (relating this to the previous paragraph) or \emph{poset-enriched} (as in \emph{positive} coalgebraic logic \cite{DahlqvistKurz2017}). Other interesting `candidates' for such base categories (and monads defined on them) can for example be found in \cite{Jacobs2017}. 
Generalisations of our work beyond $\Set$ could be investigated via the dual adjunctions approach as in 
the recent \cite{Schoen2024} which provides a general framework for studying alternation-free coalgebraic fixpoint logic, and includes an instantiation to positive, test-free PDL. Their main result is a general adequacy theorem; completeness is mentioned as future work.  

Lastly we mention that completeness (as in Section~\ref{section:Completeness})  
for coalgebraic logics valued in an \emph{infinite} $\alg{A}$ currently seems out of reach. One likely reason for this is the (current) lack of research in \emph{many-valued coalgebraic logic} (some of the few examples are \cite{PattinsonSchroeder2011, BilkovaDostal2013, BilkovaDostal2016, KurzPoiger2023, KurzPoigerTeheux2024,LinLiau2023}). Therefore, we also advocate for further future research in this area.       

\bibliographystyle{alphaurl}
\bibliography{References}

\end{document}